\newlength{\leftcolumnwidth}
\newcounter{globalrem}
\newenvironment{jrssremark}[1]{
    \refstepcounter{globalrem}
  \par\medskip\noindent
  \begin{minipage}[t]{\leftcolumnwidth}
      \raggedleft \textbf{#1}
  \end{minipage}
  \hspace{1em}
  \begin{minipage}[t]{\linewidth - \leftcolumnwidth - 1em}
      \setlength{\parindent}{1em}
      \ignorespaces
}{
  \end{minipage}\par\medskip
}
\newcounter{globaldef}
\newenvironment{jrssdefinition}[1]{
    \refstepcounter{globaldef}
  \par\medskip\noindent
   \begin{minipage}[t]{\leftcolumnwidth}
      \raggedleft \textbf{#1}
  \end{minipage}
  \hspace{1em}
  \begin{minipage}[t]{\linewidth - \leftcolumnwidth - 1em}
      \setlength{\parindent}{1em}
      \ignorespaces
}{
  \end{minipage}\par\medskip
}
\newcounter{globalprop}
\newenvironment{jrssproposition}[1]{
    \refstepcounter{globalprop}
  \par\medskip\noindent
  \begin{minipage}[t]{\leftcolumnwidth}
     \raggedleft \textbf{#1}
  \end{minipage}
  \hspace{1em}
  \begin{minipage}[t]{\linewidth - \leftcolumnwidth - 1em}
      \setlength{\parindent}{1em}
      \ignorespaces
}{
  \end{minipage}\par\medskip
}
\newcounter{globalem}
\newenvironment{jrsslemma}[1]{
    \refstepcounter{globalem}
  \par\medskip\noindent
  \begin{minipage}[t]{\leftcolumnwidth}
     \raggedleft \textbf{#1}
  \end{minipage}
  \hspace{1em}
  \begin{minipage}[t]{\linewidth - \leftcolumnwidth - 1em}
      \setlength{\parindent}{1em}
      \ignorespaces
}{
  \end{minipage}\par\medskip
}
\newcounter{globatheo}
\newenvironment{jrsstheorem}[1]{
    \refstepcounter{globatheo}
  \par\medskip\noindent
  \begin{minipage}[t]{\leftcolumnwidth}
     \raggedleft \textbf{#1}
  \end{minipage}
  \hspace{1em}
  \begin{minipage}[t]{\linewidth - \leftcolumnwidth - 1em}
      \setlength{\parindent}{1em}
      \ignorespaces
}{
  \end{minipage}\par\medskip
}
\newcounter{globalcor}
\newenvironment{jrsscorollary}[1]{
    \refstepcounter{globalcor}
  \par\medskip\noindent
  \begin{minipage}[t]{\leftcolumnwidth}
     \raggedleft \textbf{#1}
  \end{minipage}
  \hspace{1em}
  \begin{minipage}[t]{\linewidth - \leftcolumnwidth - 1em}
      \setlength{\parindent}{1em}
      \ignorespaces
}{
  \end{minipage}\par\medskip
}
\newcounter{globalas}
\newenvironment{jrssassumption}[1]{
    \refstepcounter{globalas}
  \par\medskip\noindent
  \begin{minipage}[t]{\leftcolumnwidth}
     \raggedleft \textbf{#1}
  \end{minipage}
  \hspace{1em}
  \begin{minipage}[t]{\linewidth - \leftcolumnwidth - 1em}
      \setlength{\parindent}{1em}
      \ignorespaces
}{
  \end{minipage}\par\medskip
}
\newcommand{\bY}{ {\bf Y} }
\newcommand{\by}{ {\bf y} }
\newcommand{\bX}{ {\bf X} }
\newcommand{\bx}{ {\bf x} }
\newcommand{\bJ}{ {\bf J} }
\newcommand{\bh}{ {\bf h} }
\newcommand{\bOmega}{ {\boldsymbol \Omega} }
\newcommand{\E}{\mathbb{E}}
\newcommand{\btheta}{ {\boldsymbol \theta} }
\newcommand{\bbeta}{ {\boldsymbol \beta} }
\newcommand{\bbbeta}{ {\boldsymbol \eta} }
\newcommand{\balpha}{ {\boldsymbol \alpha} }
\newcommand{\bgamma}{ {\boldsymbol \gamma} }
\newcommand{\bSigma}{ {\boldsymbol \Sigma} }
\begin{document}
\journaltitle{--}
\copyrightyear{--}
\pubyear{--}
\appnotes{$_.$}

\firstpage{1}
\setcounter{secnumdepth}{2}

\fontsize{9.9}{10.55}\selectfont

\title[Skew-symmetric approximations of posterior distributions]{Skew-symmetric approximations of posterior distributions}

\author[1]{Francesco Pozza}
\author[1,2]{Daniele Durante}
\author[1,2]{Botond Szabo}

\authormark{Pozza, Durante and Szabo}

\address[1]{\orgdiv{Bocconi Institute for Data Science and Analytics},  \orgname{Bocconi University},  \country{Via Roentgen 1, Milan, Italy}}
\address[2]{\orgdiv{Department of Decision Sciences},  \orgname{Bocconi University},  \country{Via Roentgen 1, Milan, Italy}}

\corresp[]{Address for correspondence: Francesco Pozza, Bocconi Institute for Data Science and Analytics, Bocconi University, Via Roentgen 1, Milan, Italy. Email: \href{email: francesco.pozza2@unibocconi.it}{francesco.pozza2@unibocconi.it}}

\abstract{
Popular deterministic approximations of posterior distributions from, e.g.\ the  Laplace method, variational Bayes and expectation-propagation, generally rely on symmetric approximating families, often taken to be Gaussian. This choice facilitates optimization and inference, but typically affects the quality of the overall approximation. In fact, even in basic parametric models, the posterior distribution often displays asymmetries that yield bias and a reduced accuracy when considering symmetric approximations. Recent research has moved towards more flexible approximating families which incorporate skewness. However, current solutions are often model specific, lack a general supporting theory, increase the computational complexity of the optimization problem, and do not provide a broadly applicable solution to incorporate skewness in any symmetric approximation. This article addresses such a gap by introducing a general and provably optimal strategy to perturb any off-the-shelf symmetric approximation of a generic posterior distribution. This novel perturbation scheme is derived without additional optimization steps, and yields a similarly tractable approximation within the class of skew-symmetric densities that provably enhances the finite sample accuracy of the original symmetric counterpart. Furthermore, under suitable assumptions, it improves the convergence rate to the exact posterior by at least a $\sqrt{n}$ factor, in asymptotic regimes. These advancements are illustrated in numerical studies focusing on skewed perturbations of state-of-the-art Gaussian approximations. 
}

\keywords{expectation-propagation, Laplace approximation, skew-symmetric distribution, total variation distance, variational Bayes, $\alpha$-divergence}

\maketitle

\section{\large 1. Introduction} \label{sec_1}
Deterministic approximations of intractable posterior distributions provide a popular alternative to general Markov chain Monte Carlo methods within Bayesian inference  \citep[e.g.][]{blei2017variational, minka2013expectation, inla_paper,walker1969asymptotic}. Noteworthy examples in such a class are the approximations obtained under the classical Laplace method and its generalizations  \citep[][]{rossell2021approximate, inla_paper, tierney1986accurate,walker1969asymptotic}, variational Bayes (VB) \citep[][]{blei2017variational,opper2009variational}, including its widely applicable stochastic black-box extensions \citep[][]{kucukelbir2017automatic,ranganath2014black}, and expectation-propagation  (EP) \citep[][]{minka2013expectation,vehtari2020expectation}; see also \citet[][Chapters 4.4 and 10]{bishop2006pattern} and \citet{Chopin_2017} for a general overview. Albeit derived under different arguments and optimization strategies, all these solutions share a common trade-off between the attempt to facilitate optimization and posterior inference via a sufficiently tractable approximation, and the need to avoid an overly simplified characterization of the intractable posterior distribution, which may undermine inference accuracy. The overarching focus to date has been on addressing the first objective at the expense of the second. This has been accomplished by searching for tractable approximating densities, often within the Gaussian family. Under  the classical Laplace method this choice is implicitly encoded in the second-order Taylor expansion of the log-posterior density \citep[e.g.][Chapter 4.4]{bishop2006pattern}, while for EP \citep[e.g.][]{minka2013expectation} and VB \citep[e.g.][]{challis2013gaussian,opper2009variational} it is enforced as a constraint within a suitable optimization problem seeking the closest approximating density to the target posterior under a selected divergence. Even when replacing such a constraint in VB with mean-field factorization assumptions \citep[e.g.][]{blei2017variational}, Gaussian approximations can still arise either as a solution of the optimization problem under routinely implemented Bayesian models, such as logit  \citep{durante2019conditionally} and probit   \citep{consonni2007mean} regression, or as a  building block to ease implementation in more general settings \citep[e.g.][]{carbonetto2012scalable,kucukelbir2017automatic,ray2022variational, rezende2015variational, wang2013variational}. 

The above approximations yield state-of-the-art solutions which are implemented in routinely employed \texttt{R} libraries (e.g.\ \texttt{rstan}). In addition, these solutions also find theoretical justification in the asymptotic normality of the posterior distribution ensured by Bernstein--von Mises type results  \citep[e.g.][]{van2000asymptotic} that have been extensively studied in the context of classical Laplace  \citep[e.g.][]{kasprzak2022good, panov2015finite,spokoiny2023inexact,spokoiny2025accuracy,walker1969asymptotic}, VB  \citep[e.g.][]{katsevich2023approximation,wang2019frequentist}  and EP  \citep[e.g.][]{dehaene2018expectation} approximations. While such results yield theoretical support within asymptotic regimes, in practice the accuracy of Gaussian approximations ultimately depends on the actual shape of the posterior distribution in  finite samples, which often displays non-negligible skewness that might affect inference accuracy when relying on Gaussian approximations. In fact, although the design of tractable and computationally efficient skewed approximations of posterior distributions has attracted less interest to date,  the available contributions along these lines have proven effective in refining the accuracy of the Gaussian counterparts \citep[see, e.g.][]{anceschi2022bayesian,challis2012affine,durante2023skewed,fasanoscalable,fasano2022class,katsevich2023tight,ormerod2011skew,inla_paper,ruli,salomone2023structured,smith2020high,tan2023variational,zhou2023skew}.  While such gains should stimulate an increasing adoption of skewed approximations, there is still a gap between the promising methodological avenues opened by this perspective and its routine use in practice. In fact, when compared to classical approximations based on Gaussians, or on other symmetric densities, there are at least three major barriers, which still undermine the routine adoption of these skewed extensions. 

\begin{itemize}
\item  {\bf (a) Methods:} The current solutions are developed under specific models and, even when targeting more general posterior distributions, the focus remains on improving a single  strategy, often VB. Moreover, the overarching interest is on including skewness in Gaussian approximating  densities via extensions of skew-normal families \citep[e.g.][]{arellano2006unification,gonzalez2004additive}, rather than  in more general symmetric ones.
\item  {\bf (b) Computation:}  The  schemes to derive the above  approximations require further optimization steps that often yield substantial increments in computational costs. In some cases these schemes necessitate even the calculation of the exact moments for the intractable posterior, a requirement  in contrast with the motivations for considering deterministic approximations.
\item  {\bf (c) Theory:} Unlike for the classical versions of the  Laplace method, VB  and EP, most skewed extensions lack general supporting theory on the accuracy in approximating the whole posterior distribution and its functionals, both in finite samples and also in asymptotic regimes.
\end{itemize}

Two exceptions to point (c) can be found in the  recent contributions by \citet{fasanoscalable} and \citet{durante2023skewed}. The former demonstrates that, in Bayesian probit regression, a partially-factorized generalization of  mean-field VB leads to a unified skew-normal (SUN)  \citep{arellano2006unification} approximation, which  provably matches the exact posterior for fixed sample size $n$ and dimension $d \to \infty$. The latter derives, instead, a refined version of the Bernstein--von Mises theorem \citep[see, e.g.][]{van2000asymptotic} which replaces the classical Gaussian limiting law with a valid skewed perturbation (arising from a third-order version of the Laplace method) to obtain an improvement in the convergence rates to the target posterior by a $\sqrt{n}$ factor, under the total variation (TV) distance; see also \citet{katsevich2023tight} for alternative higher-order skewed expansions, which, however, do not guarantee valid  approximating densities. Albeit being derived in specific models  \citep{fasanoscalable}, or with a focus to a single approximation strategy  \citep{durante2023skewed}, these results suggest that the inclusion of skewness in popular symmetric approximations has the concrete potential to substantially improve inference accuracy. However, it is still not clear if a broadly applicable, computationally tractable and theoretically supported solution addressing points (a)--(c) above actually exists and can be effectively derived.

In this article we provide a positive and innovative answer to the above question by deriving a broadly applicable  strategy to perturb, at no additional optimization costs, any off-the-shelf symmetric approximation of a generic posterior distribution. Crucially, such a perturbation yields a provably more accurate skewed  approximation of the target posterior, which belongs to the tractable class of skew-symmetric distributions reviewed in Section~\ref{sec_2.a} \citep[][]{azzalini2003distributions,genton2005generalized, ma2004flexible,wang2004skew}. Our proposed solution follows from two fundamental results, which we derive in Section~\ref{sec_32}. The first (see Proposition~\ref{prop1}), shows that the density of any generic posterior distribution admits a skew-symmetric representation decomposing this density as the product of (i) a suitably symmetrized version of the target posterior and (ii)  a tractable skewing factor,  which is available in closed form and does not depend on additional unknown parameters. This novel characterization, whose scope extends beyond our contribution, relates to an important existence  result of  skew-symmetric densities \citep[][]{wang2004skew} which has been never explored within the context of Bayesian inference and approximations, despite its unique potential. In fact, our first result in Proposition~\ref{prop1} opens the avenues to prove a second important property (see Lemma~\ref{lemma_1}), which ensures that any symmetric approximation of a posterior distribution, including the popular ones reviewed in Section~\ref{sec_2}, is more accurate in approximating the symmetrized version of such a posterior in its skew-symmetric representation, rather than the original posterior itself. This holds under the TV distance, Kullback--Leibler (KL) divergence  \citep[][]{kullback1951information}, reverse-KL and generic $\alpha$-divergences.

The aforementioned results yield, therefore, a natural strategy to derive skewed approximations of a generic posterior distribution via direct perturbation of any symmetric approximation for such a posterior. In particular, these can be obtained, at no additional optimization costs, by simply replacing the symmetrized version of the posterior density in its skew-symmetric representation with the available symmetric approximation to be improved. As proved in Section~\ref{sec_33}, this yields a more accurate asymmetric alternative that matches exactly the closed-form skewing factor of the target posterior, while  preserving the tractability of its unperturbed symmetric counterpart. This tractability follows from the fact that the density of the resulting skewed approximation falls in the skew-symmetric family \citep[see, e.g.][]{azzalini2003distributions,ma2004flexible,wang2004skew}. Such a family admits a closed-form normalizing constant along with straightforward i.i.d.\ sampling schemes, which enable inference via Monte Carlo evaluation of any functional of the approximate posterior. These schemes only require simulation from the original symmetric approximation and  evaluation of the skewing factor. As shown in Section~\ref{sec_32}, this factor crucially depends only on the ratios of posterior densities, thereby allowing cancellation of the intractable normalizing constant, and hence, direct computation without additional optimizations. 

Our theoretical results within Section~\ref{sec_33} show that the strategy we propose is not only tractable, but also yields skew-symmetric approximations that are provably more accurate than the original symmetric counterpart, under any sample size $n$ and several popular divergences (i.e.\ TV distance and any $\alpha$-divergence, including its limiting KL and reverse-KL forms \citep[see, e.g.][]{cichocki2010families,margossian2024ordering}). Even more, the derived skewing factor  is shown to be the optimal among those yielding approximations within the  skew-symmetric class. This allows to formally interpret the proposed solution as the optimum of a minimization problem that seeks the closest approximation to the target posterior in the entire class of skew-symmetric densities generated by perturbations of the known symmetric approximation to be improved. The accuracy improvements formalized in Section~\ref{sec_33} are quantified asymptotically in Section~\ref{sec_34} with a focus on  perturbations of specific symmetric approximations which can be shown to improve by at least a  $\sqrt{n}$ factor the convergence rate to the target posterior relative to the original unperturbed counterpart, under all the divergences considered within Section~\ref{sec_33}. As illustrated in the simulation studies in Section~\ref{sec_4}, the theoretical results presented in Sections~\ref{sec_33} and \ref{sec_34} find empirical evidence also in practice, including in challenging high-dimensional contexts beyond the  regimes considered by the asymptotic theory within Section~\ref{sec_34}. In Section~\ref{sec_4_app}, these accuracy gains are further confirmed in real-data applications under realistic Bayesian hierarchical formulations showcasing both the relative and  absolute improvements of the proposed skew-symmetric perturbation of several state-of-the-art Gaussian approximations  from the Laplace method, black-box VB, and EP. In these real-data applications we also illustrate the remarkable gains in the effective sample size (ESS) when replacing such  Gaussian approximations with the corresponding skew-symmetric counterparts  as proposals in importance sampling \citep[e.g.][Chapter 8]{chopin2020introduction}. These improvements clarify the broad impact and applicability of the ideas presented in this article. Section~\ref{sec_5}  concludes by further stressing this point through the discussion of promising research directions motivated by the general scope of the skew-symmetric representation of posterior densities behind our contribution.  Proofs can be found in the Supplementary Material. Code is available at \url{https://github.com/Francesco16p/SkewAppr_Post}.

\section{\large 2. Skew-symmetric approximations of posterior distributions} \label{sec_3}
Let $L(\btheta; \by_{1:n}) = \prod_{i = 1}^{n} p(\by_{i}\mid \btheta )$ denote the likelihood that is induced by a generic statistical model characterizing the probabilistic generative mechanism of the data $\by_{1:n}=(\by_1, \ldots, \by_n)$ through a family of distributions indexed by a parameter \smash{$\btheta \in \Theta  \subseteq \mathbb{R}^{d}$}. To perform inference on $\btheta$, Bayesian statistics defines a prior $\pi(\btheta)$ for $\btheta$, which, combined with $L(\btheta; \by_{1:n})$, yields the corresponding posterior density $\pi_{n}(\btheta):=\pi(\btheta \mid \by_{1:n})$ via Bayes rule, i.e. $\pi_{n}(\btheta)=\pi(\btheta)L(\btheta; \by_{1:n})/c(\by_{1:n})$, with $c(\by_{1:n})=\int_ \Theta\pi(\btheta)L(\btheta; \by_{1:n}) \mbox{d}\btheta$. Inference under such a posterior is then performed through evaluation of functionals of interest. The feasibility of this task inherently depends on the tractability of $\pi_{n}(\btheta)$. In fact, $\pi_{n}(\btheta)$ is even not available analytically in most Bayesian models, due, in general, to the lack of a closed-form expression for $c(\by_{1:n})$. To  address such an issue,  standard practice relies either on Markov chain Monte Carlo schemes or on tractable deterministic approximations of $\pi_{n}(\btheta)$. Here we consider the latter class of solutions and focus  on improving those state-of-the-art symmetric approximations that routinely arise in practice. 

More specifically, let $\bar{q}_{n,{\btheta}^*}(\btheta)$ be an already-derived symmetric approximation of the posterior density $\pi_{n}(\btheta)$, with ${\btheta}^*$ denoting the known symmetry point, i.e. $\bar{q}_{n,{\btheta}^*}(\btheta)=\bar{q}_{n,{\btheta}^*}(2{\btheta}^*-\btheta)$ for all $\btheta \in \Theta $. Our main objective is to obtain a more accurate approximation {$q_{n,{\btheta}^*}(\btheta)$} from the perturbation of {$\bar{q}_{n,{\btheta}^*}(\btheta)$} via a skewing factor $w_{n,{\btheta}^*}(\btheta)$, which does not require optimization of additional parameters, while  preserving the inference tractability of the original symmetric approximation. Although the novel strategy we derive within Section~\ref{sec_32} is guaranteed to improve the accuracy of any symmetric approximating density  {$\bar{q}_{n,{\btheta}^*}(\btheta)$}, in practice it is natural to focus on perturbing the outputs of routinely implemented Gaussian approximations from, e.g.\ Laplace, VB and EP; see Section~\ref{sec_2} for an overview of these methods and refer to \citet[][Chapters 4.4 and 10]{bishop2006pattern}, \citet{challis2013gaussian}, \citet{kucukelbir2017automatic}, \citet{opper2009variational} and \citet{vehtari2020expectation} for a detailed treatment. As clarified in these contributions, such symmetric approximations provide popular solutions which are effectively coded in standard platforms (e.g. \textsc{stan}) and softwares (e.g. \texttt{R}, \texttt{python} and \texttt{julia}). Since the proposed skew-symmetric approximation arises as the direct perturbation of an already-available symmetric one, without further optimization, it is useful to place particular emphasis on those Gaussian approximations provided by state-of-the-art softwares so as to stimulate direct adoption of the proposed perturbation. Note also that standard symmetric approximations often depend on additional parameters beyond the symmetry point ${\btheta}^*$. For instance, Gaussian approximations also require estimation of the covariance matrix. As clarified in  Section~\ref{sec_32}, the skewing factor we derive only depends on ${\btheta}^*$ and is applied to an already-available symmetric approximation whose parameters have been previously estimated under standard algorithms. Hence, to ease notation, we index $\bar{q}_{n,{\btheta}^*}(\btheta)$ only by ${\btheta}^*$, and leave the remaining, already-estimated, parameters implicit. Theoretical guarantees for the finite sample and asymptotic accuracy of the proposed approximation are derived in Section~\ref{sec_theo}.

\subsection{\large 2.1. An overview   of symmetric approximations of posterior distributions} \label{sec_2}
When seeking a tractable approximation of a generic posterior distribution for the parameter $\btheta$, a simple option is to consider the classical Gaussian approximation that  arises from the Laplace method \citep[see, e.g.][Chapter 4.4]{bishop2006pattern}. Such a solution follows directly from a second-order Taylor expansion of the un-normalized log-posterior $\log[\pi(\btheta)L(\btheta; \by_{1:n})]$ at the maximum a posteriori (MAP). This yields a Gaussian approximating density, which is centered at the MAP and has covariance matrix given by the inverse of the negative Hessian for $\log[\pi(\btheta)L(\btheta; \by_{1:n})]$, again evaluated at the MAP  \citep[e.g.][Chapter 13]{gelman1995bayesian}. The simplicity of this solution  has stimulated broad applicability and several extensions, including, among others, integrated nested Laplace approximation (INLA)   \citep{inla_paper} and approximate Laplace approximation (ALA)   \citep{rossell2021approximate}. INLA combines efficient numerical integration and analytical approximations to derive accurate characterizations of posterior marginals for parameters of interest, under latent Gaussian models. The scheme is inspired by \citet{tierney1986accurate} Laplace approximation of the marginal posterior density for a subset of parameters of interest, which expresses this marginal as proportional to the ratio between the joint posterior and the full conditional density of the remaining parameters, and then applies the Gaussian approximation from the Laplace method to the latter density. This strategy can yield asymmetric approximations of marginal posterior densities, but still relies on nested Gaussian approximations. As such, the skew-symmetric solution derived in Section~\ref{sec_32} has the potential to further improve also the accuracy of INLA. Our proposal can be also applied directly to improve ALA. This scheme does not differ from the classical Laplace method in the shape of the approximating density. Rather, it  provides a scalable strategy which avoids MAP estimation.

While Laplace type schemes yield simple approximation strategies, the resulting solution may fail to incorporate global characteristics of the posterior beyond the local behavior at the MAP \citep[][Chapter 4.4]{bishop2006pattern}. This issue has motivated interest in alternative schemes, with a main focus on VB \citep[e.g.][]{blei2017variational} and EP \citep[e.g.][]{vehtari2020expectation}. VB specifies a tractable family $\mathcal{F}$ of approximating densities and then identifies, within this family, the one that is closest to the intractable posterior  in KL divergence  \citep[][]{kullback1951information}. The common practice in specifying  $\mathcal{F}$ relies either on parametric, often Gaussian, families \citep[see, e.g.][]{challis2013gaussian,kucukelbir2017automatic,opper2009variational,tan2018gaussian}, or on mean-field assumptions \citep[e.g.][]{blei2017variational} enforcing the joint approximating density for $\btheta$ to factorize as the product of marginals for suitably selected non-overlapping subsets of parameters. In the first case the final output of the optimization problem is, by definition, a symmetric density which can be readily improved under our proposed skew-symmetric approximation. Conversely, the second can yield skewed solutions matching the shape of the actual, not necessarily symmetric, full-conditional density for the subset of parameters in $\btheta$ comprising the mean-field factors \citep[e.g.][]{blei2017variational}. Nonetheless, several routine use Bayesian models, such as logit  \citep{durante2019conditionally} and probit   \citep{consonni2007mean} regression, admit Gaussian full-conditionals for the coefficients. Therefore, the proposed perturbation might be useful even within mean-field VB to improve the accuracy of the symmetric density factors, and hence, of the overall approximation for the entire posterior. Finally, it is important to emphasize that these Gaussian approximations appear either as the final output or as a building block also within Laplace variational inference, delta-method variational inference  \citep{wang2013variational} and stochastic black-box extensions of VB \citep{ranganath2014black}, including automatic differentiation variational inference  (ADVI)  \citep{kucukelbir2017automatic} and solutions relying on normalizing flows \citep[e.g.][]{papamakarios2021normalizing,rezende2015variational}. Such methods are crucial to facilitate the derivation of generally applicable  and more flexible VB approximations that can be automatically implemented even in settings lacking closed-form expressions for the quantities involved in the optimization problem. Therefore, the proposed skew-symmetric perturbation can be either applied directly to improve the approximation accuracy of the Gaussian output provided by these methods (see Sections~\ref{sec_4}--\ref{sec_4_app}), or may be even  included within the stochastic optimization scheme to enlarge the family (and hence the quality) of the approximating densities, without arguably affecting the tractability of the  routine. This broad applicability of the simple skew-symmetric perturbation we propose is a unique feature, which generally lacks in more sophisticated solutions yielding accurate, possibly asymmetric, approximations. Successful ones are those provided by the VB strategies leveraging normalizing flows \citep[e.g.][]{papamakarios2021normalizing,rezende2015variational}. Interestingly, as discussed in Section~\ref{sec_5}, such strategies often use suitable Gaussians as initial densities, and hence, our proposal could be possibly leveraged to provide improved (i.e. closer to the target posterior), yet  tractable,  skew-symmetric perturbations of these initial Gaussian densities.
 
Although VB is arguably the most widely studied and implemented deterministic approximation strategy, commonly used variational approximations often suffer from an underestimation of posterior uncertainty \citep[see, e.g.][]{blei2017variational,giordano2018covariances}. This is implicit in the expression of the KL minimized under VB that penalizes densities placing mass to areas of low posterior probability, but it does not enforce a similarly strong penalty in the opposite direction. As a consequence,  minimizing such a  KL  yields more concentrated approximations avoiding those regions where the target posterior does not place substantial mass. While improved estimates of variances and covariances have been proposed in the context of VB  \citep{giordano2018covariances}, another natural solution is to consider EP  \citep[e.g.][]{vehtari2020expectation}, which addresses such an issue by minimizing a reverse form of the KL considered under VB. This implies penalizations in the opposite direction than those of  VB, and hence, a tendency to favor more global {\em mass-covering} approximations that match more closely the variability encoded in the target posterior. To obtain these approximations, EP postulates that the target posterior density itself can be expressed as a product of factors, and then iteratively approximates each of these factors with an element of a tractable parametric family, almost always Gaussian. This results in a computational scheme updating each factor at-a-time via moment matching between the global approximating density and a hybrid one, more tractable  than the target posterior, where the other factors are kept fixed at the most recent approximation \citep[e.g.][]{vehtari2020expectation}. Being often Gaussian, these EP  approximating densities can be readily perturbed under our proposed strategy to further improve the quality of the approximation. In fact, although several empirical studies have highlighted the remarkable accuracy of Gaussian EP \citep[][]{anceschi2022bayesian, Chopin_2017,vehtari2020expectation} relative to Laplace  and VB, there is still a lack of tractable solutions which are capable of including skewness within these Gaussian approximations to further improve the quality of EP. As illustrated in the empirical studies in Sections~\ref{sec_4} and  \ref{sec_4_app}, accounting for these skewed behaviors yields further  improvements over such an already-successful strategy.

It is important to emphasize that our strategy applies directly to any symmetric approximation, not necessarily Gaussian. This can be useful to perturb further extensions of the aforementioned methods aimed at capturing higher-order properties (e.g.\ tails). For example, generalizations from Gaussian approximating densities to Student-$t$ ones have been explored in the context of Laplace,   VB and EP \citep[e.g.][]{daudel2023monotonic,ding2011t,futami2017expectation,gelman1995bayesian,liang2022fat}, but there are no simple strategies, to date, for including skewness within such extensions. Our proposed solution applies also to these symmetric approximations. As proved in Section~\ref{sec_34}, the perturbation of these increasingly accurate symmetric densities opens also the avenues to further improve the asymptotic convergence rates to the target posterior.

\subsection{\large 2.2. An  overview   of skew-symmetric distributions} \label{sec_2.a}
As anticipated in Section~\ref{sec_1}, the novel strategy we propose to perturb an already-available symmetric approximation of a generic target posterior leads to an improved asymmetric counterpart, which crucially belongs to the skew-symmetric family \citep[e.g.][]{azzalini2003distributions,genton2005generalized,ma2004flexible,wang2004skew}. This broad family has been originally proposed to increase the flexibility of symmetric densities via the inclusion of a suitably designed skewing factor that preserves the tractability of the unperturbed counterparts. Definition~\ref{def_1_or} \citep[e.g.][]{azzalini2013skew,wang2004skew},
 clarifies that this objective can be accomplished through a direct perturbation scheme that meets simple conditions.

\begin{jrssdefinition}{Definition 1}
\noindent (Skew-symmetric densities). Let  {$\bar{q}_{{\btheta}^*}({\btheta})$} be a density for $\btheta \in \Theta \subseteq \mathbb{R}^{d}$, which is symmetric at ${\btheta}^*$, i.e.\ $\bar{q}_{{\btheta}^*}(\btheta)=\bar{q}_{{\btheta}^*}(2{\btheta}^*-\btheta)$ for all $\btheta \in \Theta $, and denote with $w_{{\btheta}^*}(\btheta)$ a skewing factor such that $w_{{\btheta}^*}(\btheta) \in [0,1]$ and $w_{{\btheta}^*}(\btheta)=1-w_{{\btheta}^*}(2 {\btheta}^*-\btheta)$ for any $\btheta \in \Theta $. Then
\begin{equation}
q_{{\btheta}^*}(\btheta)=2\bar{q}_{{\btheta}^*}({\btheta})w_{{\btheta}^*}(\btheta),
 \label{skew:sym:gen}
 \end{equation}
is a skew-symmetric density.
\label{def_1_or}
\end{jrssdefinition}

Definition~\ref{def_1_or} provides a general skewness-inducing mechanism that leverages a suitably designed perturbation factor  $w_{{\btheta}^*}(\btheta)$ to redistribute the density between each point $\btheta$ and its polar opposite $2 {\btheta}^*-\btheta$, with respect to a notion of central symmetry. While general, such a definition crucially encompasses several special cases of routine use in practice. A remarkable example is provided by the sub-class of generalized skew-elliptical distributions which can be obtained by replacing $\bar{q}_{{\btheta}^*}({\btheta})$ in \eqref{skew:sym:gen} with an elliptical density, including, for instance, those of multivariate Gaussians, Student-$t$, and Cauchy, among others \citep[e.g.][]{genton2005generalized}. Perturbing these latter densities through a specific skewing factor $w_{{\btheta}^*}(\btheta)$ set equal to a suitably defined cumulative distribution function induced by the family of densities $\bar{q}_{{\btheta}^*}({\btheta})$ further yields classical skew-normal, skew-$t$ and skew-Cauchy densities, among others \citep[][]{azzalini1999statistical,azzalini2003distributions,azzalini1996multivariate,genton2005generalized}. 

Albeit different, all the above alternatives share a common property, which characterizes the general skew-symmetric family in  Definition~\ref{def_1_or}. In particular, as is clear from \eqref{skew:sym:gen}, the tractability of $q_{{\btheta}^*}(\btheta)$ inherently depends on the one of its symmetric counterpart $\bar{q}_{{\btheta}^*}({\btheta})$ and on the choice of the skewing factor $w_{{\btheta}^*}(\btheta)$. When both quantities are tractable, the density in \eqref{skew:sym:gen} can be computed analytically, and further admits a straightforward i.i.d.\ sampling scheme that facilitates direct Monte Carlo evaluation of any functional of $q_{{\btheta}^*}(\btheta)$. This  follows from the stochastic representation of skew-symmetric distributions in Proposition~\ref{prop1_aug_data} \citep[e.g.][]{wang2004skew}.

\begin{jrssproposition}{Proposition 1}
\noindent (Stochastic representation).
Let  ${\btheta}$ be a random variable having density ${q}_{{\btheta}^*}({\btheta})$ as in \eqref{skew:sym:gen}, and denote with $\mathds{1}[\cdot]$ the indicator function, then
\begin{equation*}
{\btheta}\stackrel{d}{=}\mathds{1}[\textsc{u} \leq w_{{\btheta}^*}(\bar{\btheta})]\bar{\btheta}+(1-\mathds{1}[\textsc{u} \leq w_{{\btheta}^*}(\bar{\btheta})])(2{{\btheta}^*}-\bar{\btheta}), 
\end{equation*}
\noindent where $\bar{\btheta}$ has symmetric density $\bar{q}_{{\btheta}^*}(\bar {\btheta})$, and $\textsc{u}$ is uniformly distributed  in $[0,1]$.
 \label{prop1_aug_data}
 \end{jrssproposition}

As a consequence of Proposition~\ref{prop1_aug_data}, i.i.d.\ sampling from skew-symmetric distributions reduces to drawing values from the symmetric component and then retaining the sampled value or its symmetric with respect to ${\btheta}^*$ depending on whether the skewing factor evaluated at such a sampled value is below or above the realization from a uniform distribution in $[0,1]$. This strategy is crucial to facilitate tractable Monte Carlo inference under the proposed skew-symmetric approximation {${q}_{n,{\btheta}^*}(\btheta)$}  of posterior densities $\pi_n(\btheta)$. As clarified in Section~\ref{sec_32}, this approximation arises from the perturbation of tractable symmetric counterparts $\bar{q}_{n,{\btheta}^*}(\btheta)$ (often Gaussians)  via a skewing factor $w_{n,{\btheta}^*}(\btheta)$  that involves only prior $\pi(\btheta)$ and likelihood $L(\btheta; \by_{1:n})$ evaluation.

\subsection{\large 2.3 Skew-symmetric perturbation of symmetric approximations} \label{sec_32}
Recalling Section~\ref{sec_1}, the class of skewed approximations we propose is motivated by a fundamental, but yet overlooked, skew-symmetric representation of posterior densities stated in Proposition~\ref{prop1}.

\begin{jrssproposition}{Proposition 2}
\noindent (Skew-symmetric representation of posterior densities).
Consider the generic posterior density ${\pi}_{n}(\btheta)=\pi(\btheta)L(\btheta; \by_{1:n})/c(\by_{1:n})$  for the parameter $\btheta \in \Theta $, where $\pi(\btheta)$ denotes the prior, $L(\btheta; \by_{1:n})$ corresponds to the likelihood, while $c(\by_{1:n})$ is the normalizing constant. Moreover, denote with 
\begin{equation}
\bar{\pi}_{n,{\btheta}^*}(\btheta)=\frac{\pi_{n}(\btheta)+\pi_{n}(2{\btheta}^*-\btheta)}{2},
\label{eq1}
\end{equation}
the symmetrized form of this posterior density about a point ${\btheta}^*\in \Theta$, and let 
\begin{equation} \label{eq_w}
w_{n,{\btheta}^*}(\btheta)=	\frac{\pi(\btheta)L(\btheta; \by_{1:n})}{\pi(\btheta)L(\btheta; \by_{1:n}) +\pi(2 {\btheta}^* -\btheta)L(2 {\btheta}^* -\btheta; \by_{1:n})},
\end{equation}
with the convention that $w_{n,{\btheta}^*}(\btheta)=0.5$ if $\pi(\btheta)L(\btheta; \by_{1:n}) = \pi(2 {\btheta}^* -\btheta)L(2 {\btheta}^* -\btheta; \by_{1:n}) = 0$ (see Remark \ref{remark:w:0.5} for details).
 Then, the posterior density ${\pi}_{n}(\btheta)$ can be equivalently re-expressed in skew-symmetric form as
\begin{equation}
{\pi}_{n}(\btheta)=2\bar{\pi}_{n,{\btheta}^*}(\btheta)w_{n,{\btheta}^*}(\btheta),
\label{eq2}
 \end{equation}
for any ${\btheta}^* \in \Theta$ and  $n$, with $w_{n,{\btheta}^*}(\btheta) \in [0,1]$ and $w_{n,{\btheta}^*}(\btheta)=1-w_{n,{\btheta}^*}(2 {\btheta}^*-\btheta)$.
 \label{prop1}
 \end{jrssproposition}

 \begin{jrssremark}{Remark 1}
\noindent While other forms of symmetrization can be considered, the one defined in~\eqref{eq1} is arguably the most natural, direct and simple, in that it redistributes equally to every pair of symmetric points the total posterior density assigned to this pair. Although it has been generally overlooked and, to our knowledge, never explored in the context of Bayesian approximations,  such a symmetrization has been successfully employed in classical frequentist literature to improve standard estimators of empirical distribution functions associated with underlying symmetric densities  \citep[e.g.][]{hinkley1976estimating,lo1985estimation,schuster1975estimating,schuster1987identifying}. Our contribution leverages such a symmetrization with a substantially different focus. Namely that of designing principled and tractable skew-symmetric approximations of generic posterior densities from the direct perturbation of already-derived symmetric ones. 
\label{rema1}
\end{jrssremark}

The proof of Proposition~\ref{prop1}  simply requires to notice that  ${\pi}_{n}(\btheta)$ can be equivalently re-written as $2{\bar{\pi}_{n,{\btheta}^*}(\btheta)}[{\pi}_{n}(\btheta)/(2{\bar{\pi}_{n,{\btheta}^*}(\btheta)})]$. Therefore, replacing in the ratio ${\pi}_{n}(\btheta)/(2\bar{\pi}_{n,{\btheta}^*}(\btheta))$ the quantity ${\bar{\pi}_{n,{\btheta}^*}(\btheta)}$ with its expression in \eqref{eq1}, and ${\pi}_{n}(\btheta)$ with $\pi(\btheta)L(\btheta; \by_{1:n})/c(\by_{1:n})$,   yields the skewness-inducing factor $w_{n,{\btheta}^*}(\btheta)$ in \eqref{eq_w}, after noticing that the intractable normalizing constant $c(\by_{1:n})$ cancels out within the ratio between the target posterior density and its symmetrized form. The $[0,1]$ range for {$w_{n,{\btheta}^*}(\btheta)$} and the property $w_{n,{\btheta}^*}(\btheta)=1-w_{n,{\btheta}^*}(2 {\btheta}^*-\btheta)$ follow from its definition in~\eqref{eq_w}. As clarified later, such properties for {$w_{n,{\btheta}^*}(\btheta)$} are fundamental to ensure that the proposed perturbation for $\bar{q}_{n,{\btheta}^*}(\btheta)$ belongs to the skew-symmetric family \citep[e.g.][]{azzalini2003distributions,ma2004flexible,wang2004skew}.

Proposition~\ref{prop1} relates to a core result in  \citet{wang2004skew}, which establishes the existence and uniqueness of skew-symmetric representations for generic densities.  This parallel ensures that the equivalent expression for ${\pi}_{n}(\btheta)$ in Proposition~\ref{prop1} is the one of a skew-symmetric density. According to \eqref{eq2}, such a density is equal to  the product between a tractable and analytically available skewing factor $w_{n,{\btheta}^*}(\btheta)$ (where the troublesome normalizing constant  $c(\by_{1:n})$ cancels out) and a symmetrized posterior $\bar{\pi}_{n,{\btheta}^*}(\btheta)$ that is often as intractable as the target ${\pi}_{n}(\btheta)$. Such a latter issue prevents  \eqref{eq2} from having a direct impact in facilitating posterior inference and, for this reason, the representation in Proposition~\ref{prop1}  has never appeared within Bayesian statistics. One possibility to address the above issue, would be to replace $\bar{\pi}_{n,{\btheta}^*}(\btheta)$ in  \eqref{eq2} with a tractable symmetric approximation. Lemma~\ref{lemma_1} suggests that a natural candidate is the original, already-derived, symmetric approximation  $\bar{q}_{n,{\btheta}^*}(\btheta)$ of the target posterior density ${\pi}_{n}(\btheta)$, which we aim to improve through a skewed perturbation. In particular,  Lemma~\ref{lemma_1} proves that $\bar{q}_{n,{\btheta}^*}(\btheta)$ approximates $\bar{\pi}_{n,{\btheta}^*}(\btheta)$ more accurately than  ${\pi}_{n}(\btheta)$, under  the TV distance $\mathcal{D}_{\textsc{tv}}[p \mid \mid q]=(1/2) \int|p(\btheta)-q(\btheta)|\mbox{d} \btheta$ and any generic $\alpha$-divergence $\mathcal{D}_{\alpha}[p \mid \mid q]=[1/(\alpha(1-\alpha))]\smash{( 1-\int p(\btheta)^{\alpha}q(\btheta)^{1-\alpha}\mbox{d} \btheta)}$ for $\alpha \in \mathbb{R} \setminus \{0,1\}$ \citep[see][]{cichocki2010families,margossian2024ordering,poczos2011estimation}, including its limiting KL ($\textsc{kl}[q \mid \mid p]=\int q(\btheta)\log[q(\btheta)/p(\btheta)]\mbox{d} \btheta$) and reverse-KL ($\textsc{kl}[p \mid \mid q]=\int p(\btheta)\log[p(\btheta)/q(\btheta)]\mbox{d} \btheta$) forms, which can be obtained by letting $\alpha \to 0$ and $\alpha \to 1$, respectively, in   $\mathcal{D}_{\alpha}$  \citep[e.g.][]{cichocki2010families}. Hence, Lemma~\ref{lemma_1}, and the subsequent theoretical results we derive, hold under the most widely used divergences in the context of deterministic approximations. Recalling Section~\ref{sec_2}, the divergences $\textsc{kl}[q \mid \mid p]$ and $\textsc{kl}[p \mid \mid q]$  enter the optimization problems in VB and EP  \citep[e.g.][]{blei2017variational,vehtari2020expectation}, while TV is a reference distance in the study of the asymptotic accuracy of standard approximations.   Recent literature has also explored variational approximations based on generic $\alpha$-divergences \citep[][]{hernandez2016black,yang2020alpha}. Hence, our methods and theory apply also to the symmetric approximations from these strategies.

\begin{jrsslemma}{Lemma 1}
\noindent Let ${\pi}_{n}(\btheta)$ be a generic posterior density for the parameter $\btheta \in \Theta $, and denote with $\bar{q}_{n,{\btheta}^*}(\btheta)$ an already-derived approximation of ${\pi}_{n}(\btheta)$ which is symmetric about the point ${\btheta}^*\in \Theta$. Moreover, consider the symmetrized posterior density $\bar{\pi}_{n,{\btheta}^*}(\btheta)$  about ${\btheta}^*$ defined as in \eqref{eq1}. Then,  for any ${\btheta}^* \in \Theta$ and  $n$, it holds
\begin{equation*}
\mathcal{D}[\bar{\pi}_{n,{\btheta}^*} \mid \mid \bar{q}_{n,{\btheta}^*}] \leq \mathcal{D}[\pi_{n} \mid \mid \bar{q}_{n,{\btheta}^*}],
 \end{equation*}
 where $\mathcal{D}$ is either the TV  distance, KL, reverse-KL or a generic $\alpha$-divergence.
\label{lemma_1}
\end{jrsslemma}

As clarified in Definition~\ref{def_1}, combining Proposition~\ref{prop1} with Lemma~\ref{lemma_1} suggests a natural strategy to obtain the newly proposed skew-symmetric approximation. In particular, by Lemma~\ref{lemma_1}, $\bar{q}_{n,{\btheta}^*}(\btheta)$ approximates $\bar{\pi}_{n,{\btheta}^*}(\btheta)$ more accurately than  ${\pi}_{n}(\btheta)$, where $\bar{\pi}_{n,{\btheta}^*}(\btheta)$ can be in turn expressed via  $\bar{\pi}_{n,{\btheta}^*}(\btheta)={\pi}_{n}(\btheta)/[2w_{n,{\btheta}^*}(\btheta)]$ as a direct consequence of \eqref{eq2}. Combining these two results yields  $\bar{q}_{n,{\btheta}^*}(\btheta)\approx\bar{\pi}_{n,{\btheta}^*}(\btheta)={\pi}_{n}(\btheta)/[2w_{n,{\btheta}^*}(\btheta)]$, which implies $2\bar{q}_{n,{\btheta}^*}(\btheta)w_{n,{\btheta}^*}(\btheta)\approx{\pi}_{n}(\btheta)$. Thus, setting $q_{n,{\btheta}^*}(\btheta)=2\bar{q}_{n,{\btheta}^*}(\btheta)w_{n,{\btheta}^*}(\btheta)$ gives the  improved skew-symmetric approximation in Definition~\ref{def_1}.

\begin{jrssdefinition}{Definition 2}
\noindent (Skew-symmetric approximation of posterior densities). Consider the generic posterior density ${\pi}_{n}(\btheta)$ for $\btheta \in \Theta $, and  let $\bar{q}_{n,{\btheta}^*}(\btheta)$  denote an already-derived approximation of ${\pi}_{n}(\btheta)$ which is symmetric about the point ${\btheta}^*\in \Theta$. Moreover, let $w_{n,{\btheta}^*}(\btheta) \in [0,1]$ be the skewing factor  in \eqref{eq_w}. Then, the skew-symmetric approximation of ${\pi}_{n}(\btheta)$ arising from the perturbation of  $\bar{q}_{n,{\btheta}^*}(\btheta)$  is defined as
\begin{equation} 
\begin{split}
q_{n,{\btheta}^*}(\btheta)&=2\bar{q}_{n,{\btheta}^*}(\btheta) w_{n,{\btheta}^*}(\btheta)\\
&=2\bar{q}_{n,{\btheta}^*}(\btheta)\frac{\pi(\btheta)L(\btheta; \by_{1:n})}{\pi(\btheta)L(\btheta; \by_{1:n})+\pi(2 {\btheta}^*-\btheta)L(2 {\btheta}^*-\btheta;\by_{1:n})},
\end{split}
\label{skew:sym:post}
 \end{equation}
 for every  symmetry point ${\btheta}^* \in \Theta$ and sample size $n$, where $\pi(\btheta)$ and $L(\btheta; \by_{1:n})$ are, respectively, the prior and likelihood inducing the posterior  ${\pi}_{n}(\btheta)$.
 \label{def_1}
\end{jrssdefinition}

 \begin{jrssremark}{Remark 2}
\noindent Notice that $w_{n,{\btheta}^*}(\btheta)$ in  \eqref{skew:sym:post} admits a natural  interpretation, in that it coincides with the relative proportion of the posterior density at  $\btheta \in \Theta $ with respect to the total assigned to such a $\btheta$ and its symmetric counterpart $2 {\btheta}^*-\btheta$. This yields a skewing factor which quantifies differences in the posterior density at the symmetric points $\{\btheta,2 {\btheta}^*-\btheta\}$ for any $\btheta \in \Theta $. Therefore, if the posterior density is actually symmetric about ${\btheta}^*$, then $w_{n,{\btheta}^*}(\btheta)=0.5$ for all $\btheta \in \Theta $, and hence, $q_{n,{\btheta}^*}(\btheta)$ reduces to $\bar{q}_{n,{\btheta}^*}(\btheta)$, as expected. Conversely, whenever there are asymmetries within ${\pi}_{n}(\btheta)$, the original symmetric approximation $\bar{q}_{n,{\btheta}^*}(\btheta)$  is  re-weighted by $w_{n,{\btheta}^*}(\btheta)$ in order to properly redistribute the total density at each  pair $\{\btheta,2 {\btheta}^*-\btheta\}$ according to the one assigned by the actual posterior to $\btheta $ and $2{\btheta}^*-\btheta$. This yields an improved approximation $q_{n,{\btheta}^*}(\btheta)$ incorporating the skewness of ${\pi}_{n}(\btheta)$ with respect to the known symmetry point ${\btheta}^*$.
\label{re1}
\end{jrssremark}

 \begin{jrssremark}{Remark 3}
\noindent In the skew-symmetric representation in Proposition \ref{prop1}, we rely on the convention that $w_{n,{\btheta}^*}(\btheta)=0.5$ if $\pi(\btheta)L(\btheta; \by_{1:n}) = \pi(2 \btheta^* -\btheta)L(2 \btheta^* -\btheta; \by_{1:n}) = 0$, where the latter equality also implies that $\pi_n(\btheta) = \pi_n(2  \btheta^* - \btheta) = 0$. In this setting, which arises, for example, when the support of $\btheta$ is bounded, both the numerator and the denominator of the skewing factor in \eqref{eq_w} are zero, and hence, $w_{n,{\btheta}^*}(\btheta)$ is undefined, thus requiring an alternative specification. Note that this alternative specification is not necessary to guarantee the validity of the skew-symmetric representation for $\pi_n(\btheta)$, provided that also the symmetrized posterior $\bar{\pi}_{n,{\btheta}^*}(\btheta)$ is zero whenever $\pi_n(\btheta) = \pi_n(2 \btheta^* - \btheta) = 0$. Hence, in \eqref{eq2}  different values of $w_{n,{\btheta}^*}(\btheta)$ are irrelevant at these zero-density points. However, $ \pi_n(\btheta) = \pi_n(2 \btheta^* - \btheta) = 0$ does not  imply that $ \bar{q}_{n,{\btheta}^*}(\btheta) = 0$. This situation arises, for example, when Gaussian approximations are considered for parameters $\btheta$  with bounded support. As such, when $\pi_n(\btheta) = \pi_n(2 \btheta^* - \btheta) = 0$ and $\bar{q}_{n,{\btheta}^*}(\btheta)  > 0 $, the skewing factor must be specified alternatively  to guarantee that  \eqref{skew:sym:post} yields a valid  skew-symmetric density. From Definition~\ref{def_1_or}, a key condition for this result to hold is that   $w_{n,{\btheta}^*}(\btheta) =1 - w_{n,{\btheta}^*}(2 {\btheta}^* - \btheta)$ with $w_{n,{\btheta}^*}(\btheta) \in [0,1]$, for any point $\btheta$ satisfying $\bar{q}_{n,{\btheta}^*}(\btheta) > 0 $. To this end, setting $w_{n,{\btheta}^*}(\btheta) = 0.5 $ whenever $\pi_n(\btheta) = \pi_n(2  \btheta^* - \btheta) = 0$ and $\bar{q}_{n,{\btheta}^*}(\btheta) > 0 $ is arguably the most direct strategy to meet such a condition. Recalling Remark~\ref{re1}, $w_{n,{\btheta}^*}(\btheta) = 0.5 $ implies that $\pi_n(\btheta) = \pi_n(2\btheta^* - \btheta) $, and hence, the convention that $w_{n,{\btheta}^*}(\btheta) = 0.5 $ if $\pi_n(\btheta) = \pi_n(2\btheta^* - \btheta) = 0$ is also the most natural one. 
\label{remark:w:0.5}
\end{jrssremark}

As clarified in Definition~\ref{def_1}, the proposed approximation $q_{n,{\btheta}^*}(\btheta)$ results from the re-weighting of the known $\bar{q}_{n,{\btheta}^*}(\btheta)$ by a skewing factor $w_{n,{\btheta}^*}(\btheta)$ via a strategy which does not necessitate additional optimization costs relative to those required for deriving $\bar{q}_{n,{\btheta}^*}(\btheta)$. In fact, the expression for $w_{n,{\btheta}^*}(\btheta)$ in \eqref{eq_w} does not depend on additional unknown parameters beyond ${\btheta}^*$, which is in turn available as the output of the already-solved optimization problem that targeted the posterior ${\pi}_{n}(\btheta) $ via the symmetric density $\bar{q}_{n,{\btheta}^*}(\btheta)$ to be perturbed. Proposition~\ref{prop2} below guarantees that, albeit more flexible than $\bar{q}_{n,{\btheta}^*}(\btheta)$, the deterministic approximation  $q_{n,{\btheta}^*}(\btheta)$ in  Definition~\ref{def_1} preserves similar tractability in inference, in that it belongs to the known class of skew-symmetric densities presented in Section~\ref{sec_2.a}.

\begin{jrssproposition}{Proposition 3}
\noindent The expression for $q_{n,{\btheta}^*}(\btheta)$ given in~\eqref{skew:sym:post} coincides with the density of a skew-symmetric distribution having $\bar{q}_{n,{\btheta}^*}(\btheta)$ as symmetric component and  $w_{n,{\btheta}^*}(\btheta)$ as skewing factor.
 \label{prop2}
 \end{jrssproposition}

Proposition~\ref{prop2} follows directly from Definition~\ref{def_1_or}, after noticing that, by construction, $\bar{q}_{n,{\btheta}^*}(\btheta)$ is symmetric at ${\btheta}^*$ and, in view of Proposition~\ref{prop1},  $w_{n,{\btheta}^*}(\btheta)$ has support within $[0,1]$ and satisfies $w_{n,{\btheta}^*}(\btheta)=1-w_{n,{\btheta}^*}(2 {\btheta}^*-\btheta)$. As anticipated in Section~\ref{sec_2.a}, the connection with skew-symmetric distributions established in Proposition~\ref{prop2}  is crucial in facilitating inference also under $q_{n,{\btheta}^*}(\btheta)$. More concretely, the stochastic representation in Proposition~\ref{prop1_aug_data} yields a simple  rejection-free i.i.d.\ sampling scheme from any density within the skew-symmetric family, including $q_{n,{\btheta}^*}(\btheta)$, thereby allowing tractable and effective Monte Carlo evaluation of any functional of interest under the improved skewed approximation. This sampling scheme from $q_{n,{\btheta}^*}(\btheta)$ is outlined in Algorithm~\ref{alg1}.

 \begin{algorithm}
\caption{i.i.d.\ sampling  from the skew-symmetric approximation in \eqref{skew:sym:post}}\label{alg1}
\begin{algorithmic}
         \For{$s=1, \ldots, N_{\textsc{sampl}}$}    
\State  \textbf{1}. Sample $\bar{\btheta}^{(s)}$ from the distribution with symmetric density $\bar{q}_{n,{\btheta}^*}(\bar \btheta)$.
\State  \textbf{2}. Sample $\textsc{u}^{(s)} \sim \mbox{Unif}[0,1]$.
\State \textbf{3}. If $ \textsc{u}^{(s)} \leq w_{n,{\btheta}^*}(\bar{\btheta}^{(s)}) $ set $\btheta^{(s)}=\bar{\btheta}^{(s)}$, otherwise set $\btheta^{(s)}=2 {\btheta}^*-\bar{\btheta}^{(s)}$.
         \EndFor  \\
         \textbf{output:} i.i.d.\ samples $\btheta^{(1)}, \ldots, \btheta^{(N_{\textsc{sampl}})}$  from the  skew-symmetric approximation in \eqref{skew:sym:post}.
\end{algorithmic}
\end{algorithm}    

Notice that Algorithm~\ref{alg1} only requires simulation from the  symmetric approximation $\bar{q}_{n,{\btheta}^*}(\btheta)$ and computation of the  skewing factor $w_{n,{\btheta}^*}(\btheta)$, which is analytically-available in Definition~\ref{def_1}, and it does not depend on intractable quantities. The first task is straightforward whenever the unperturbed density $\bar{q}_{n,{\btheta}^*}(\btheta)$ arises from one of the routinely implemented approximation schemes discussed in Section~\ref{sec_2}. Moreover, when the interest lies in more complex functionals, sampling is often needed also for inference under $\bar{q}_{n,{\btheta}^*}(\btheta)$. The second task requires instead the evaluation of the likelihood, which enters (together with the prior) the definition of  $w_{n,{\btheta}^*}(\btheta)$. Although this yields some increments in sampling costs relative to inference under $\bar{q}_{n,{\btheta}^*}(\btheta)$, it is important to emphasize that multiple internal likelihood evaluations are  standard in popular algorithms for Bayesian inference. Important examples include Metropolis--Hastings, Hamiltonian Monte Carlo, sequential Monte Carlo, EP, coordinate ascent variational inference, and its  black-box extensions \citep[e.g.][]{blei2017variational,chopin2020introduction,Chopin_2017,kucukelbir2017automatic,ranganath2014black,tan2023variational,vehtari2020expectation}. However, unlike for Algorithm~\ref{alg1}, these schemes introduce additional complexities, which often imply further tuning and non-negligible increments in the overall runtimes. Hence, the cost of inference under $q_{n,{\btheta}^*}(\btheta)$ is often dominated by the one required to obtain its less accurate symmetric counterpart $\bar{q}_{n,{\btheta}^*}(\btheta)$ via state-of-the-art algorithms, and it substantially reduces the  runtimes of other routine use computational methods. These important gains are illustrated in Section~\ref{sec_logistic} for a challenging real-data application with $n=30{,}524$ and $d = 62$. In this case, sampling $10{,}000$ values from the  skew-symmetric approximation under Algorithm~\ref{alg1} requires only $8$ seconds. This runtime is orders of magnitude lower than, for example, the $\approx$ 1 hour required to obtain a sample of the same size from Hamiltonian Monte Carlo implemented in \texttt{rstan}. 

Note also that, unlike for most of the aforementioned computational methods, Algorithm~\ref{alg1}  is inherently amenable to parallel implementations both across samples and likelihood factors. Thus, the runtime of Algorithm~\ref{alg1} can be easily reduced to fractions of seconds. Moreover, as clarified within Section~\ref{eff_ev} below, in several routinely implemented models such as, e.g.\ generalized linear models, it is possible to devise computationally efficient strategies that further optimize the cost of the likelihood evaluations required to compute the skewing factor.

\subsection{\large 2.4 Efficient evaluation of the skewing factor}\label{eff_ev}
Recalling the above discussion, inference under the newly proposed skew-symmetric approximation  $q_{n,{\btheta}^*}(\btheta)$ requires the evaluation of the two un-normalized posterior densities $\pi(\btheta)L(\btheta; \by_{1:n})$ and $\pi(2  \btheta^* - \btheta)L(2  \btheta^* - \btheta; \by_{1:n})$ in the skewing factor $w_{n,{\btheta}^*}(\btheta)$. Hence, the increments in cost relative to performing inference under $\bar{q}_{n,{\btheta}^*}(\btheta)$ mainly depend on those of evaluating the likelihoods $L(\btheta; \by_{1:n})$ and $L(2  \btheta^* - \btheta; \by_{1:n})$. As mentioned above, multiple likelihoods evaluations are standard in Bayesian computation and can be performed with negligible cost for most models of practical interest. Nevertheless, designing strategies that further optimize this cost could lead to additional reductions in the overall runtimes. This is the case, for example, in high-dimensional (i.e.\ large $d$) models, with  the parameters entering the likelihood function through a linear predictor, as in routinely implemented generalized linear models.

Under these models, Algorithm \ref{alg2} provides an efficient approach for computing  $w_{n,{\btheta}^*}(\btheta)$  at a cost which is essentially that of a single likelihood evaluation instead of the two required in the expression of $w_{n,{\btheta}^*}(\btheta)$. More specifically, Algorithm \ref{alg2} can be applied to the broad class of models whose log-likelihood  $\ell(\btheta; \by_{1:n}) =\log L(\btheta; \by_{1:n})$ can be expressed as $\ell(\btheta; \by_{1:n}) = \sum_{i=1}^{n} g_i(\by_{i},\eta_{\btheta,i} ),$ where $g_i(\cdot),\, i = 1,\dots, n,$ denote functions that can be evaluated with limited computational effort, while $\eta_{\btheta,i} = \bx_i^\intercal \btheta$ for $i = 1,\dots, n,$ are linear predictors depending on $\btheta$ and on a $d$-dimensional vector of explanatory variables $\bx_i$. Within this context, for (at least moderately) large $d$, the most expensive operation is the evaluation of the linear predictor. Algorithm~\ref{alg2} reduces the cost required for computing $w_{n,{\btheta}^*}(\btheta)$ by exploiting the fact that 
\begin{equation*}
\ell(\btheta; \by_{1:n}) = \sum\nolimits_{i=1}^{n} g_i(\by_{i},\eta_{ \btheta^* ,i} + \eta_{\btheta - \btheta^* , i} ),  \ \ \mbox{and} \ \  \ell(2 \btheta^* - \btheta; \by_{1:n}) = \sum\nolimits_{i=1}^{n} g_i(\by_{i},\eta_{\btheta^* ,i} - \eta_{\btheta - \btheta^* , i} ),
\end{equation*}
where  $\eta_{\btheta^* ,i} = \bx_i^\intercal \btheta^*$ and  $ \eta_{\btheta -  \btheta^* , i} = \bx_i^\intercal (\btheta -  \btheta^*),\, i = 1,\dots,n$. Crucially, $\bbbeta_{\btheta^*} = (\eta_{ \btheta^* ,1}, \dots,\eta_{\btheta^* , n} )^\intercal $ can be pre-computed, meaning that only a single evaluation of $\bbbeta_{\btheta -\btheta^*} = (\eta_{\btheta - \btheta^* ,1}, \dots, \eta_{\btheta - \btheta^* ,n})^\intercal  $ is required in order to compute the skewing factor. As a result, since the cost of $\bx_i^\intercal  \btheta$ and $g_i(\cdot)$ is $O(d)$ and $O(1)$, respectively, Algorithm \ref{alg2} allows to evaluate $w_{n,{\btheta}^*}(\btheta)$ in a number of operations which, for $d$ large, is essentially reduced by a factor of two.    This yields a sampling strategy whose computational complexity interestingly matches, or even improves, those of other available skewed approximations \citep[see, e.g.][]{anceschi2022bayesian,durante2023skewed,fasanoscalable}. Notice that, unlike for the general perturbation strategy we propose, these alternatives are developed in the context of specific deterministic approximations (e.g.\ Laplace and VB) and/or models (e.g.\ probit regression and its extensions). Hence, matching the costs of these ad-hoc implementations is arguably a remarkable result.

 \begin{algorithm}
\caption{Efficient evaluation of $w_{n,{\btheta}^*}(\btheta)$ in models with linear predictors}\label{alg2}
\begin{algorithmic}
\State \textbf{require:} $\bbbeta_{\btheta^*}$, $\btheta -  \btheta^*$ and   the design matrix $\bX$ with rows $\bx_i^\top$. 
\State \textbf{do} 
\State \quad  \textbf{1}. Evaluate $\bbbeta_{\btheta - \btheta^*,i} = \bx_i^{\intercal}(\btheta -  \btheta^*)$, for $i=1, \ldots, n$.
\State \quad \textbf{2}. Compute $\ell(\btheta; \by_{1:n}) = \sum_{i=1}^n g_i(\by_{i},\eta_{\btheta^*,i} + \eta_{\btheta -\btheta^*,i})$.
\State \quad \textbf{3}. Compute  $\ell(2 \btheta^* - \btheta; \by_{1:n}) = \sum_{i=1}^n g_i(\by_{i},\eta_{\btheta^*,i} - \eta_{\btheta -  \btheta^*,i})$.\\
         \textbf{output:}  $$w_{n,{\btheta}^*}(\btheta) = \frac{\pi(\btheta)\exp(\ell(\btheta; \by_{1:n}) )}{ \pi(\btheta)\exp( \ell(\btheta; \by_{1:n}) ) + \pi(2 \btheta^* - \btheta)\exp( \ell(2 \btheta^* -\btheta; \by_{1:n}) ) }.$$
\end{algorithmic}
\end{algorithm}    

\section{\large 3. Theoretical properties of skew-symmetric approximations} \label{sec_theo}
Sections~\ref{sec_33}--\ref{sec_34} clarify that the skew-symmetric approximation  $q_{n,{\btheta}^*}(\btheta)$ derived in Section~\ref{sec_3} (see also Definition~\ref{def_1}) is not only computationally tractable, but also yields a provably more accurate characterization of the exact  posterior ${\pi}_{n}(\btheta)$. See the Supplementary Material for proofs.

\subsection{\large 3.1 Finite sample properties and optimality} \label{sec_33}
The original motivation behind the skew-symmetric approximation $q_{n,{\btheta}^*}(\btheta)$ in   \eqref{skew:sym:post} is to improve the accuracy of the unperturbed $\bar{q}_{n,{\btheta}^*}(\btheta)$. Theorem~\ref{teo_1} provides finite sample theoretical support  to such an accuracy gain and clarifies that the quality of $q_{n,{\btheta}^*}(\btheta)$ only depends on how accurate is $\bar{q}_{n,{\btheta}^*}(\btheta)$ in approximating the symmetrized posterior $\bar{\pi}_{n,{\btheta}^*}(\btheta)$ in~\eqref{eq1}. These results are deepened in Theorem~\ref{teo_2}, which proves  that the skewing factor $w_{n,{\btheta}^*}(\btheta)$ in Definition~\ref{def_1} is optimal among all those yielding a skew-symmetric approximation for ${\pi}_{n}(\btheta)$, with $\bar{q}_{n,{\btheta}^*}(\btheta)$ as symmetric component. 

\begin{jrsstheorem}{Theorem 1}
\noindent (Finite sample accuracy).
Consider the generic posterior density ${\pi}_{n}(\btheta)$  for the parameter $\btheta \in \Theta $, and let $\bar{q}_{n,{\btheta}^*}(\btheta)$ correspond to an already-derived approximation for ${\pi}_{n}(\btheta)$ which is symmetric about the point $\btheta^*\in \Theta$. Moreover, let $q_{n,{\btheta}^*}(\btheta)=2\bar{q}_{n,{\btheta}^*}(\btheta)w_{n,{\btheta}^*}(\btheta)$, where $w_{n,{\btheta}^*}(\btheta)$ is defined as in~\eqref{eq_w}. Then, for any symmetry point ${\btheta}^* \in \Theta$ and sample size $n$, it holds
\begin{equation} \label{skew:sym:equiv:sym}
	\mathcal{D}[{\pi}_{n} \mid \mid q_{n,{\btheta}^*}] = \mathcal{D}[\bar{\pi}_{n,{\btheta}^*} \mid \mid \bar{q}_{n,{\btheta}^*}], 
\end{equation}
 where $\mathcal{D}$ is either the TV  distance, KL, reverse-KL or a generic $\alpha$-divergence, while $\bar{\pi}_{n,{\btheta}^*}(\btheta)$ corresponds to the symmetrized posterior density defined in~\eqref{eq1}.  In view of Lemma \ref{lemma_1}, the result in \eqref{skew:sym:equiv:sym} implies also
\begin{equation} \label{skew:sym:non:asymp}
\mathcal{D}[{\pi}_{n} \mid \mid q_{n,{\btheta}^*}] \leq \mathcal{D}[\pi_{n} \mid \mid\bar{q}_{n,{\btheta}^*}],
\end{equation}
for any ${\btheta}^* \in \Theta$ and sample size  $n$.  
\label{teo_1}
 \end{jrsstheorem}
 
 Theorem~\ref{teo_1} states two important results. First, as clarified in  \eqref{skew:sym:equiv:sym}, the overall quality of $q_{n,{\btheta}^*}(\btheta)$  coincides with the one achieved by the unperturbed $\bar{q}_{n,{\btheta}^*}(\btheta)$ in approximating the symmetrized posterior $\bar{\pi}_{n,{\btheta}^*}(\btheta)$ defined in~\eqref{eq1}. Second, according to \eqref{skew:sym:non:asymp}, $q_{n,{\btheta}^*}(\btheta)$ is, provably, never less accurate than the original   $\bar{q}_{n,{\btheta}^*}(\btheta)$  in approximating the target posterior density ${\pi}_{n}(\btheta)$, irrespectively of the chosen  $\bar{q}_{n,{\btheta}^*}(\btheta)$, its symmetry point  ${\btheta}^*$, and the sample size $n$. Notice also that \eqref{skew:sym:equiv:sym} is not only an intermediate step to obtain \eqref{skew:sym:non:asymp}, but it is of direct practical interest. Indeed, among the approximating densities $\bar{q}_{n,{\btheta}^*}(\btheta)$   with the same symmetry point  ${\btheta}^*$, it suggests to prioritize those that give a more accurate approximation of the symmetrized posterior $\bar{\pi}_{n,{\btheta}^*}(\btheta)$  in~\eqref{eq1}, rather than the original ${\pi}_{n}(\btheta)$. Therefore, although this objective goes beyond our original scope of perturbing an already-available $\bar{q}_{n,{\btheta}^*}(\btheta)$,  equations \eqref{skew:sym:equiv:sym} and \eqref{skew:sym:non:asymp} stimulate the development of novel symmetric approximations explicitly targeting $\bar{\pi}_{n,{\btheta}^*}(\btheta)$, rather than the original posterior ${\pi}_{n}(\btheta)$. In fact, as a consequence of \eqref{skew:sym:equiv:sym}--\eqref{skew:sym:non:asymp}, the  skewed perturbation of these approximations under the proposed strategy can yield an increasingly accurate characterization of ${\pi}_{n}(\btheta)$.

The results in Theorem~\ref{teo_1} follow from the specific form of the skewing factor $w_{n,{\btheta}^*}(\btheta)$ defined in \eqref{eq_w}. Theorem~\ref{teo_2} proves the optimality of such a factor.

\begin{jrsstheorem}{Theorem 2}
\noindent (Optimality of the skewing factor).
Let ${\pi}_{n}(\btheta)$, $\bar{q}_{n,{\btheta}^*}(\btheta)$  and $q_{n,{\btheta}^*}(\btheta)$ be defined as in Theorem~\ref{teo_1}. Moreover, consider the alternative skew-symmetric perturbation $\tilde{q}_{n,{\btheta}^*}(\btheta)=2\bar{q}_{n,{\btheta}^*}(\btheta)\tilde{w}_{{\btheta}^*}(\btheta)$ of $\bar{q}_{n,{\btheta}^*}(\btheta)$, where $\tilde{w}_{{\btheta}^*}(\btheta)$ is a generic skewing factor such that $\tilde{w}_{{\btheta}^*}(\btheta) \in [0,1]$ and $\tilde{w}_{{\btheta}^*}(\btheta)=1-\tilde{w}_{{\btheta}^*}(2 {\btheta}^*-\btheta)$. Then, for any ${\btheta}^* \in \Theta$, sample size $n$, and skewing factor $\tilde{w}_{{\btheta}^*}(\btheta)$, it holds 
\begin{equation*}
\mathcal{D}[{\pi}_{n} \mid \mid q_{n,{\btheta}^*}]  \leq  \mathcal{D}[{\pi}_{n} \mid \mid \tilde{q}_{n,{\btheta}^*}], 
 \end{equation*}
where $\mathcal{D}$ is either the TV  distance, KL, reverse-KL or a generic $\alpha$-divergence.
\label{teo_2}
 \end{jrsstheorem}

According to Theorem~\ref{teo_2}, the skewing factor  $w_{n,{\btheta}^*}(\btheta)$ in \eqref{eq_w} is guaranteed to provide a perturbed version $q_{n,{\btheta}^*}(\btheta)$ of $\bar{q}_{n,{\btheta}^*}(\btheta)$ that is never less accurate in approximating the target posterior  ${\pi}_{n}(\btheta) $ when compared to any other skew-symmetric density  $\tilde{q}_{n,{\btheta}^*}(\btheta)$ with symmetric component $\bar{q}_{n,{\btheta}^*}(\btheta)$ and generic skewing factor $\tilde{w}_{{\btheta}^*}(\btheta)$. Notice that to ensure $\tilde{q}_{n,{\btheta}^*}(\btheta)=2\bar{q}_{n,{\btheta}^*}(\btheta)\tilde{w}_{{\btheta}^*}(\btheta)$ is a skew-symmetric density as in Definition~\ref{def_1_or} it suffices that the skewing factor satisfies $\tilde{w}_{{\btheta}^*}(\btheta) \in [0,1]$ and $\tilde{w}_{{\btheta}^*}(\btheta)=1-\tilde{w}_{{\btheta}^*}(2 {\btheta}^*-\btheta)$. Hence, in principle, there are infinitely many options to perturb the original symmetric approximation so that the resulting density falls within the skew-symmetric class. As discussed in Section~\ref{sec_2.a}, some interesting examples of skewing factors have been derived in  \citet[][]{azzalini2003distributions}, \citet{genton2005generalized} and \citet[][]{ma2004flexible} with a focus on generalizations of skew-normal and skew-elliptical densities, which belong to the skew-symmetric family. According to Theorem~\ref{teo_2}, all these options are suboptimal relative to $w_{n,{\btheta}^*}(\btheta)$. This is because, unlike for other  alternatives,  $w_{n,{\btheta}^*}(\btheta)$ exactly matches the skewing factor of the target posterior, when expressed in skew-symmetric form as in Proposition~\ref{prop1}.

Besides proving the optimality of the skewing factor $w_{n,{\btheta}^*}(\btheta)$, Theorem~\ref{teo_2}   allows also to formalize the proposed skew-symmetric approximation in Definition~\ref{def_1} as the solution of a well defined optimization problem. This result is stated in Corollary~\ref{cor_1}, and is useful to establish direct connections with state-of-the-art approximation strategies that arise from the optimization of specific divergences, such as, for example, VB \citep[e.g.][]{blei2017variational} and EP \citep[e.g.][]{vehtari2020expectation}. Moreover, as discussed in Remark~\ref{re2}, it provides the premises to further expand the scope of the novel perspective considered in this article. 

\begin{jrsscorollary}{Corollary 1}
\noindent Let ${\pi}_{n}(\btheta)$, $\bar{q}_{n,{\btheta}^*}(\btheta)$  and $q_{n,{\btheta}^*}(\btheta)$ be defined as in Theorem~\ref{teo_1}. Moreover, let 
\begin{equation*}
\mathcal{Q}= \{\tilde{q}_{n,{\btheta}^*}(\btheta): \tilde{q}_{n,{\btheta}^*}(\btheta)=2\bar{q}_{n,{\btheta}^*}(\btheta)\tilde{w}_{{\btheta}^*}(\btheta)\},
 \end{equation*}
denote the general family of skew-symmetric densities, which arise from the perturbation of $\bar{q}_{n,{\btheta}^*}$ through a generic skewing factor $\tilde{w}_{{\btheta}^*}(\btheta)\in [0,1]$ satisfying $\tilde{w}_{{\btheta}^*}(\btheta)=1-\tilde{w}_{{\btheta}^*}(2 {\btheta}^*-\btheta)$. Then,  for any symmetry point  ${\btheta}^* \in \Theta$ and sample size $n$, it holds
\begin{equation*}
\mbox{\normalfont min}_{\tilde{q}_{n,{\btheta}^*} \in \mathcal{Q}}\mathcal{D}[{\pi}_{n} \mid \mid \tilde{q}_{n,{\btheta}^*}]=\mathcal{D}[{\pi}_{n}\mid \mid q_{n,{\btheta}^*}],
 \end{equation*}
where $\mathcal{D}$ is either the TV  distance, KL, reverse-KL or a generic $\alpha$-divergence, and, recalling  \eqref{skew:sym:equiv:sym}, $\mathcal{D}[{\pi}_{n} \mid \mid q_{n,{\btheta}^*}] = \mathcal{D}[\bar{\pi}_{n,{\btheta}^*} \mid \mid \bar{q}_{n,{\btheta}^*}]$.
\label{cor_1}
 \end{jrsscorollary}
 
 Although the proposed skew-symmetric approximation has not been derived in Section~\ref{sec_3} as the solution of an optimization problem, Corollary~\ref{cor_1} clarifies that, in fact,  $q_{n,{\btheta}^*}(\btheta)$ can be formalized also under such a perspective. In particular,  the skew-symmetric density $q_{n,{\btheta}^*}(\btheta)$ in Definition~\ref{def_1} actually coincides with the solution of the constrained minimization for a suitable divergence $\mathcal{D}$ between the target posterior ${\pi}_{n}(\btheta)$ and a given approximating density $\tilde{q}_{n,{\btheta}^*}(\btheta)$ within the family $\mathcal{Q}$. Such a family comprises all the skew-symmetric densities having symmetric component fixed at the already-available approximating density $\bar{q}_{n,{\btheta}^*}(\btheta)$. 
 
 The above interpretation allows us to establish direct connections with the optimization-based perspectives of VB \citep[e.g.][]{blei2017variational} and EP \citep[e.g.][]{vehtari2020expectation} solutions. However, unlike for these  strategies, Corollary~\ref{cor_1} holds under a broader class of divergences, rather than a specific one, and, when compared to routinely implemented VB and EP schemes yielding symmetric approximations, it considers an expanded family which ensures improvements in accuracy. Note that, consistent with the focus of this article, $\bar{q}_{n,{\btheta}^*}(\btheta)$  and, as a consequence, ${\btheta}^*$ are known and fixed in Corollary~\ref{cor_1}. Hence, the only quantity to be derived is the skewing factor. Crucially, as clarified in Theorem~\ref{teo_2}, the solution $w_{n,{\btheta}^*}(\btheta)$ of this minimization with respect to $\tilde{w}_{{\btheta}^*}(\btheta)$ does not require optimization of additional parameters beyond the already-available ${\btheta}^*$. Although extending the optimization problem in  Corollary~\ref{cor_1} to the case in which also  $\bar{q}_{n,{\btheta}^*}(\btheta)$   is unknown goes beyond our scope, as clarified in Remark~\ref{re2}, this direction can be of substantial interest to further improve the accuracy of $q_{n,{\btheta}^*}(\btheta)$, and our results open several avenues to stimulate future advancements along these lines. 
 
  \begin{jrssremark}{Remark 4}
\noindent Recalling Sections~\ref{sec_1} and \ref{sec_3}, the overarching focus of this article is to improve the accuracy of state-of-the-art symmetric approximations of posterior distributions via a broadly applicable perturbation scheme, which can be derived at no additional optimization costs and applied directly to the output  $\bar{q}_{n,{\btheta}^*}(\btheta)$  of standard implementations. To this end, $\bar{q}_{n,{\btheta}^*}(\btheta)$  is kept fixed and known within our derivations. However, although the optimization of such a symmetric component goes beyond the scope of our contribution, combining the results in Theorem~\ref{teo_1} and Corollary~\ref{cor_1} with the skew-symmetric representation of posterior densities in Proposition~\ref{prop1},  opens  promising directions to further improve the approximation accuracy via the additional optimization  of the symmetric component. In particular, as clarified in Corollary~\ref{cor_1}, minimizing $\mathcal{D}[{\pi}_{n} \mid \mid \tilde{q}_{n,{\btheta}^*}]$ also with respect to the symmetric component in $\tilde{q}_{n,{\btheta}^*}(\btheta)$, simply requires to find the closest symmetric density to the symmetrized posterior in \eqref{eq1}, and then perturb such a density with the already-derived optimal skewing factor $w_{n,{\btheta}^*}(\btheta)$. This is expected to further improve accuracy relative to perturbations of currently implemented symmetric approximations that target the actual posterior instead of its symmetrized version. In fact, to our knowledge, such a different target has never been considered before and, hence, our results can open unexplored avenues to derive improved classes of tractable deterministic approximations, along with novel computational methods to obtain these approximations. When specializing $\mathcal{D}$ to the  KL  minimized under VB a promising direction could be to solve such an optimization problem via automatic differentiation variational inference schemes  \citep{kucukelbir2017automatic}.
\label{re2}
\end{jrssremark}

Section~\ref{sec_34}  quantifies the aforementioned accuracy gains of $q_{n,{\btheta}^*}(\btheta)$ in asymptotic settings.

\subsection{\large 3.2 Asymptotic properties} \label{sec_34}
Theorems \ref{teo_1} and \ref{teo_2} provide  theoretical guarantees in finite samples for the improved approximation accuracy of the newly developed skew-symmetric solution $q_{n,{\btheta}^*}(\btheta)$, compared to its symmetric  counterpart, i.e.\ $\bar{q}_{n,{\btheta}^*}(\btheta)$. However, these results do not quantify the magnitude of such improvements. Below,  we address this important point from an asymptotic perspective for $n \to \infty$. This focus further clarifies how the overall quality of the proposed skew-symmetric approximation crucially  depends on the one achieved by its symmetric counterpart  $\bar{q}_{n,{\btheta}^*}(\btheta)$ in approximating the symmetrized posterior $\bar{\pi}_{n,{\btheta}^*}(\btheta)$. 

To address the above objectives, we study the skewed perturbations of two particular symmetric approximations, both centered at  the maximum a posteriori  $\btheta_{\textsc{map}}= \arg\max_{\btheta \in \Theta}\pi(\btheta) L(\btheta; \by_{1:n})$ (i.e.\ ${\btheta}^*= \btheta_{\textsc{map}}$), and approximating the target posterior density ${\pi}_{n}(\btheta)$ with the same rate. Since the symmetry points coincide, the perturbations of these two symmetric approximations share the same  skewing factor $w_{n, \btheta_{\textsc{map}}}(\btheta)$, as a direct consequence of \eqref{eq_w}. Nonetheless,  as clarified in the following, the resulting skew-symmetric perturbations achieve different rates in approximating ${\pi}_{n}(\btheta)$.  These rates coincide with those obtained by the symmetric counterparts in approximating the symmetrized posterior $\bar{\pi}_{n,{\btheta}_{\textsc{map}}}(\btheta)$, which can be substantially different from those achieved when the target is the actual posterior $\pi_n(\btheta)$. Note that, unlike for the finite sample properties in Section~\ref{sec_33}, which hold broadly for any skew-symmetric approximation, the asymptotic analysis we consider below focuses on the sub-class of skewed perturbations of Gaussians or higher-order extensions of these Gaussians, both centered at the posterior mode, i.e.\ $\btheta_{\textsc{map}}$. Although generalizations of such an asymptotic theory to the general class of skew-symmetric approximations can be envisioned, the focus on this sub-class facilitates theoretical derivations and allows to leverage available results on the rates of Gaussian approximations. In addition,  these Gaussian approximations and the corresponding higher-order versions are directly related to ubiquitous Laplace type methods, and hence,  it is natural to deepen the asymptotic analysis of the skew-symmetric approximations arising from the perturbation of such symmetric densities. Note that, although we consider the MAP as symmetry point, our theory relies on assumptions which allow to replace $\btheta_{\textsc{map}}$ with  any efficient estimator of $\btheta$.

To derive the aforementioned asymptotic results, let  $\ell_n(\btheta) = \ell(\btheta; \by_{1:n})=\log L(\btheta; \by_{1:n})$ and $\log \pi(\btheta)$ be, respectively, the log-likelihood function and log-prior density,  evaluated at $\btheta \in \Theta  \subseteq \mathbb{R}^{d}$. Moreover, denote with
\begin{equation*}
\ell^{(k)}_{n,\btheta} = (\partial^{\otimes k}/\partial^{\otimes k} \btheta )\ell_n(\btheta), \quad \mbox{and}  \quad \log \pi_\btheta^{(k)}= (\partial^{\otimes k}/\partial^{\otimes k} \btheta )\log \pi(\btheta),
\end{equation*}
the $d^k$-dimensional arrays containing the associated $k$-th order partial derivatives at $\btheta$. Under these settings, the first symmetric density whose skewed perturbation is studied in asymptotic regimes is the Gaussian approximation from the Laplace method. Namely,
\begin{equation}\label{glap}
	\bar{q}_{n,\btheta_{\textsc{map}} ,(1)}(\btheta)  = \phi_d(\btheta; \btheta_{\textsc{map}} , \bJ_{\btheta_{\textsc{map}}}^{-1} ),
\end{equation}
 where $\btheta_{\textsc{map}} = \arg\max_{\btheta \in \Theta} \pi(\btheta)L(\btheta; \by_{1:n})$ and  $\bJ_{\btheta_{\textsc{map}}} = - ( \ell^{(2)}_{n,\btheta_{\textsc{map}}} + \log \pi_{\btheta_{\textsc{map}}}^{(2)}   )$. 
 
 The second is, instead, a novel higher-order extension of the Gaussian in \eqref{glap}, which relates to the family of semi-nonparametric distributions \citep{gallant1987semi} and has density
\begin{equation}\label{SNP:modal:approx}
	\bar{q}_{n,\btheta_{\textsc{map}},(2)}(\btheta) = 	 \frac{\phi_d(\btheta; \btheta_{\textsc{map}} , \bJ_{ \btheta_{\textsc{map}}}^{-1} ) f(\btheta - \btheta_{\textsc{map}})}{ \int  \phi_d(\btheta; \btheta_{\textsc{map}} , \bJ_{ \btheta_{\textsc{map}}}^{-1} ) f(\btheta - \btheta_{\textsc{map}})  \mbox{d} \btheta},
\end{equation}
where $ f(\btheta - \btheta_{\textsc{map}})$ is a non-negative polynomial obtained from a  fourth-order expansion of the symmetrized log-posterior at $\btheta_{\textsc{map}}$ that yields
	\begin{equation*}
		\begin{split}
			&f(\btheta - \btheta_{\textsc{map}}) = \\
			& \quad  1 + \frac{1}{24} \langle \ell^{(4)}_{n,\btheta_{\textsc{map}}}, (\btheta - \btheta_{\textsc{map}})^{\otimes 4} \rangle + \frac{1}{2} \left[\frac{1}{24} \langle \ell^{(4)}_{n,\btheta_{\textsc{map}}}, (\btheta - \btheta_{\textsc{map}})^{\otimes 4} \rangle \right]^2 
			+ \frac{1}{2} \left[\frac{1}{6} \langle \ell^{(3)}_{n,\btheta_{\textsc{map}}}, (\btheta - \btheta_{\textsc{map}})^{\otimes 3}\rangle \right]^2.
		\end{split}
	\end{equation*} 
In the above expression, the generic quantity \smash{$\langle \ell^{(k)}_{n,\btheta_{\textsc{map}}}, (\btheta - \btheta_{\textsc{map}})^{\otimes k}\rangle$} denotes the polynomial term associated to the $k$-th element of the Taylor expansion for $\ell_n(\btheta)$ at $\btheta_{\textsc{map}}$. Notice that, in view of Lemma B.2 within Appendix~B of the Supplementary Material, $f(\btheta - \btheta_{\textsc{map}})$ is always non-negative, and hence,~\eqref{SNP:modal:approx} is a proper density function symmetric about $\btheta_{\textsc{map}}$. Although \eqref{SNP:modal:approx} is of less direct applicability than~\eqref{glap}, such a higher-order approximating density can capture the behavior of the symmetrized posterior more flexibly than the Gaussian in \eqref{glap}. Therefore, it provides an interesting theoretical alternative to \eqref{glap} for quantifying the gains of the proposed skew-symmetric family of approximations when applied to symmetric densities achieving different accuracies in capturing the behavior of the symmetrized posterior. Since \eqref{glap} and \eqref{SNP:modal:approx} share the same symmetry point $\btheta_{\textsc{map}}$, the resulting skew-symmetric perturbations 
\begin{equation*}
{q}_{n,\btheta_{\textsc{map}},(1)}(\btheta) = 2\bar{q}_{n,\btheta_{\textsc{map}},(1)}(\btheta)w_{n,\btheta_{\textsc{map}}}(\btheta), \quad \mbox{and} \quad {q}_{n,\btheta_{\textsc{map}},(2)}(\btheta)= 2\bar{q}_{n,\btheta_{\textsc{map}},(2)}(\btheta) w_{n,\btheta_{\textsc{map}}}(\btheta),
\end{equation*}
have the same skewing factor $w_{n,\btheta_{\textsc{map}}}(\btheta)$. 

Theorem \ref{teo_3} below formalizes the above discussion. When the dimension $d$ is fixed, this theorem states that several key divergences (including the TV distance, KL, reverse-KL and $\alpha$-divergences) between the target posterior  density and the skew-symmetric perturbation ${q}_{n,\btheta_{\textsc{map}},(1)}(\btheta) $ of~\eqref{glap} converge to zero in probability with a rate $1/n$, up to a poly-log term, which substantially improves the $1/\sqrt{n}$ rate of the unperturbed Gaussian approximation in~\eqref{glap}. This accuracy gain is further refined by ${q}_{n,\btheta_{\textsc{map}},(2)}(\btheta) $ which achieves a \smash{$1/n^2$} rate, again up to a poly-log term. Such a latter result is even more remarkable, provided that~\eqref{SNP:modal:approx} has the same rate of~\eqref{glap} in approximating the target posterior ${\pi}_{n}(\btheta)$. This is due to the fact that both  \eqref{glap} and \eqref{SNP:modal:approx} are not able to capture the skewness of the posterior distribution. However, when the target is the symmetrized posterior $\bar{\pi}_{n,{\btheta}_{\textsc{map}}}(\btheta)$ (i.e., the skewness is removed from the target), the higher-order extension~\eqref{SNP:modal:approx}  of~\eqref{glap}  improves the rates of such a latter Gaussian approximation. Combining this result with Theorem \ref{teo_1}, the same level of accuracy is maintained by  ${q}_{n,\btheta_{\textsc{map}},(2)}(\btheta) $ in approximating the target posterior density ${\pi}_{n}(\btheta)$.  This explains the improved rates of ${q}_{n,\btheta_{\textsc{map}},(2)}(\btheta) $ relative to ${q}_{n,\btheta_{\textsc{map}},(1)}(\btheta) $, and provides a result with important methodological consequences. In particular, it suggests to target the symmetrized posterior rather than the original one with state-of-the-art symmetric approximations, and then perturb these approximations with the proposed skewing factor.

The proof of Theorem \ref{teo_3} is provided in Appendix~A of the Supplementary Material and relies on a number of  regularity conditions listed below. These conditions are similar to those recently considered by \citet{durante2023skewed} for deriving a skewed extension of the  Bernstein--von Mises theorem \citep[e.g.][]{van2000asymptotic}. Recalling related discussions in \citet{durante2023skewed}, we shall emphasize that Assumptions \ref{cond:uni}--\ref{assump:LBtail} provide natural extensions of standard conditions in asymptotic studies of this type, and hold under several, commonly adopted, regular statistical models, such as generalized linear models. 

\begin{jrssassumption}{Assumption 1}
\noindent For every sample size $n \in \mathbb{N}$, the data $\by_{1:n}$ are realizations from a sequence of random variables ${\bf Y}_{1:n}$ having true underlying distribution $P_0^n$ which may not be necessarily included within the parametric family $\{ P_{\btheta}^n, \btheta \in \Theta\}$ defining the assumed statistical model with log-likelihood $\ell_n(\btheta)$. Furthermore, the KL projection $P^n_{\btheta_0}$ of $ P_0^n$ into $\{ P_{\btheta}^n, \btheta \in \Theta\},$ is unique.  
\label{cond:uni}
\end{jrssassumption}

\begin{jrssassumption}{Assumption 2}
\noindent The log-prior density $ \log \pi(\btheta)$ is four times continuously differentiable inside a neighborhood of $\btheta_0$, and $0 < \pi(\btheta_0) < \infty$. 	
\label{cond:3} 
\end{jrssassumption}

\begin{jrssassumption}{Assumption 3}
\noindent For every $M_n \to \infty$ there exists a positive constant $c_1$ such that 
$$\lim\nolimits_{n \to \infty } P_0^{n} \{ \sup\nolimits_{\|\btheta - \btheta_0\|> M_n \sqrt{d}/\sqrt{n}}\{ \ell_n(\btheta) - \ell_n(\btheta_0)  \}/n < - c_1 M_n^2d/n \} = 1,$$
where $ \ell_n(\btheta_0)$ is the log-likelihood evaluated at $\btheta_0$.
\label{cond:4}
\end{jrssassumption}

\begin{jrssassumption}{Assumption 4}
\noindent The maximum a posteriori estimator ${\btheta}_{\textsc{map}}=\arg\max_{\btheta \in \Theta} \{\ell_n(\btheta)+\log \pi(\btheta)\}$, satisfies  $\E_{0}^n\| {\btheta}_{\textsc{map}} - \btheta_0 \|^2 = O( d/n)$.
 \label{cond:m1}
\end{jrssassumption}

\begin{jrssassumption}{Assumption 5}
\noindent There exist two positive constants $\bar \eta_1$ and $\bar \eta_2$ such that the event $$\tilde A_{n,0}= \{ \lambda_{\textsc{min}}( \bJ_{\btheta_{\textsc{map}}}/n)> \bar \eta_1 \} \cap\{ \lambda_{\textsc{max}}(\bJ_{\btheta_{\textsc{map}}}/n)< \bar \eta_2 \}$$ holds with probability $P_0^n(\tilde A_{n,0}) = 1 - o(1)$. 
	Moreover, there exist two positive constants $\delta>0$ and $L>0$ such that the inequalities  $$
 \|\ell_n^{(3)}(\btheta)/n  \| < L, \quad  \| \ell_n^{(4)}(\btheta)/n \| < L, \quad \|\log \pi^{(2)}(\btheta) \|  < L,$$
 hold uniformly in $ B_{\delta}(\btheta_{\textsc{map}}) = \{ \btheta  \in \Theta \,:\, \| \btheta_{\textsc{map}} - \btheta \| < \delta \}$, with $P_{0}^{n}$-probability tending to one, where $\|{\cdot}\|$ denotes the spectral norm. When  ${q}_{n,\btheta_{\textsc{map}},(2)}(\btheta)$ is considered,  for the same $\delta>0$ and $L>0$ as above, also the inequalities $$\|\ell_n^{(5)}(\btheta)/n \| < L,  \ \  \| \ell_n^{(6)}(\btheta)/n  \| < L,  \ \  \|\log \pi^{(3)}(\btheta)  \| < L, \ \	\|\log \pi^{(4)}(\btheta) \|  < L,$$ hold uniformly in $ B_{\delta}(\btheta_{\textsc{map}})$, with $P_{0}^{n}$-probability tending to one.
 \label{cond:m3}	
\end{jrssassumption}

\begin{jrssassumption}{Assumption 6}
\noindent Let  ${\bh}_{\textsc{map}}=\sqrt{n}(\btheta-{\btheta}_{\textsc{map}})$. Then, there  exist  positive constants $c$ and $M$, such that $|\log \bar\pi_{n,{\btheta}_{\textsc{map}}}({\bh}_{\textsc{map}})|\leq c  \|{\bh}_{\textsc{map}}\|^M\vee1$.
\label{assump:LBtail}	
\end{jrssassumption}

Assumptions~\ref{cond:uni}--\ref{cond:4} are needed to guarantee that the assumed model is sufficiently regular and that the induced posterior distribution $\pi_n(\btheta)$ concentrates  asymptotically fast enough around $\btheta_0$. Similarly, for $d$ fixed, Assumption~\ref{cond:m1} provides the parametric $1/\sqrt{n}$ concentration rate of the  posterior mode ${\btheta}_{\textsc{map}}$ around $\btheta_0$, required to control the likelihood around the random location ${\btheta}_{\textsc{map}}$. Assumption~\ref{cond:m3} presents, instead, requirements on the regularity of the prior and the likelihood, depending on the approximation procedure considered. The second, more complex, approximation ${q}_{n,\btheta_{\textsc{map}},(2)}(\btheta)$  requires a stronger control, but the corresponding rates are also faster. These conditions allow us to control the accuracy of the Taylor expansion for the log-posterior, up to the required order, and for its approximation induced by ${q}_{n,\btheta_{\textsc{map}},(1)}(\btheta)$ or ${q}_{n,\btheta_{\textsc{map}},(2)}(\btheta)$.  Finally, Assumption \ref{assump:LBtail} controls the tail behaviour of the symmetrized posterior to ensure finite cross entropy between the skewed perturbations of~\eqref{glap} (or~\eqref{SNP:modal:approx}) and ${\pi}_n( \btheta )$. While this condition could be possibly relaxed, we conjecture that explicit control of the tails is necessary to obtain the results for the KL and reverse-KL  divergences in Theorem~\ref{teo_3}.

The combination of Assumptions \ref{cond:uni}--\ref{assump:LBtail} described above allows us to state the following theorem regarding the asymptotic accuracy of ${q}_{n,\btheta_{\textsc{map}},(1)}(\btheta)$ and ${q}_{n,\btheta_{\textsc{map}},(2)}(\btheta)$ in approximating the target posterior $\pi_n(\btheta)$.  

\begin{jrsstheorem}{Theorem 3}
\noindent (Asymptotic accuracy).
Under Assumptions \ref{cond:uni}--\ref{cond:m3}, it holds
	\begin{equation} \label{tot:variatindist:map:sym}
		\mathcal{D}[\pi_n\mid \mid  {q}_{n,\btheta_{\textsc{map}},(1)} ]  = O_{P_{0}^n}\big(M_n^{c_{3}}d^3/n\big), 
	\end{equation}
and
	\begin{equation} \label{tot:variatindist:map:sym2}
		\mathcal{D} [\pi_n \mid \mid   {q}_{n,\btheta_{\textsc{map}},(2)} ]  = O_{P_{0}^n}\big(M_n^{c_{4}}d^6/n^2\big), 
	\end{equation}
	where $M_n = \sqrt{c_0 \log n}$, the quantities $c_0, c_{3},  c_{4} >0 $ are fixed positive constants not depending on $n$ and $d$, and $\mathcal{D}$ denotes either the TV  distance or any generic $\alpha$-divergence with $\alpha \in (0,1)$. If Assumption~\ref{assump:LBtail} holds (in addition to \ref{cond:uni}--\ref{cond:m3}), then \eqref{tot:variatindist:map:sym}--\eqref{tot:variatindist:map:sym2} are valid also when $\mathcal{D}$ is the KL or reverse-KL  divergence.
 \label{teo_3}
\end{jrsstheorem}

  \begin{jrssremark}{Remark 5}
\noindent Under the same assumptions, it can be also shown that the convergence rate of the symmetric components \eqref{glap} and \eqref{SNP:modal:approx} to the target posterior density $\pi_n( \btheta )$ is \smash{$O_{P_0^n}(M_n^{c_{5}}d^{3/2}/\sqrt{n})$} for some $c_{5}>0$. This result suggests that the difference in  \eqref{tot:variatindist:map:sym}--\eqref{tot:variatindist:map:sym2} are due to how the symmetric component approximates $\bar{\pi}_{n,\btheta_{\textsc{map}}}(\btheta)$.
\label{sym_rate_sqrt}
\end{jrssremark}

  \begin{jrssremark}{Remark 6}
\noindent All the rates presented in Theorem~\ref{teo_3} and Remark~\ref{sym_rate_sqrt} rely on bounds which are guaranteed to vanish also when the dimension $d$ grows with the sample size $n$, as long as \smash{$d=o(n^{1/3})$}, up to a poly-log term. This condition  is common in high-dimensional studies of Gaussian approximations \citep[][]{panov2015finite,spokoiny2023inexact,spokoiny2025accuracy}. However, unlike these asymptotic studies, the rates presented in Theorem~\ref{teo_3}  vanish with $n$ (or even \smash{$n^2$}), up to a poly-log term, instead of \smash{$\sqrt{n}$}, for any fixed $d$. Finally, let us emphasize that the constants $c_0, c_{3}$ and $c_{4}$ in Theorem~\ref{teo_3} arise from specific technicalities in the proof that can be found in the Supplementary Material. For what concerns the interpretation of the rates in  Theorem~\ref{teo_3}, the only key requirement is that   $c_0, c_{3}$ and $c_{4}$  are fixed positive constants not depending on $n$ and $d$.
\label{rem_d_n}
\end{jrssremark}

Note that ${q}_{n,\btheta_{\textsc{map}},(1)}$ is closely related to the skew-modal approximation recently introduced in \citet{durante2023skewed}. These two approximations share a similar symmetric component and the same order of convergence to the target posterior, but have different  skewing factor. In \citet{durante2023skewed} such a quantity is derived using asymptotic arguments that involve the evaluation of third-order log-likelihood derivatives at $\btheta_{\textsc{map}}$. Conversely, the skewing factor $w_{n,\btheta_{\textsc{map}}}(\btheta)$ introduced in this article depends only on the  un-normalized posterior density, and hence, is simpler to implement. Moreover, while the focus of  \citet{durante2023skewed} is on perturbing Gaussian approximations from the Laplace method to refine the classical Bernstein--von Mises theorem, the proposed skew-symmetric approximations broadly apply to any symmetric density, beyond Gaussians centered at the MAP. Besides its practical benefits illustrated in Sections~\ref{sec_4} and \ref{sec_4_app} below, this generality yields also theoretical gains which are evident in \eqref{tot:variatindist:map:sym2}. This latter result clarifies that the theory in \citet{durante2023skewed} can be further expanded to higher-order approximations achieving  faster convergence rates, while remaining within the skew-symmetric family.

\section{\large 4. Simulation studies} \label{sec_4}
Before assessing the skew-symmetric approximations in real-data applications, let us first consider two simulation studies quantifying to what extent the accuracy gains and improved rates derived theoretically in Sections~\ref{sec_33} and \ref{sec_34} find empirical evidence also in practice, even beyond the settings explored by our theory. To this end, we compare the accuracies achieved by three popular Gaussian approximations from the  Laplace method \citep[][Chapter 13]{gelman1995bayesian}, black-box VB \citep[][]{kucukelbir2017automatic,ranganath2014black} and EP \citep[][]{minka2013expectation,vehtari2020expectation}, with those obtained under the corresponding skew-symmetric counterparts. These gains are quantified through different accuracy measures in two simulations studies. The first (see Section~\ref{sec_simu_uni}) aims at providing empirical evidence for the asymptotic rates presented in Theorem~\ref{teo_3} (see also Remark~\ref{sym_rate_sqrt}), under a parametric setting that meets the conditions underlying the theory within Section~\ref{sec_34}. In this first study, we consider a simple  one-dimensional scenario, which allows us to evaluate with precision the different divergences in Theorem~\ref{teo_3}  via numerical integration methods. This provides a reliable quantification of the rates, which is not affected by Monte Carlo error. Large $d$ settings are instead addressed in the second simulation study (see Section~\ref{sec_simu_high})  whose aim is to explore the accuracy improvements of the skew-symmetric approximations beyond the settings and quantities studied theoretically in Sections~\ref{sec_33} and \ref{sec_34}.  In particular, the focus of Section~\ref{sec_simu_high} is on assessing the accuracy in approximating the posterior marginals under high-dimensional regimes where $d$ grows with $n$ at a rate which does not meet the condition \smash{$d=o( n^{1/3})$} discussed in Remark~\ref{rem_d_n}.

\subsection{\large 4.1 One-dimensional Poisson model} \label{sec_simu_uni}
The upper bound \eqref{tot:variatindist:map:sym} presented in Theorem~\ref{teo_3} suggests that, for a sufficiently large $n$ and fixed $d$, the log-divergence between the target posterior and the proposed skew-symmetric perturbation of suitable Gaussian approximations should decrease as a linear function of $\log n$ with slope that is $-1$ or lower. According to the discussion in Remark~\ref{sym_rate_sqrt}, this slope should be, instead, $-0.5$, or lower, for the unperturbed Gaussian counterpart.

To assess whether the above results find empirical evidence in practice, we simulate i.i.d.\ data $\by_{1:n}$, for growing sample size $n$ from $n=15$ to $n=145$ with step size $10$, from a Poisson distribution having rate $\exp(\theta_0)=1$. Leveraging these samples of growing size we perform Bayesian inference by assuming a one-dimensional Poisson model with rate $\exp(\theta)$, where $\theta \in \mathbb{R}$ is assigned a Student-$t$ prior with one degree of freedom. Such a prior choice coincides with a hierarchical Bayesian formulation assuming a zero mean Gaussian prior for $\theta$ combined with a $\chi_1^2$ hyperprior for its precision parameter, thereby providing a realistic elicitation that is aligned with the common practice of considering higher-level priors for the hyperparameters to mitigate sensitivity issues. These settings yield a sequence of posteriors $\pi_n(\theta)$, $n=15,25,35, \ldots, 135,145$, for $\theta$, which we approximate under the three Gaussian densities resulting from the Laplace method \citep[][Chapter 13]{gelman1995bayesian}, black-box VB \citep[][]{kucukelbir2017automatic,ranganath2014black} and EP \citep[][]{minka2013expectation,vehtari2020expectation}, respectively, along with the corresponding skew-symmetric perturbations presented in Section~\ref{sec_3}. As anticipated in Sections~\ref{sec_1} and \ref{sec_3}, all these approximations can be readily derived leveraging standard softwares. In particular, the three Gaussians alternatives can be obtained as a  direct output of \texttt{rstan}  or simple \texttt{R} coding, whereas the skew-symmetric counterparts only require the additional calculation of the closed-form skewing factor, without further optimization.

\begin{table}[t]
\caption{\footnotesize{Empirical comparison of the rates achieved by state-of-the-art Gaussian approximations from the  Laplace method, black-box VB and EP, with those obtained under the corresponding skew-symmetric perturbations. For three routinely employed divergences $\mathcal{D}$ (i.e.\ TV,  KL and reverse-KL) the table displays the slope of the fitted linear regression between $\log \mathcal{D}(\pi_n \mid \mid q_n)$ and $\log n$, where $q_n$ is either $\bar{q}_{n,{\theta}^*}$ (i.e.\ Gaussian) or  ${q}_{n,{\theta}^*}$ (i.e.\ Skew-symmetric), and $n$ is defined on a grid from $15$ to $145$ with step size $10$. The results are averaged across 50 replicated experiments (standard error within brackets). Bold values indicate the best performance for each pair of Gaussian and skew-symmetric approximation.}}
\label{tab_0_pois}
\centering
\begin{tabular}{l|ccc}
	\hline
	& TV & KL & reverse-KL   \\ 
	\hline
	Laplace  (Gaussian)& $-0.48 \ (0.01)$  & $-0.93 \ (0.02)$ & $-0.97\  (0.02)$   \\ 
	Laplace  (Skew-symmetric) &  $\boldsymbol{-1.04} \ (0.02)$ & $\boldsymbol{-1.80}  \ (0.08)$ & $\boldsymbol{-3.11} \ (0.26)$  \\   \hline
	black-box VB  (Gaussian)& $ -0.48 \ (0.01)$   & $ -0.95  \ (0.02)$  & $ -0.98   \ (0.02)$   \\ 
	black-box VB  (Skew-symmetric)\qquad \qquad& $ \boldsymbol{-1.05}  \ (0.02)$  & $ \boldsymbol{-1.73}  \ (0.12)$  & $ \boldsymbol{-3.18}  \ (0.29)$   \\   \hline
		EP  (Gaussian)&   $-0.47   \  (0.01) $ &   $-0.93   \  (0.02) $ &   $-0.99   \  (0.02) $  \\ 
	EP  (Skew-symmetric)& $ \boldsymbol{-0.99}   \  (0.02) $ &  $\boldsymbol{-1.76}   \  (0.11) $ &   $\boldsymbol{-3.47}   \  (0.26) $  \\   \hline
\end{tabular}
\end{table}

Table~\ref{tab_0_pois} quantifies the rates achieved by the aforementioned approximations under routinely employed divergences among those considered in Theorem~\ref{teo_3}, i.e.\ TV,  KL and reverse-KL.  Consistent with the motivation underlying this simulation study, we display, in particular, the slope of the fitted linear regression between $\log \mathcal{D}(\pi_n \mid \mid q_n)$ and $\log n$, for $n=15,25,35, \ldots, 135,145$, where $\mathcal{D}$ is the  TV,  KL or reverse-KL, and $q_n$ is either one of the Gaussian approximations under analysis (i.e.\ $q_n=\bar{q}_{n,{\theta}^*}$) or its skew-symmetric counterpart (i.e.\ $q_n={q}_{n,{\theta}^*}$). In Table~\ref{tab_0_pois}, these slopes are averaged across 50 replicated experiments and the associated standard errors are reported within brackets. Consistent with Theorem~\ref{teo_3} and Remark~\ref{sym_rate_sqrt}, the proposed skew-symmetric approximations display slopes close to $-1$ or below it, across all divergences and approximation methods, whereas those of the unperturbed Gaussian counterparts are approximately $-0.5$ or lower. Moreover, the improvements in rates achieved by the proposed perturbation strategy over its symmetric alternative are systematic and by at least a multiplicative factor of $2$. These results provide  empirical evidence on the fact that the finite sample accuracy gains proved in Theorem~\ref{teo_1} and the asymptotic rates derived in Theorem~\ref{teo_3} are visible also in practice. Interestingly, the empirical results obtained for the KL and reverse-KL further suggest that sharper upper bounds could be derived in Theorem~\ref{teo_3}  for these two divergences, thereby stimulating future research along these lines. Section~\ref{sec_simu_high} quantifies to what extent similar gains can be preserved in more challenging settings and for quantities of interest beyond those studied by our theory in Section~\ref{sec_theo}.

\subsection{\large 4.2 High-dimensional Poisson regression} \label{sec_simu_high}
Let us now consider a high-dimensional setting where $d =  6  \lfloor \sqrt{n/2} \rfloor $, and the focus is on assessing the accuracy improvements in approximating posterior marginals for each $\theta_j$, $j=1, \ldots, d$. This setting is substantially more challenging than the one considered in  Section~\ref{sec_simu_uni} for two  reasons. First, it does not meet the \smash{$d =o(n^{1/3})$} condition that is required to obtain vanishing rates under Theorem~\ref{teo_3} (see also Remark~\ref{rem_d_n}). Second, the finite sample accuracy gains proved in Theorem~\ref{teo_1} focus on the approximation of the joint posterior density $\pi_n(\btheta)$, rather than its marginals $\pi_n(\theta_j)$, for $j=1, \ldots, d$. As such, it is of interest to assess whether the theoretical results in Section~\ref{sec_theo} and the empirical gains in Section~\ref{sec_simu_uni} find some numerical evidence also in these more challenging settings. In fact, as discussed in Section~\ref{sec_3}, the skewness-inducing correction we propose relies on a global notion of central symmetry. Therefore, it is unclear whether re-distributing the density with respect to such a notion translates into empirical gains in characterizing local asymmetries within the marginals, which are often of interest in practice \citep[e.g.][]{Chopin_2017,inla_paper,tierney1986accurate}.

To address this goal, we simulate data $\by_{1:n}$ for growing sample size $n=200,400,600,800,1000,$ and dimension \smash{$d= 6  \lfloor \sqrt{n/2} \rfloor$}, from a Poisson regression model with the generic response variable $y_i$ having rate $\exp(\bx_i^\intercal \btheta_0)$, $i=1, \ldots, n$. In this simulation, $\bx_i$ denotes a $d$-dimensional vector with unbalanced binary entries drawn independently from Bernoulli variables having low probabilities generated by a uniform in $[0.02,0.1]$, while $\btheta_0$ has entries simulated from independent uniforms in $[-0.5,0.5]$, and further rescaled to control the variance of the linear predictor as $d$ grows with $n$. Extending the formulation considered in Section~\ref{sec_simu_uni}, we model these data under a Bayesian  Poisson regression with rates $\exp(\bx_i^\intercal \btheta)$, for  $i=1, \ldots, n$, and assume independent Student-$t$ priors with one degree of freedom for each $\theta_j$, $j=1, \ldots, d$.

\begin{table}[t]
	\caption{Relative accuracy improvement (in percentage) achieved by each  skew-symmetric approximation with respect to three symmetric counterparts, corresponding to the Gaussians from the  Laplace method, black-box VB and EP. This improvement is quantified, for varying sample sizes   $n=200,400,600,800,1000,$ under four different measures, namely the bias in approximating the posterior means of the standardized parameters, and the TV,  KL and reverse-KL  divergences between the target posterior marginals and the corresponding approximation, for each $j=1, \ldots, d$ (with $d =  6  \lfloor \sqrt{n/2} \rfloor$). Standardization involves dividing the parameters by the standard deviations of the associated exact posterior. As such these standard deviations are estimated via Monte Carlo using 40{,}000 samples from the target posterior produced by HMC via \texttt{rstan}. The relative improvement is computed under the four measures analyzed for every $j=1, \ldots, d$. The table displays the median relative improvement over $j$, for each  $n=200,400,600,800,1000,$ averaged across 10 replicated experiments. The quantity $\%$ \textsc{improv} corresponds, instead, to the percentage of marginals for which the proposed skew-symmetric approximation improves the Gaussian counterpart with respect to the associated accuracy measure. This percentage is averaged across $n$. }
	\label{tab_high_marg_0} 
	\centering %
\begin{tabular}{l|ccccc|c}
	\hline
	& $n=200$ & $n=400$ & $n=600$ & $n=800$ & $n=1000$  & \\ 
		& $(d=60)$ & $(d=84)$ & $(d=102)$ & $(d=120)$ & $(d=132)$   & $\%$ \textsc{improv} \\ 
	\hline
	Laplace  (Skew-sym. vs. Gaus.) & &&   & && \\ 
		\hfill TV &  $12.9\%$ & $11.1\%$& $10.3\%$ & $9.2\%$ & $9.5\%$& $100\%$  \\    
		\hfill KL &  $28.0\%$ & $22.7\%$  & $20.2\%$ &  $18.6\%$ &$17.3\%$&$99.8\%$\\  
		\hfill reverse-KL &  $37.1\%$ & $39.2\%  $& $40.7\%$ &  $40.9\%$ & $41.2\%$& $100\%$ \\        
		\hfill bias &  $32.8\%$ & $34.3\%  $& $34.0\%$ &  $34.8\%$ &$35.3\%$&$99.8\%$\\        
				\hline  
	black-box VB  (Skew-sym. vs. Gaus.) & &&   & && \\   
	\hfill TV  &  $14.1\%$ & $12.3\%  $ & $12.0\%$ &  $12.7\%$& $11.6\%$&$100\%$\\    
	\hfill KL &  $31.6\%$ & $25.7\%  $ & $22.0\%$ &  $20.9\%$&$19.1\%$& $100\%$\\  
\hfill reverse-KL&  $42.4\%$ & $42.2\%  $ & $41.0\%$ &  $43.7\%$&41.5\%& $100\%$\\   
	\hfill bias &  $55.2\%  $& $48.7\%$ &  $51.2\%$ &$56.1\%$&$50.8\%$& $95\%$\\             
				\hline  
		EP (Skew-sym. vs. Gaus.)  &  &  & & &&  \\ 
		\hfill TV  &  $17.3\%$ & $16.1\%  $ & $16.7\%$ &  $16.7\%$&$16.8\%$& $100\%$\\    
		\hfill KL &  $33.1\%$ & $28.7\%  $ & $26.4\%$ &  $24.9\%$& $24.1\%$& $100\%$\\  
				\hfill reverse-KL &  $43.3\%$ & $44.6\%  $ & $46.6\%$ &  $46.6\%$& $46.9\%$& $100\%$\\    
					\hfill bias &  $24.7\%  $& $31.9\%$ &  $30.6\%$ &$32.3\%$&$36.8\%$& $86\%$\\                
	  \hline
\end{tabular}
\end{table}

Table~\ref{tab_high_marg_0} quantifies, for each sample size  $n$, whether the posterior marginals $\pi_n({\theta_j})$, $j=1, \ldots, d$, resulting from the above Bayesian model are more accurately approximated by those of the three Gaussians considered in Section~\ref{sec_simu_uni} (i.e.\ $\bar{q}_{n,{\btheta}^*}(\theta_j)$, $j=1, \ldots, d$), or by the corresponding skew-symmetric perturbations (i.e.\ ${q}_{n,{\btheta}^*}(\theta_j)$,  $j=1, \ldots, d$). This is accomplished by computing, for every $n$ and $j$, the relative improvement (in percentage) achieved by the marginals of the proposed skew-symmetric approximation with respect to those of the Gaussian counterparts, under four accuracy measures. These are the TV, KL and reverse-KL divergences between the target posterior marginals and the corresponding approximations,  along with the bias in estimating the posterior expectations of the standardized parameters (see the caption of Table~\ref{tab_high_marg_0} for details on the standardization). This bias is useful to assess if the accurate characterization of the marginal densities translates also into practical improvements in approximating functionals of key interest for inference. Note that, unlike in Section~\ref{sec_simu_uni} (where the focus is on one-dimensional settings), all the performance measures discussed above are estimated here via Monte Carlo leveraging 10{,}000 i.i.d.\ draws from each of the approximations studied, while considering $4$ chains of size $10{,}000$ produced by HMC (via \texttt{rstan}) as benchmark samples from the target posterior $\pi_n(\btheta)$.

 Table~\ref{tab_high_marg_0} displays the medians of the above relative accuracy gains across $j$, further averaged over 10 replicated experiments. The results show that, albeit tested in a more challenging setting, the proposed skew-symmetric approximation still showcases systematic improvements over the Gaussian counterparts. Such gains are non-negligible for all the measures, and particularly remarkable for the reverse-KL and bias. Interestingly, the ability to reduce both the KL and the reverse-KL divergences clarifies that the proposed perturbation is able to improve the original Gaussians in both mode seeking and mass covering  behaviors, leading also to reductions in bias. According to $\%$ \textsc{improv}, these improvements are again systematic across marginals, sample sizes and approximation methods. Notice that, the $86\%$ associated with the bias of EP should be interpreted as a remarkable result. In fact, EP aims, by design, at providing Gaussian approximations centered at the actual posterior mean. As such, its skewed perturbation may not yield, in principle, noticeable improvements when the focus is specifically on the bias. Table~\ref{tab_high_marg_0} shows that, in practice, the skew-symmetric approximation provides clear gains also under this measure, likely due to a positive effect in removing possible biases that arise from EP optimization schemes.

Section~\ref{sec_4_app} clarifies that the  improvements achieved by the skew-symmetric approximations in  simulations are found also in real-data applications, both in relative and absolute terms.

\section{\large 5. Real-data applications} \label{sec_4_app}
We conclude by showcasing the practical gains of the proposed skew-symmetric approximation in two real-data applications requiring realistic hierarchical Bayesian implementations. The first (see Section~\ref{sec_zip}) shows that relative improvements similar to those presented in the simulation study in Section~\ref{sec_simu_high} can be obtained also under a more sophisticated zero-inflated negative binomial regression aimed at modeling school attendance behaviors leveraging the dataset \texttt{Attendance} in the \texttt{R} library \texttt{mixpoissonreg}. The second (see Section~\ref{sec_logistic}) considers, instead, a semi-parametric hierarchical logistic regression model to infer the determinants of contraceptive usage behaviors in 33 indian states based on data retrieved from the HDS II survey \citep[see, e.g.][]{rigon2019bayesian}. In this latter application $n=30{,}524$ and $d=62$, thereby providing an interesting setting to study also the runtimes of inference under the proposed skew-symmetric approximation, in addition to its gains with respect to the Gaussian counterparts in absolute terms.

Consistent with the simulation studies in Section~\ref{sec_4}, the above gains are again quantified with a focus on the skew-symmetric perturbations for the Gaussian approximations from the Laplace method \citep[][Chapter 13]{gelman1995bayesian}, black-box VB \citep[][]{kucukelbir2017automatic, ranganath2014black} and EP \citep[][]{minka2013expectation,vehtari2020expectation}, leveraging the four  accuracy measures in Table~\ref{tab_high_marg_0} (i.e.\  TV,   KL, reverse-KL and bias). As for the analyses within Section~\ref{sec_simu_high}, also in the two applications these measures are estimated via Monte Carlo based on  10{,}000 samples drawn from each of the approximations studied. Inference under the target posterior, which provides the benchmark to assess the accuracy of the approximations analyzed, leverages instead 4 chains of length 10{,}000 under HMC from the \texttt{R} library \texttt{rstan}. This library is also employed for obtaining the Laplace and black-box VB  approximations, via the functions \texttt{optimizing} and \texttt{vb}, respectively. The EP solution is instead implemented via a custom-built \texttt{R} code. Once these three symmetric approximations are available, the corresponding skew-symmetric counterparts can be obtained as in Definition~\ref{def_1}, and i.i.d.\ sampling from these skewed corrections proceeds via Algorithm~\ref{alg1}.

In interpreting the results in Sections~\ref{sec_zip}--\ref{sec_logistic} (along with those in Sections~\ref{sec_simu_uni}--\ref{sec_simu_high})  it is important to emphasize that the proposed  skewness-inducing mechanism  is constrained to redistribute the density of the symmetric approximation only among  pairs $\{\btheta, 2\btheta^* - \btheta\}$, $\btheta \in \Theta$. Thus, the systematic improvements achieved by the  skew-symmetric approximations in  Tables~\ref{tab_0_pois}--\ref{tab2_pois_neg} and Figure~\ref{figure:post} are arguably remarkable when accounting for the simplicity of the proposed perturbation strategy. These results suggest future research to devise even more sophisticated  skewness-inducing mechanisms redistributing the density at more flexible configurations than  $\{\btheta, 2 \btheta^* - \btheta\}$, $\btheta \in \Theta$. Such extensions should, however, account for the tractability of the resulting approximation. Recalling Section~\ref{sec_32} and Remark~\ref{rema1}, this is a key property of the  skewed perturbation we propose, which is fundamental to facilitate its broad  applicability. To extend this applicability even further, we  illustrate in the two real-data applications the use of  skew-symmetric approximations as improved proposals in importance sampling \citep[see, e.g.][Chapter 8]{chopin2020introduction}. This perspective yields remarkable gains in the effective sample sizes (ESS) relative to those obtained when leveraging the Gaussian counterparts as proposals.

\subsection{\large 5.1 Zero-inflated negative binomial regression} \label{sec_zip}
As a first application let us consider a zero-inflated  negative binomial regression \citep[e.g.][]{lawal2012zero} applied to a study on school attendance of $n=314$ students. The dataset is available within the \texttt{R} library \texttt{mixpoissonreg} and comprises, for each student $i$, information on the number of days of absence at school $y_i$, along with two categorical explanatory variables $x_{i1}$ and $x_{i2}$ corresponding, respectively, to the gender of the student (\texttt{female/male}) and the instructional program in which such a student is enrolled (\texttt{General/Academic/Vocational}). To learn how the school attendance behaviors relate to these two variables, while accounting for the excess of zeros and overdispersion in the observed responses $y_1, \ldots, y_n$, we consider a zero-inflated negative binomial regression \citep[e.g.][]{lawal2012zero}. This model incorporates gender and instructional program effects both in the zero inflation probabilities $\psi_i$, and also  in the negative binomial expected counts $\mu_i$. In particular, denoting with $\mathds{1}({\cdot})$ the indicator function, we let
\begin{equation*}
\begin{split}
\mbox{logit}(\psi_i)&=\alpha_0+\alpha_1\mathds{1}(x_{i1}=\texttt{male})+ \alpha_2\mathds{1}(x_{i2}=\texttt{Academic})+\alpha_3 \mathds{1}(x_{i2}=\texttt{Vocational}),\\
\log \mu_i&=\beta_0+\beta_1 \mathds{1}(x_{i1}=\texttt{male})+ \beta_2 \mathds{1}x_{i2}=\texttt{Academic})+\beta_3 \mathds{1}(x_{i2}=\texttt{Vocational}),
\end{split}
\end{equation*}
 for every unit $i=1, \ldots, n$,  and assume a constant overdispersion $\exp(\gamma)$, with $\gamma \in \mathbb{R}$. In order to perform Bayesian inference on the parameters \smash{$\btheta=(\gamma,\alpha_0,\alpha_1,\alpha_2,\alpha_3,\beta_0,\beta_1,\beta_2,\beta_3) \in \mathbb{R}^9$} we rely on independent Gaussian priors having mean $0$ and variance $2$, and then assess in Table~\ref{tab2_pois_neg} whether the proposed skewed perturbation improves the Gaussian alternatives from the Laplace method \citep[e.g.][Chapter 13]{gelman1995bayesian}, black-box VB \citep[][]{kucukelbir2017automatic, ranganath2014black} and EP \citep[][]{minka2013expectation,vehtari2020expectation} in approximating the resulting posterior. Note that, unlike for the simulations in Sections~\ref{sec_simu_uni}--\ref{sec_simu_high}, we consider here Gaussian priors rather than Student-$t$ ones. This is motivated by the aim to assess our skew-symmetric approximations under several elicitations of routine use, including those relying on Gaussian priors.

\begin{table}[t]
	\caption[]{Relative accuracy improvement (in percentage) achieved by each skew-symmetric approximation with respect to three symmetric counterparts, corresponding to the Gaussians from the  Laplace method, black-box VB and EP. This improvement is quantified under four different measures, namely the bias in approximating the posterior means of the standardized parameters, and the TV,   KL and reverse-KL  divergences between the target posterior marginals and the corresponding approximation.  Standardization proceeds as discussed in the caption of Table~\ref{tab_high_marg_0}.  The relative gain is computed under these four measures for every $j=1, \ldots, 9$. The table displays the median relative improvement over $j$. The quantity $\%$ \textsc{improv} corresponds, instead, to the percentage of marginals for which the skew-symmetric approximation improves the Gaussian counterpart, averaged across the accuracy measures.}
\label{tab2_pois_neg}	
	\centering %
	{%
\begin{tabular}{l|cccc|c}
	\hline
	& TV & KL & reverse-KL & bias &$\%$ \textsc{improv}  \\ 
	\hline
	Laplace  (Skew-sym. vs.\ Gaus.) &  $22.0\%$ & $39.5\%  $ & $47.5\%$ &  $29.5\%$& $100\%$\\   
	black-box VB  (Skew-sym. vs.\ Gaus.) \qquad& $22.0\%$ & $46.1\%  $ & $33.4\%$  & $26.8\%$& $96\%$\\   
		EP (Skew-sym. vs.\ Gaus.)  & $21.2\%$ & $36.3\%  $ & $34.2\%$ & $27.0\%$ &  $90\%$ \\ \hline
\end{tabular}}
\end{table}

Consistent  with the simulations in Section~\ref{sec_4},  Table~\ref{tab2_pois_neg} demonstrates that the proposed skew-symmetric approximations systematically outperform the corresponding symmetric counterparts for all the measures considered also in this realistic real-data application. Recalling the remarks before Section~\ref{sec_zip}, these relative gains are particularly remarkable, especially in the light of the simplicity of the perturbation strategy employed, and can be further strengthened by studying its use as an improved proposal within importance sampling targeting the joint posterior density $\pi_n(\btheta)$ \citep[e.g.][Chapter 8]{chopin2020introduction}. Such an additional assessment is not only useful to clarify the broad applicability of the proposed skew-symmetric approximation in Bayesian computation, but also complements the analyses  within  Table~\ref{tab2_pois_neg}. In fact, the magnitude of the gains in the ESS  \citep[e.g.][Chapter 8.6]{chopin2020introduction} relative to those obtained under the unperturbed Gaussian proposals are useful for quantifying the improvements of the proposed skew-symmetric approximation in characterizing the shape of the entire target posterior density $\pi_n(\btheta)$, beyond its marginals. The percentage gains in the ESS, estimated in 100 replicated studies, are $90.95\%$ (sd:\ $10.10\%$), $87.25\%$ (sd:\ $7.37\%$) and $89.91\%$ (sd:\ $6.57\%$) for the skew-symmetric perturbations of the Gaussian proposals from the Laplace method, black-box VB and EP, respectively. This  means that replacing such Gaussian proposals with the corresponding skewed counterparts essentially doubles the ESS, at negligible computational costs. In fact, in importance sampling, the multiple likelihood evaluations we require to compute the skewing factor are also necessary under the Gaussian proposals to evaluate the importance weights. These findings strengthen the impact of our skew-symmetric approximation, and motivate further research on its use  within importance sampling and its extensions.

\subsection{\large 5.2 Hierarchical semi-parametric logistic regression} \label{sec_logistic}
Let us conclude by illustrating the accuracy gains of skew-symmetric approximations in a demographic application with  $n=30{,}524$ and $d=62$. The data are retrieved from the HDS II survey whose focus, among others, is to study women contraceptive choices in 33 indian states. Recently, this dataset has been analyzed by \citet{rigon2019bayesian} via a Bayesian semi-parametric sequential logistic regression to investigate how different demographic factors affect the probability of four different contraceptive choices (including \texttt{no contraceptive}). In the following, we focus instead on the binary decision on whether to use or not contraceptives as a function of state of residence,  age, education (factor with four levels), religion (factor with four levels) and area (urban or rural).  

Let $\bx_{ik}^\intercal$ be the 8-dimensional vector comprising the binary encoding of the variables education, religion and area for the $i$-th woman within state $k=1, \ldots, K$. Consistent with the hierarchical structure of the data, we assume that the binary decision $y_{ik} \in \{0,1\}$ to use or not contraceptives is modeled, for each woman $i = 1,\dots,n_k$ in state $k = 1,\dots,K$, through a  Bernoulli  variable with probability $\psi_{ik}$ defined as
\begin{equation*}
\mbox{logit}(\psi_{ik}) =\alpha_k + g(\mbox{age}_{ik})+\bx_{ik}^\intercal \bbeta, \qquad \mbox{for every} \quad i = 1,\dots,n_k, \quad \mbox{and} \quad k = 1,\dots,K,
\end{equation*} 
where $n_k$ denotes the number of observations in state $k$, $\alpha_k$ is a state-specific intercept, $\bbeta \in \mathbb{R}^8$ denotes a vector encoding the fixed effects, while $g(\cdot)$ is a flexible functional effect for the variable age. Following \citet{rigon2019bayesian}, $g(\cdot)$  is defined via a linear combination of B-spline functions $B_m(\cdot)$, $m = 1,\dots,M$, with coefficients vector $\bgamma = (\gamma_1,\ldots,\gamma_M)^\intercal \in \smash{\mathbb{R}^M}$. Bayesian inference under this model proceeds via suitably defined priors for the $d=62$ parameters in $\btheta=(\bbeta,\balpha,\bgamma)$. In particular,  we assume improper uniform priors for the entries in $\bbeta$, and independent Student-$t$ with 3 degrees of freedom for  the $K$ intercepts within $\balpha$ and the $M$ entries of \smash{$\bSigma^{-1/2}\bgamma $}, where $\bSigma$ is  specified as in  \citet{rigon2019bayesian} to enforce similar coefficient values for contiguous splines. Note that, as discussed in Section~\ref{sec_simu_uni}, the choice of Student-$t$ coincides with the realistic assumption of Gaussian priors for the parameters combined with $\chi_3^2$ hyperpriors for the corresponding precisions.

\begin{figure}[t]
\centering
   \includegraphics[scale=0.7]{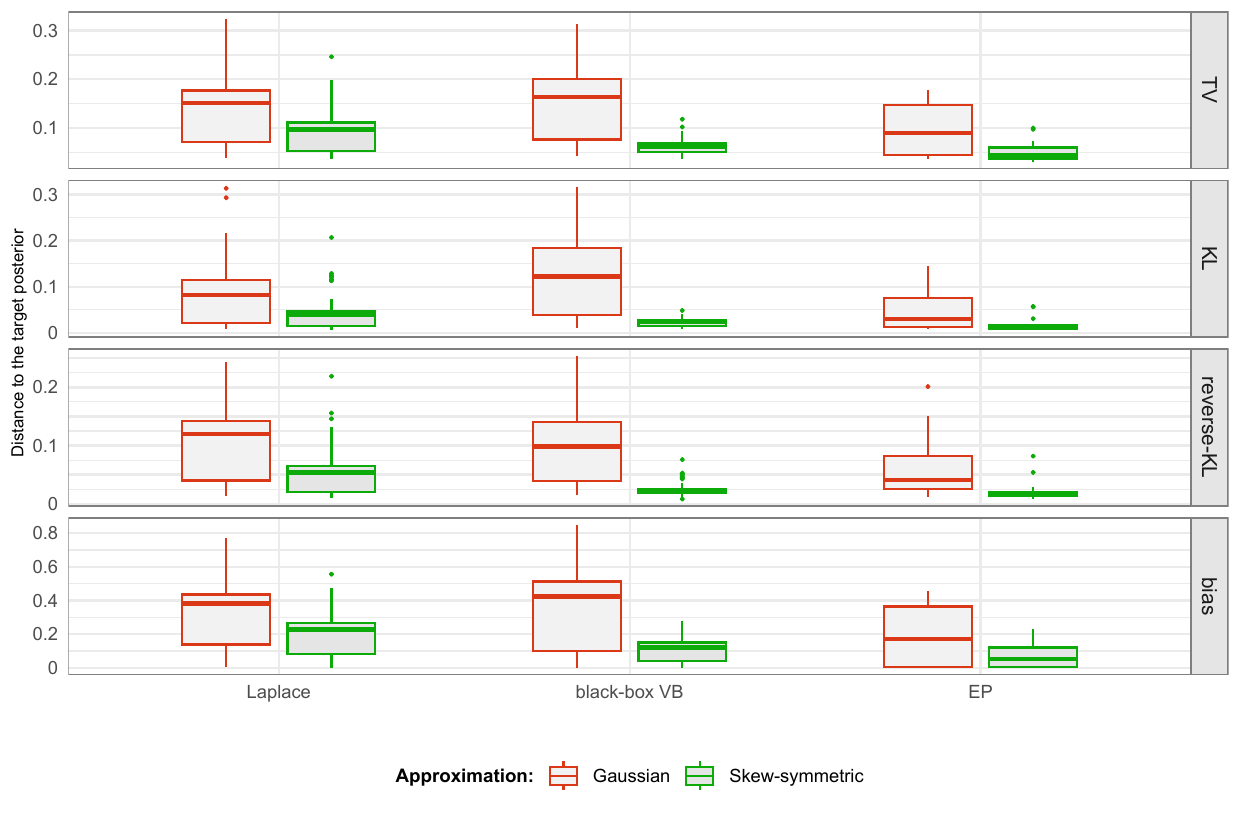}
		\caption{\footnotesize Empirical comparison of the accuracy achieved by three state-of-the-art Gaussian approximations from the Laplace method, black-box VB and EP, versus the corresponding skew-symmetric perturbations. For three routinely-employed divergences $\mathcal{D}$ (TV,  KL, reverse-KL) the first three panels display the boxplots of $\mathcal{D}(\pi_{j,n} \mid \mid q_{j,n})$, $j=1, \ldots, 62$, where $q_{j,n}$ is the $j$th marginal of either $\bar{q}_{n,{\btheta}^*}$ (Gaussian) or ${q}_{n,{\btheta}^*}$ (Skew-symmetric). The fourth panel shows instead the boxplot of the absolute differences between the  approximated and actual posterior means of the $d=62$ standardized parameters (standardization proceeds as discussed in the caption of Table~\ref{tab_high_marg_0}). 
 }
		\label{figure:post}
\end{figure}

Figure~\ref{figure:post} extends the comparisons in Section~\ref{sec_zip} by studying the accuracy in both relative and absolute terms of the skew-symmetric approximations over the unperturbed Gaussian counterparts from the   Laplace method, black-box VB and EP, with focus on the marginals of the posterior induced by the above model. More specifically, the first three panels within Figure~\ref{figure:post} display the boxplots of the estimated $\mathcal{D}(\pi_{j,n} \mid \mid q_{j,n})$, $j=1, \ldots, 62$, where $q_{j,n}$ is the $j$th marginal of either $\bar{q}_{n,{\btheta}^*}$ (Gaussian) or  ${q}_{n,{\btheta}^*}$ (Skew-symmetric), and $\mathcal{D}$ corresponds to the TV,  KL and reverse-KL divergences, respectively. The fourth panel shows instead the boxplot of the absolute differences between the  approximated and actual posterior means of the $d=62$ standardized parameters. Similarly to the empirical results in Sections~\ref{sec_simu_uni}--\ref{sec_simu_high} and \ref{sec_zip}, correcting for asymmetry yields also in this case approximations that systematically outperform the symmetric counterparts on all  accuracy measures. Comparing the medians of the boxplots in Figure~\ref{figure:post}, these improvements are, generally, by a multiplicative factor of 2 or more, and provide evident gains also in absolute terms. For instance, since the bias is computed on  standardized parameters, achieving systematic reductions of $0.2$ or more via simple perturbations is a remarkable result in absolute terms. 

Notice also that, among the deterministic approximations in Figure~\ref{figure:post}, the skewed perturbation of EP provides the most competitive strategy in terms of overall accuracy across the different summaries. Recalling, e.g.\ \cite{vehtari2020expectation}, EP already provides highly-accurate deterministic approximations in its standard Gaussian form. As a result, perturbing these symmetric densities through the proposed skewing factor yields even more accurate  skewed approximations. This point is also confirmed by the analysis of the relative improvements in the ESS when leveraging the skew-symmetric approximations instead of the  Gaussian counterparts as proposals in importance sampling. In this case, the percentage gains in the ESS, averaged across 100 replicated studies, are $76.88\%$ (sd:\ $9.25\%$), $59.88\%$ (sd:\ $6.90\%$) and $87.59\%$ (sd:\ $7.17\%$) for the skewed perturbations of the Gaussian proposals from the Laplace method, black-box VB and EP, respectively. 

Notably, the above improvements are obtained at negligible computational costs. In this application with $n=30{,}524$ and $d=62$, the runtime of Monte Carlo inference via $10{,}000$ samples from Algorithm~\ref{alg1}  is approximately $8$ seconds on a standard laptop, without parallel implementations.

\section{\large 6. Discussion} \label{sec_5}
This article introduces a novel and general strategy to perturb any given symmetric approximation of a generic posterior density for obtaining an improved, yet similarly tractable, skew-symmetric counterpart. Such an approximation is shown to improve, both theoretically and practically, the accuracy of the symmetric density which is perturbed. Unlike recently developed deterministic approximations based on generalizations of skew-normal distributions, the proposed solution (i) applies to generic posterior densities and to any symmetric approximation of such densities, from e.g.\ Laplace, VB and EP, (ii) does not imply additional optimization costs relative to those required for obtaining the original symmetric approximation to be improved, (iii) is substantially simpler and can be applied directly to any output of state-of-the-art softwares yielding symmetric approximations of posterior densities, (iv) has strong theoretical guarantees in terms of accuracy, and (v) achieves systematic empirical gains over the unperturbed symmetric counterparts in both simulation studies and real-data applications. Such advantages, combined with the limited computational burden of inference under these skew-symmetric approximations (see Sections~\ref{sec_32}--\ref{eff_ev} and \ref{sec_logistic}), motivate extensive use of this novel solution, especially in situations where the posterior density displays non-negligible skewness and the symmetric approximation to be perturbed provides an accurate characterization of the symmetrized posterior. 

Recalling Section~\ref{sec_32}, our proposal arises from a yet unexplored skew-symmetric representation of posterior densities in Proposition~\ref{prop1} that is of independent interest and stimulates several directions of future research. A promising one is discussed  in Remark~\ref{re2} and refers to the case in which also the symmetric approximating density  $\bar{q}_{n,{\btheta}^*}(\btheta)$  is unknown and part of the  optimization problem formalized in Corollary~\ref{cor_1}. This extension, combined with our results in Section~\ref{sec_3},  supports a change of perspective that suggests to focus on symmetrized posterior densities defined  in~\eqref{eq1}, rather than on the original posterior, as the target of symmetric approximations. To the best of our knowledge, such a perspective has not been  considered so far, thus stimulating active research motivated by the novel questions associated with this task. For instance,  this would require extracting a suitable symmetric component from the target posterior density, which is both as close as possible to such a posterior and can be also accurately approximated by a tractable symmetric density whose perturbation yields the final skew-symmetric approximation of ${\pi}_{n}(\btheta)$. A related problem can be found in classical, yet overlooked,  literature aimed at improving estimators of empirical distribution functions for symmetric densities   \citep{hinkley1976estimating,lo1985estimation,schuster1975estimating,schuster1987identifying}. Inheriting these results within our framework for deterministic approximations provides a promising direction to address the objectives discussed in Remark~\ref{re2}. 

The above perspective is also useful to further extend the theory in Section~\ref{sec_34} on the asymptotic accuracy of skew-symmetric approximations. In particular, as a consequence of the proof of Theorem~\ref{teo_3}, a natural  direction is to show that these solutions can be made arbitrarily accurate via higher order approximations only for the symmetrized component.
From a more practical perspective, another key direction is to illustrate the empirical performance of the proposed skew-symmetric perturbation when applied to other symmetric approximations and in more complex hierarchical formulations beyond those considered in Sections~\ref{sec_4}--\ref{sec_4_app}. Notice that our focus in Sections~\ref{sec_4}--\ref{sec_4_app} on Gaussian approximations arising from commonly adopted Laplace, black-box VB and EP methods is motivated by the attempt to stimulate implementation of the proposed skew-symmetric approximation in routinely used softwares. Nonetheless, showcasing the empirical gains of our proposal within the broader context of symmetric approximations discussed in Section~\ref{sec_2} (e.g.\ ALA, INLA, $t$-exponential approximating families and delta-method variational inference), is a natural direction of future research which could further boost the impact of our contribution. Recalling Section~\ref{sec_2}, it is also of interest to explore the benefits of the proposed  skew-symmetric approximation not only as a direct perturbation applied to the symmetric outputs of such methods, but also as a building-block within the corresponding optimization routines. For instance, as shown by a recent contribution  \citep{kock2025variational} appeared after our article and leveraging our results, the skew-symmetric approximation we derived could be incorporated directly within the algorithms of stochastic black-box extensions of VB \citep{ranganath2014black}, including ADVI  \citep{kucukelbir2017automatic}, without affecting the tractability of the stochastic gradient step, while enlarging (and hence improving) the class of approximating distributions. This is because the density of the skew-symmetric approximation is available in closed form, and i.i.d.\ samples from it can be obtained via a direct perturbation of those from the symmetric counterparts routinely used in black-box implementations of VB. Such a tractability could also motivate the use  of the proposed skewed perturbation to improve the initial Gaussian densities often employed in variational inference with normalizing flows \citep[e.g.][]{rezende2015variational}, while stimulating  research to assess the benefits of this idea.  Finally, the remarkable improvements in the effective sample size achieved by the skew-symmetric solution in Section~\ref{sec_4_app}, motivate its future use beyond the context of deterministic approximations, with a particular focus on importance sampling and its extensions  \citep[][]{chopin2020introduction}.

Recalling the discussions and results in Sections~\ref{sec_32}, \ref{eff_ev} and \ref{sec_logistic}, the computational burden associated with the proposed skew-symmetric approximation is not a major source of concern in practice, and can be further reduced through parallel implementations both across the i.i.d.\ samples and likelihood factors. Nonetheless, future research exploring ideas to further reduce the cost of the likelihood evaluations (both in the sample size $n$ and dimension $d$) is of interest. Data subsampling could be a direction, but naive implementations of this idea would not be compatible with the skew-symmetric representation of the posterior  induced by the entire sample, and hence, developing a principled solution  requires a careful treatment.

\section{\bf Acknowledgments}
The authors are grateful to the Editor, the Associate Editor and the referees for the constructive suggestions, which helped in improving the initial version of this article. In addition, the authors would like to thank Giacomo Zanella for the useful feedbacks on the ideas and methods developed in this article.
\\
\\

\section{\bf Funding}
Francesco Pozza is funded by the European Union (ERC, PrSc-HDBayLe, n.\ 101076564). Botond
Szabo is co-funded by the European Union (ERC, BigBayesUQ, n.\ 101041064). Views and opinions expressed are however those of the author(s) only and do not necessarily reflect those of the European
Union or the European Research Council Executive Agency. Neither the European Union nor the granting
authority can be held responsible for them. Daniele Durante acknowledges the support by the MUR-PRIN
2022 project “CARONTE” (2022KBTEBN, funded~by~the European Union -- Next Generation EU, Mission 4, CUP: J53D23009400001).

\section{\bf Data availability}
Data and codes are available at \url{https://github.com/Francesco16p/SkewAppr_Post}.

\vspace{40pt}
\begin{center}
{\Large \bf Supplementary Material to ``Skew--symmetric approximations of posterior distributions''}
\end{center}

\setcounter{section}{0}
\renewcommand{\thesection}{\Alph{section}}
\numberwithin{equation}{section}

\section{\large A. Proofs}\label{Proof}
Appendix \ref{Proof} contains the proofs of Lemma \ref{lemma_1} and Theorems \ref{teo_1}--\ref{teo_3}. Those of Proposition \ref{prop1}--\ref{prop2} and Corollary~\ref{cor_1} are direct, and discussed in the article. In the  proofs of Lemma~\ref{lemma_1}~and~Theorems~\ref{teo_1}--\ref{teo_2} we  assume, without loss~of~generality, that $\btheta^* = {\bf 0}$. Moreover, for notational convenience~we~omit $\btheta^*$~from~$\bar{\pi}_{n,{\btheta}^*}(\btheta)$,~$\bar{q}_{n,{\btheta}^*}(\btheta)$,  ${q}_{n,{\btheta}^*}(\btheta)$ and~${w}_{n,{\btheta}^*}(\btheta)$, thereby focusing on  $\bar{\pi}_{n}(\btheta)$, $\bar{q}_{n,}(\btheta)$,  ${q}_{n}(\btheta)$~and ${w}_{n}(\btheta)$, respectively. Furthermore, let us consider the partition $\Theta = \Theta_{+} \cup \Theta_{-}$, where $\btheta \in \Theta_{+}$ implies~$-\btheta \in \Theta_{-}.$ 

\label{sec_proofs}
\subsection{\large A1. Proof of Lemma \ref{lemma_1} } \label{appendix:lemma_1}
\begin{proof}.
Let us first prove Lemma \ref{lemma_1} under the  $\alpha$--divergence $ \mathcal{D}_{\alpha}(\cdot || \cdot)$ for some fixed $\alpha\in\mathbb{R}\backslash\{0,1\}$. In view of the skew--symmetric representation  \eqref{eq2} of the posterior distribution $\pi_n(\btheta)$, combined with the fact that both $\bar{\pi}_{n}(\btheta)$ and $\bar{q}_n(\btheta)$ are symmetric and  $w_n(-\btheta) = 1-w_n(\btheta)$ , $0\leq w_n(\btheta) \leq1$,  we obtain 
	\begin{equation} \label{help:dv:sym:1}
		\begin{aligned}
			&\mathcal{D}_{\alpha}[\pi_n \mid \mid \bar{q}_n] -	\mathcal{D}_{\alpha}[\bar{\pi}_{n} \mid \mid \bar{q}_n]  = \frac{1}{\alpha (1-\alpha)} \int \big[ \bar{\pi}_{n}(\btheta)^{\alpha}  \bar{q}_n(\btheta)^{1-\alpha}  - \{2\bar{\pi}_{n}(\btheta)w_n(\btheta)\}^{\alpha}   \bar{q}_n(\btheta)^{1-\alpha} \big] d\btheta\\
		& \qquad \qquad \qquad \qquad\quad	=  \int_{\Theta_{+}} \frac{1}{\alpha (1-\alpha)} \big[  2 - \{2w_n(\btheta)\}^{\alpha}  -\{2(1-w_n(\btheta))\}^{\alpha} \big]  \bar{\pi}_{n}(\btheta)^{\alpha}   \bar{q}_n(\btheta)^{1-\alpha}  d\btheta.
		\end{aligned}
	\end{equation}
	To proceed with the proof, note that the  first two derivatives of the function $k\mapsto \kappa(k) =~\big[ 2-(2k)^{\alpha} - \{2(1-k)\}^{\alpha}\big]/\{\alpha (1-\alpha)\}$ are
	$2 \{ -(2k)^{\alpha-1} + \{2(1-k)\}^{\alpha-1}\}/(1-\alpha)$ and $ 4 \{ (2k)^{\alpha-2} + \{2(1-k)\}^{\alpha-2}\}$, respectively. Observe that the first derivative is zero  if and only if $k= 0.5$, whereas  the second derivative is positive~for $k \in\big( 0,1\big)$, implying that $k = 0.5$ is a point of minimum. Since $\kappa(0.5) = 0$ and $\bar{\pi}_{n}(\btheta)^{\alpha}  \bar{q}_n(\btheta)^{1-\alpha}>0$ the last integral in \eqref{help:dv:sym:1} is non--negative. This implies  $\mathcal{D}_{\alpha}[\pi_n \mid \mid  \bar{q}_n] -	\mathcal{D}_{\alpha}[\bar{\pi}_{n} \mid \mid  \bar{q}_n] \geq 0$, and hence
	\begin{equation} \label{alpha:div:ineq:1}
		\mathcal{D}_{\alpha}[\bar{\pi}_{n} \mid \mid  \bar{q}_n] \leq \mathcal{D}_{\alpha}[\pi_n \mid \mid  \bar{q}_n],
	\end{equation}   
	for any $\alpha\in\mathbb{R}\backslash\{0,1\}$.

To conclude the analysis under the $\alpha$--divergence let us  study the two limiting cases $\alpha \to 1$ and $\alpha \to 0$. To this end, recall that, for two generic densities $p(\btheta)$ and $q(\btheta)$, we have $\lim_{\alpha \to 1}\mathcal{D}_{\alpha}[p\mid \mid q ] = \textsc{kl}[p\mid \mid q ] $ and $\lim_{\alpha \to 0}\mathcal{D}_{\alpha}[p\mid \mid q] = \textsc{kl}[q \mid \mid p ] $ \citep[e.g.,][]{cichocki2010families}. As a consequence, by exploiting again $w_n(-\btheta) = 1-w_n(\btheta)$, $0\leq w_n(\btheta) \leq1$  and the symmetry of $\bar{\pi}_{n}(\btheta)$ and $\bar{q}_n(\btheta)$ we have
	\begin{equation*} \label{help:dv:sym:2}
		\begin{aligned}
			&\lim_{\alpha \to 1}( \mathcal{D}_{\alpha}[\pi_n \mid \mid \bar{q}_n] -	\mathcal{D}_{\alpha}[\bar{\pi}_{n} \mid \mid \bar{q}_n]) =  \textsc{kl}[\pi_n \mid \mid \bar{q}_n] -	\textsc{kl}[\bar{\pi}_{n} \mid \mid \bar{q}_n]  	\\
			&\qquad \qquad \qquad {=} 	 \int \Big[ 2 w_n(\btheta) \log [2 w_n(\btheta)]  \Big] \bar{\pi}_{n}(\btheta)d \btheta \\
			&\qquad \qquad \qquad {=} {\int_{\Theta_{+}}} \big[ 2 w_n(\btheta) \log [2 w_n(\btheta) ] +  2 (1- w_n(\btheta)) \log [ 2\{1- w_n(\btheta)\} ] \big]\bar{\pi}_{n}(\btheta)d \btheta.
		\end{aligned}
	\end{equation*}
Now, the first and second derivatives of the function $k\mapsto \kappa_1(k) =  2 k \log \big( 2 k \big) +  2 (1- k) \log [ 2(1- k )]$~are  $2[ \log(k) -  \log(1-k) ]$ and $2[ 1/k +1/(1-k) ]$, respectively. Since the second derivative is positive for every $k \in\big( 0,1\big)$ and the first derivative is zero if and only if $k= 0.5$, $\kappa_1(\cdot)$~has~a minimum in $k = 0.5$. Therefore, combining such a result with the fact that $\kappa_1(0.5) = 0$~and $\bar{\pi}_{n}(\btheta)>0$, implies that the last integral in the above expression is non--negative. Hence, $\textsc{kl}[\pi_n \mid \mid \bar{q}_n] -	\textsc{kl}[\bar{\pi}_{n} \mid \mid \bar{q}_n]  \geq 0$, which leads to
	\begin{equation} \label{alpha:div:ineq:2}
		 	\textsc{kl}[\bar{\pi}_{n} \mid \mid \bar{q}_n]\leq \textsc{kl}[\pi_n \mid \mid \bar{q}_n].
	\end{equation}
	
\noindent	Finally, if $\alpha\to 0$, it is possibile to apply a similar reasoning as above to obtain
	\begin{equation*} \label{help:dv:sym:3}
		\begin{aligned}
			\lim_{\alpha \to 0} \mathcal{D}_{\alpha}[\pi_n \mid \mid \bar{q}_n] -	\mathcal{D}_{\alpha}[\bar{\pi}_{n} \mid \mid \bar{q}_n]  &=  \textsc{kl}[\bar{q}_n \mid \mid \pi_n ] -	\textsc{kl}[\bar{q}_n\mid \mid \bar{\pi}_{n}]  \\
			&=  \int - \log( 2  w_n(\btheta)  )\bar{q}_n(\btheta) d \btheta\\
			&= \int_{\Theta_{+}} \big[- \log( 2  w_n(\btheta)  )-  \log( 2[1- w_n(\btheta)]  ) \big] \bar{q}_n(\btheta) d \btheta.
		\end{aligned}
	\end{equation*}
In this case, the	 function $k\mapsto \kappa_2(k) = -[ \log( 2  k  )+  \log( 2(1- k)  ) ] $, has a first derivative  $- 1/k + 1/(1-k) $ which is zero if and only if $k=0.5$, while the second derivative $ 1/k^2 + 1/(1-k)^2$ is positive for $k\in(0,1)$. Hence, the minimum of   $\kappa_2(\cdot)$ is $\kappa_2(0.5) = 0$. Such a result, together with the fact that $\bar{q}_n(\btheta)>0$, implies~that $\textsc{kl}[\bar{q}_n \mid \mid \pi_n ] -	\textsc{kl}[\bar{q}_n\mid \mid \bar{\pi}_{n}] \geq 0$, and therefore
	\begin{equation} \label{alpha:div:ineq:3}
			\textsc{kl}[\bar{q}_n\mid \mid \bar{\pi}_{n}] \leq \textsc{kl}[\bar{q}_n \mid \mid \pi_n ].
	\end{equation}
	The statement of Lemma \ref{lemma_1} regarding the $\alpha$--divergences, reverse--KL and KL follows by combining \eqref{alpha:div:ineq:1}, \eqref{alpha:div:ineq:2} and \eqref{alpha:div:ineq:3}.
	
	Let us conclude the proof by showing that the same inequality holds under the TV distance. Also in~this case, in view of the skew--symmetric representation  \eqref{eq2} of the posterior distribution $\pi_n(\btheta)$, combined with the fact that both $\bar{\pi}_{n}(\btheta)$ and $\bar{q}_n(\btheta)$ are symmetric and  $w_n(-\btheta) = 1-w_n(\btheta)$, $0\leq w_n(\btheta) \leq1$,  we obtain, leveraging the triangle inequality, that
	\begin{equation*}
		\begin{aligned}
			&2{\cdot}   \mathcal{D}_{\textsc{tv}}[\pi_n \mid \mid \bar{q}_n] - 2{\cdot}\mathcal{D}_{\textsc{tv} }[\bar{\pi}_{n}  \mid \mid \bar{q}_n]  \\
		 &\qquad = 	\int_{\Theta_{+}} \Big[ \big| 2\bar{\pi}_{n}(\btheta)w_n(\btheta) -  \bar{q}_n(\btheta) \big| +  \big| 2\bar{\pi}_{n}(\btheta)(1-w_n(\btheta)) -  \bar{q}_n(\btheta) \big| \Big] d \btheta - 2\int_{\Theta_{+}}  \big|\bar{\pi}_{n}(\btheta) - \bar{q}_n(\btheta) \big|d \btheta\\
		 &\qquad 	\geq  2 \int_{\Theta_{+}} \big| \bar{\pi}_{n}(\btheta) -  \bar{q}_n(\btheta) \big| d \btheta - 2\int_{\Theta_{+}} \big|\bar{\pi}_{n}(\btheta) - \bar{q}_n(\btheta) \big|d \btheta = 0.
		\end{aligned} 
	\end{equation*}
	Therefore, $  \mathcal{D}_{\textsc{tv} }[\bar{\pi}_{n} \mid \mid \bar{q}_n] \leq  \mathcal{D}_{\textsc{tv}}[\pi_n\mid \mid \bar{q}_n]   $, which concludes the proof.
\end{proof}


\vspace{5pt}
\subsection{\large A.2 Proof of Theorem \ref{teo_1}} \label{appendix:teo_1}
\begin{proof}.
	First notice that Lemma \ref{lemma_1} combined with the result in \eqref{skew:sym:equiv:sym} directly implies  \eqref{skew:sym:non:asymp}. Hence, we~are left to prove \eqref{skew:sym:equiv:sym}. To this end, let us first consider the TV distance $\mathcal{D}_{\textsc{tv}}({\cdot} {\mid} {\mid} {\cdot})$. Leveraging \eqref{eq2}--\eqref{skew:sym:post},~along~with $w_n(-\btheta) = 1-w_n(\btheta)$ , $0\leq w_n(\btheta) \leq1$, and the symmetry of $\bar{\pi}_{n}(\btheta)$ and $\bar{q}_n(\btheta)$ we~obtain
	\begin{equation*}
		\begin{aligned}
			\mathcal{D}_{\textsc{tv}}[\pi_n \mid \mid q_n]  &= \frac{1}{2}\int \big| 2\bar{\pi}_{n}(\btheta)w_n(\btheta) -  2 \bar{q}_n(\btheta)w_n(\btheta) \big|d \btheta\\
			&=   \frac{1}{2}\int_{\Theta_{+}} [ 2 w_n(\btheta) \big | \bar{\pi}_{n}(\btheta) -  \bar{q}_n(\btheta) \big| +  2 (1-w_n(\btheta))\big| \bar{\pi}_{n}(\btheta) - \bar{q}_n(\btheta) \big| ] d \btheta\\
			&=  \int_{\Theta_{+}} \big| \bar{\pi}_{n}(\btheta) -  \bar{q}_n(\btheta) \big| d \btheta = \frac{1}{2} \int \big| \bar{\pi}_{n}(\btheta) -  \bar{q}_n(\btheta) \big| d \btheta=\mathcal{D}_{\textsc{tv}}[\bar{\pi}_{n}  \mid \mid \bar{q}_n],
		\end{aligned} 
	\end{equation*}
	which coincides with \eqref{skew:sym:equiv:sym} under the TV distance. Similar arguments lead also to
	\begin{equation} \label{skew:sym:eq:sym:alpha}
		\begin{aligned}
			\mathcal{D}_{\alpha}[\pi_n \mid \mid q_n] &= \frac{1}{\alpha(1-\alpha)}[ 1 - \int 2w_n(\btheta)\bar{\pi}_{n}(\btheta)^{\alpha}  \bar{q}_n(\btheta)^{1-\alpha} d \btheta] \\
			&= \frac{1}{\alpha(1-\alpha)}[ 1 - 2 \int_{\Theta_{+}} \bar{\pi}_{n}(\btheta)^{\alpha}  \bar{q}_n(\btheta)^{1-\alpha} d \btheta ] = \mathcal{D}_{\alpha}[\bar{\pi}_{n}  \mid \mid \bar{q}_n],
		\end{aligned}
	\end{equation}
	since $ 2 \int_{\Theta_{+}} \bar{\pi}_{n}(\btheta)^{\alpha}  \bar{q}_n(\btheta)^{1-\alpha} d \btheta=\int \bar{\pi}_{n}(\btheta)^{\alpha}  \bar{q}_n(\btheta)^{1-\alpha} d \btheta$. To conclude the proof, notice that \eqref{skew:sym:eq:sym:alpha} holds also for $\alpha \to 0$ and $\alpha \to 1$, thereby recovering the same result for KL and reverse--KL. 
\end{proof}

\vspace{5pt}


\subsection{\large A.3 Proof of Theorem \ref{teo_2}} \label{appendix:teo_2}
\begin{proof}.	
	Let us first prove Theorem \ref{teo_2} for the TV distance. Similarly to $\bar{\pi}_{n}(\btheta)$, $\bar{q}_{n,}(\btheta)$,  ${q}_{n}(\btheta)$~and ${w}_{n}(\btheta)$, we omit $\btheta^*$ also in $ \tilde{q}_{n,\btheta^*}(\btheta)$ and $\tilde{w}_{\btheta^*}(\btheta)$ for notational convenience, obtaining  $ \tilde{q}_{n}(\btheta)$ and $\tilde{w}(\btheta)$.  In view~of  the skew--symmetric representation in \eqref{eq2}, the  triangle inequality, and equation \eqref{skew:sym:equiv:sym} in Theorem \ref{teo_1},~we~have
	\begin{equation*}
		\begin{aligned}
			 \mathcal{D}_{\textsc{tv}}[{\pi}_{n} \mid \mid \tilde{q}_n]  &= \, \frac{1}{2} \int \big| 2\bar{\pi}_{n}(\btheta)w_n(\btheta) -  2\bar{q}_n(\btheta)\tilde{w}(\btheta) \big| d \btheta \\
			&= \, \frac{1}{2}  \int_{\Theta_{+}} \big| 2\bar{\pi}_{n}(\btheta)w_n(\btheta) - 2\bar{q}_n(\btheta)\tilde{w}(\btheta) \big| + \big| 2\bar{\pi}_{n}(\btheta)(1-w_n(\btheta)) - 2\bar{q}_n(\btheta)(1-\tilde{w}(\btheta)) \big| d \btheta  \\ 
			&\geq  \, \frac{1}{2} \int \big| \bar{\pi}_{n}(\btheta) -  \bar{q}_n(\btheta) \big| d \btheta =  \frac{1}{2} \int \big| {\pi}_{n}(\btheta) -  {q}_n(\btheta) \big| d \btheta=\mathcal{D}_{\textsc{tv}}[{\pi}_{n}\mid \mid q_n],
		\end{aligned}
	\end{equation*}  
which proves 	Theorem \ref{teo_2} under the TV distance. Let us now consider the $\alpha$--divergence for $\alpha\in\mathbb{R}\backslash\{0;1\}$. Then, combining results in  \eqref{eq2}, \eqref{skew:sym:post}, and \eqref{skew:sym:equiv:sym}, it follows
	\begin{equation*}
		\begin{aligned}
			&\mathcal{D}_{\alpha}[{\pi}_{n} \mid \mid \tilde{q}_n] - \mathcal{D}_{\alpha}[{\pi}_{n} \mid \mid q_n]\\
			&\qquad  \quad = \frac{1}{\alpha(1-\alpha)} \Big[ \int 	\bar{\pi}_{n}(\btheta)^{\alpha}  \bar{q}_n(\btheta)^{1-\alpha} d \btheta - \int 	[2\bar{\pi}_{n}(\btheta)w_n(\btheta)]^{\alpha} [2 \bar{q}_n(\btheta)\tilde{w}(\btheta)  ]^{1-\alpha} d \btheta \Big]	\\
		&\qquad \quad    =	2\int_{\Theta_{+}} \frac{1}{\alpha(1-\alpha)}  \big[ 1 -w_n(\btheta)^{\alpha}\tilde{w}(\btheta)^{1-\alpha} -[1-w_n(\btheta)]^{\alpha}[1-\tilde{w}(\btheta)]^{1-\alpha}  \big] \bar{\pi}_{n}(\btheta)^{\alpha}   \bar{q}_n(\btheta)^{1-\alpha} d \btheta.
		\end{aligned}
	\end{equation*} 
Note that the function $v\mapsto \kappa_3(k,v)=\big[ 1 -k^{\alpha}v^{1-\alpha} -(1-k)^{\alpha}(1-v)^{1-\alpha}  \big]/[\alpha(1-\alpha)]$ satisfies  
	\begin{equation*} 
		\lim_{v \to 0^+}\kappa_{3}(k,v) \geq 0, \quad \text{and} \quad	\lim_{v \to 1^-} \kappa_{3}(k,v) \geq 0,
	\end{equation*}	
	for all $k \in (0,1)$. Moreover, $v\mapsto \kappa_{3}(k,v)$ is continuous in $(0,1)$ with first derivative  $\big[ -k^{\alpha}v^{-\alpha} +(1-k)^{\alpha}(1-v)^{-\alpha}  \big]/\alpha$, which is $0$ if and only if $k = v$, and second derivative $ k^{\alpha}v^{-(\alpha+1)} +(1-k)^{\alpha}(1-v)^{-(\alpha+1)}\geq 0$ for all~$v \in(0,1)$. Since $\kappa_{3}(k,k)=0$, then $\kappa_{3}(k,v)\geq 0$ for all $k,v\in(0,1)$. Thus, it remains to study $\kappa_{3}(k,v)$~at~the~boundaries.

First notice that $k=v=0$ is discarded, since it corresponds to the case $\pi_n(\btheta)=\tilde{q}_n(\btheta)=q_n(\btheta)=0$. Next, for $k \to 0^+$ and $v\in (0,1)$ fixed, we have that $\lim_{k \to 0^+} \kappa_{3}(k,v) \geq 0$~if~\smash{$ \alpha \in (0,1)$}, $\lim_{k \to 0^+} \kappa_{3}(k,v)= +\infty$ if $ \alpha <0$, and $\lim_{k \to 0^+} \kappa_{3}(k,v) \geq 0 $ if $ \alpha> 1$. The same limits hold also for $k \to 1^-$ and $v \in (0,1)$ fixed.~Moreover,  $\lim_{k \to 1^-,v \to 0^+ } \kappa_{3}(k,v) > 0$ if $ \alpha \in (0,1)$ and $\lim_{k \to 1^-,v \to 0^+ } \kappa_{3}(k,v) = +\infty$ if $ \alpha\notin[0,1]$,~which~hold, by symmetry, also for $k \to 0^+$~and~\smash{$v \to 1^-$}.  Finally, for $ \alpha \in (0,1)$, we have that $\lim_{k \to 1^-,v \to 1^- } \kappa_{3}(k,v) = 0 $. At the same time, for $ \alpha\notin[0,1]$, we have that $-(1-k)^{\alpha}(1-v)^{1-\alpha}/[\alpha(1-\alpha)]\geq 0$, hence $\kappa_{3}(k,v) \geq ( 1- k^{\alpha}v^{1-\alpha} )/[\alpha(1-\alpha)]$~with $\lim_{k \to 1^-,v \to 1^- } ( 1- k^{\alpha}v^{1-\alpha} )/[\alpha(1-\alpha)] = 0$. Leveraging symmetry arguments, these limits hold~also for  $k \to 0^+$, $v \to 0^+ $. Combining all these results, implies,~for~$\alpha\in\mathbb{R}\backslash\{0;1\}$,
	\begin{equation*}
		\mathcal{D}_{\alpha}[{\pi}_{n} \mid \mid \tilde{q}_n]  \geq\mathcal{D}_{\alpha}[{\pi}_{n} \mid \mid q_n].
	\end{equation*}
	
	To conclude, it remains to consider the limits $\alpha \to 0$ and $\alpha \to 1$. As in the proof of Lemma~\ref{lemma_1}, let us leverage again the connection with Kullback--Leibler divergences. For $\alpha \to 1$, we have 
	\begin{equation} \label{skew:opt:alpha1}
		\begin{aligned}
			& \lim_{\alpha \to 1} 	 \mathcal{D}_{\alpha}[{\pi}_{n} \mid \mid \tilde{q}_n]  = \textsc{kl}[{\pi}_{n} \mid \mid \tilde{q}_n]  = \int \log \Big( \frac{2\bar{\pi}_{n}(\btheta)w_n(\btheta)}{ 2\bar{q}_n(\btheta)\tilde{w}(\btheta)} \Big) 2\bar{\pi}_{n}(\btheta)w_n(\btheta) d \btheta \\
			&\qquad= \int \log \Big( \frac{2\bar{\pi}_{n}(\btheta)w_n(\btheta)}{2\bar{\pi}_{n}(\btheta)\tilde{w}(\btheta)} \Big) 2\bar{\pi}_{n}(\btheta)w_n(\btheta) d \btheta + \int \log \Big( \frac{2\bar{\pi}_{n}(\btheta)\tilde{w}(\btheta)}{ 2\bar{q}_n(\btheta)\tilde{w}(\btheta) } \Big) 2\bar{\pi}_{n}(\btheta)w_n(\btheta) d \btheta\\
			&\qquad= \textsc{kl}[{\pi}_{n} \mid \mid 2\bar{\pi}_{n}\tilde{w}]  +  \textsc{kl}[\bar{\pi}_{n} \mid \mid \bar{q}_n]  \geq \textsc{kl}[\bar{\pi}_{n} \mid \mid \bar{q}_n]  = \textsc{kl}[{\pi}_{n} \mid \mid q_n],
		\end{aligned}
	\end{equation} 
	where the second equality follows by adding and subtracting $\int \log (2\bar{\pi}_{n}(\btheta)\tilde{w}(\btheta) ) $ $2\bar{\pi}_{n}(\btheta)w_n(\btheta) d \btheta$, whereas the third by the definition of KL divergence, together with $w_n(-\btheta) = 1-w_n(\btheta) $. The inequality is the result of the positive definiteness of the KL divergence. Finally, the last equality follows from equation~\eqref{skew:sym:equiv:sym} in Theorem \ref{teo_1}. Reversing the role of  ${\pi}_{n}(\btheta)$ and  $\tilde{q}_n(\btheta)$, and leveraging a similar reasoning as in \eqref{skew:opt:alpha1},~yields~also
	\begin{equation*}
		\begin{aligned}
			 \lim_{\alpha \to 0} \mathcal{D}_{\alpha}[{\pi}_{n} \mid \mid \tilde{q}_n] = \textsc{kl}[\tilde{q}_n \mid \mid {\pi}_{n}]
			 \geq \textsc{kl}[ \bar{q}_n \mid \mid \bar{\pi}_{n}]   =   \textsc{kl}[q_n \mid \mid {\pi}_{n}],
		\end{aligned}
	\end{equation*}
which concludes the proof of  Theorem \ref{teo_2}.
\end{proof}


\vspace{5pt}
\subsection{\large A.4 Proof of Theorem \ref{teo_3}}
\begin{proof}. Let us prove the rates in  \eqref{tot:variatindist:map:sym2} first for the TV distance, and then for  $\alpha$--divergences~with $\alpha \in (0,1)$, including the limiting KL ($\alpha \to 0$) and reverse--KL ($\alpha \to 1$) forms. This latter~setting addresses routinely--employed special cases of $\alpha$–divergences \citep[e.g.,][]{margossian2024ordering,hernandez2016black}. 
	The proof for  \eqref{tot:variatindist:map:sym} follows as a direct consequence~leveraging~simpler arguments and derivations. 
	
	For notational simplicity define $\hat{\btheta}:=\btheta_{\textsc{map}} $. Then, in view of the invariance of the TV distance,  KL, reverse--KL and generic $\alpha$--divergences with respect to invertible affine transformations, in the following we will consider, for convenience, the re--parameterization \smash{$  \hat{\bh}  = \sqrt{n}(\btheta - \hat\btheta) $}; recall~that \smash{$  \hat{\bh} := \bh_{\textsc{map}}$}. Notice~that the density \smash{$ q_{n,\hat{\btheta},(2)}(\hat \bh)$} takes  the form $$ q_{n,\hat {\btheta},(2)}(\hat \bh)=2 \phi_d(\hat \bh ; {\bf 0}, \hat \bOmega)f(\hat \bh) w_{\bf 0}(\hat \bh)/\E_{{\bf 0},\hat \bOmega}  \{ f(\hat \bh)\},$$ where $ \hat \bOmega^{-1} = \bJ_{ \hat \btheta}/n$, \smash{$w_{\bf 0}(\hat \bh) = w_{n,\hat \btheta}(\hat \btheta+ \hat \bh/\sqrt{n}) $}, $\E_{{\bf 0},\hat \bOmega} \{ f(\hat \bh)\} = \int f(\hat \bh)  \phi_d(\hat \bh; {\bf 0}, \hat \bOmega ) d \hat \bh$ and
\begin{equation}\label{def:Ptilde}
 f(\hat \bh) = 1 + \frac{1}{24n} \Big \langle \frac{\ell^{(4)}_{n,\hat \btheta}}{n}, \hat \bh^{\otimes 4} \Big \rangle + \frac{1}{2} \Big(\frac{1}{24 n} \Big \langle \frac{\ell^{(4)}_{n,\hat \btheta}}{n}, \hat \bh^{\otimes 4} \Big \rangle  \Big)^2 
	+ \frac{1}{2} \Big(\frac{1}{6 \sqrt{n}} \Big \langle \frac{ \ell_{n,\hat \btheta}^{(3)}}{n}, \hat \bh^{\otimes 3}\Big \rangle \Big)^2.
\end{equation}
	 Let us also define  the event \smash{$ B_n=\lbrace  \| \hat{\btheta} - \btheta_0 \| \leq M_n\sqrt{d/n} \rbrace$} together~with~the two sets $\tilde K_n = \{ \btheta \, : \,  \|\sqrt{n}(\btheta - \smash{\hat \btheta})\| < 2 M_n\sqrt{d}  \}$ and $ K_n = \{ \btheta \, : \,  \|\sqrt{n}(\btheta - \btheta_0)\| < M_n\sqrt{d}   \}$. 

\vspace{10pt}	
\noindent	{\bf Proof for the \textsc{TV} distance.} By the triangle inequality, we have that
		\begin{equation} \label{triangle:tv:map:sym}
			\begin{aligned}
				&\int| \pi_n(\hat \bh) - { q_{n,\hat{\btheta},(2)}(\hat \bh)} | d\hat\bh\\
				&  \leq  	\int| \pi_n(\hat \bh) -  \pi_n^{\tilde K_n}(\hat \bh) | d\hat \bh + 	{\int}| \pi_n^{\tilde K_n}( \hat \bh) - { q^{\tilde K_n}_{n,\hat{\btheta},(2)}(\hat \bh)}| d\hat \bh +  {\int}| { q_{n,\hat{\btheta},(2)}(\hat \bh)}  - { q^{\tilde K_n}_{n,\hat{\btheta},(2)}(\hat \bh)}| d\hat \bh,  
			\end{aligned}
	\end{equation} 
	with
		\smash{$\pi_n^{\tilde K_n}( \hat \bh)= \pi_{n}(\hat \bh)\mathbbm{1}_{\hat \bh \in \tilde K_n}/\int_{\tilde K_n} \pi_n(\hat \bh) d\hat \bh$} and \smash{${q^{\tilde K_n}_{n,\hat{\btheta},(2)}(\hat \bh)= q_{n,\hat{\btheta},(2)}(\hat \bh)\mathbbm{1}_{\hat \bh \in \tilde K_n}/ \int_{\tilde K_n} q_{n,\hat{\btheta},(2)}(\hat \bh)d\hat \bh}$}, denoting the restrictions of \smash{$\pi_n(\hat \bh)$} and \smash{${ q_{n,\hat{\btheta},(2)}(\hat \bh)}$} to the set $\tilde K_n$. Moreover, note that $w_{\bf 0}(-\hat \bh) = 1-w_{\bf 0}(\hat \bh)$, and~that $\tilde K_n$ is symmetric about {\bf 0} ({when expressed as a function of $\hat \bh$}). Therefore, the normalizing constant of~the restricted skew--symmetric approximation is {$\int_{\tilde K_n} \smash{q_{n,\hat{\btheta},(2)}(\hat \bh)d\hat \bh}  = \int_{\tilde K_n} \phi_d( \hat \bh ; {\bf 0}, \hat \bOmega)f(\hat \bh)/\E_{{\bf 0},\hat \bOmega} \{ f(\hat \bh)\}  d\hat \bh$}.
		
Let us now deal with each of the three terms on the right hand side of  \eqref{triangle:tv:map:sym} separately.~In~view of~a~standard inequality of the TV distance we have \smash{$\int| \pi_n(\hat \bh) -  \pi_n^{\tilde K_n}(\hat \bh) | d\hat  \bh \leq 2 \Pi_n(\tilde K_n^c)$}. Moreover, combining~Assumption~\ref{cond:m1}, with the triangle and Markov's inequality, yields
	\begin{equation*}
		\Pi_n(\tilde K_n^c)\mathbbm{1}_{B_n} 		\leq \Pi_n(K_n^c)\mathbbm{1}_{B_n},
	\end{equation*}
	with $P_0^n(B_n) = 1 - o(1)$. From Lemma \ref{lemma:post:contr}, for a sufficiently large choice of $c_0$ in $M_n$,~the~right--hand--side of the above display is of order $O_{P_{0}^n}(n^{{-2}})$. This  implies, in turn,
	\begin{equation} \label{bound:out:pi:map:sym}
		\int| \pi_n(\hat \bh) -  \pi_n^{\tilde K_n}(\hat \bh) | d\hat \bh 		= O_{P_{0}^n}(n^{{-2}}).
	\end{equation}	
	To deal with the third term in the right--hand--side of \eqref{triangle:tv:map:sym}, let us exploit the same TV inequality. This, the symmetry of the set \smash{$\tilde K_n^c$ around ${\bf 0}$} ({when expressed as a function of \smash{$\hat \bh$}}) and the skew--symmetric~invariance with respect to even functions \citep[][Prop.\ 1.4]{azzalini2013skew}~give 
	\begin{equation*}
		\begin{aligned}
			 \int| { q_{n,\hat{\btheta},(2)}(\hat \bh)}  -{ q^{\tilde K_n}_{n,\hat{\btheta},(2)}(\hat \bh)}| d\hat \bh &\leq 2 \int_{ \hat \bh \,:\, \|\hat \bh\|>2 M_n\sqrt{d} }{ q_{n,\hat{\btheta},(2)}(\hat \bh)}  d\hat \bh\\
			 &\qquad  \qquad =2 \int_{ \hat \bh \,:\, \|\hat \bh\|>2 M_n\sqrt{d} }\frac{\phi_d( \hat \bh ; {\bf 0}, \hat \bOmega)f(\hat \bh)}{\E_{{\bf 0},\hat \bOmega} \{ f(\hat \bh)\}}  d\hat \bh.\\
			&\qquad  \qquad :=2 P_{{\bf 0},\hat  \bOmega, f(\cdot)}(\|\hat \bh\| > 2M_n\sqrt{d}).
		\end{aligned}
	\end{equation*}
	Furthermore, in view of \smash{$\hat \bOmega^{-1} = \bJ_{\hat \btheta}/n$} and $\tilde A_{n,0}= \{ \lambda_{\textsc{min}}(\hat \bOmega^{-1})> \bar \eta_1 \} \cap\{ \lambda_{\textsc{max}}(\hat \bOmega^{-1})< \bar \eta_2 \}$, it follows that, conditioned on \smash{$\tilde A_{n,0}$}, the quantity \smash{$\E_{{\bf 0},\hat \bOmega} \{ f(\hat \bh)\} $} lies on a bounded positive range and, for $n$ sufficiently~large, {$ 1-\log\{ f(\hat \bh)\}/(\hat \bh^\intercal \hat \bOmega^{-1} \hat \bh/2) > 0.5, $ uniformly in $\hat \bh\in \tilde{K}_n^c$}. As a consequence, for large $n$,
	\begin{equation} \label{thm1:help0:map:sym}
		\begin{aligned}
			& \int_{ \hat \bh \,:\, \|\hat \bh\|>2 M_n\sqrt{d} } \phi_d( \hat \bh ; {\bf 0}, \hat \bOmega)f(\hat \bh)/\E_{{\bf 0},\hat \bOmega}\{ f(\hat \bh)\} d\hat \bh \mathbbm{1}_{\tilde A_{n,0}} \lesssim  \int_{ \hat \bh \,:\, \|\hat \bh\|>2 M_n\sqrt{d} } \phi_d( \hat \bh ; {\bf 0}, 2 \hat \bOmega) d\hat \bh \mathbbm{1}_{\tilde A_{n,0}}.
		\end{aligned}
	\end{equation}
	Moreover, for every $\epsilon>0$, Assumption \ref{cond:m3}, the tail behavior of the Gaussian distribution, see, e.g., Lemma D.4 in \cite{durante2023skewed} and \eqref{thm1:help0:map:sym},  imply, for a sufficiently large choice of $c_0$ within $M_n = \sqrt{ c_0 \log n }$, that
	\begin{equation} \label{thm1:help1:map:sym}
		\begin{aligned}
			P_0^n ( n^2  P_{{\bf 0},\hat  \bOmega, f(\cdot)}(\|\hat \bh\| > 2M_n\sqrt{d}) > \epsilon  ) \, &= \, P_0^n ( \{ n^2  P_{{\bf 0},\hat  \bOmega, f(\cdot)}(\|\hat \bh\| > 2M_n\sqrt{d}) > \epsilon \} \cap \tilde  A_{n,0}  ) + o(1)\\
			&= o(1).
		\end{aligned}
	\end{equation} 
	By combining the above results we obtain
	\begin{equation} \label{bound:out:skw:map:sym} 
		\ \int| { q_{n,\hat{\btheta},(2)}(\hat \bh)}  -{ q^{\tilde K_n}_{n,\hat{\btheta},(2)}(\hat \bh)}| d\hat \bh		= O_{P_0^n}(n^{-2}).
	\end{equation}
To conclude the proof, it remains to deal with the second term in the right--hand--side of \eqref{triangle:tv:map:sym}. With~this goal in mind, let~us~define the event
	$$ \textstyle \tilde A_{n,1} =  \tilde A_{n,0} \cap \{ \int_{\tilde K_n}\pi_n(\hat \bh) d\hat \bh > 0 \} \cap \{ \int_{\tilde K_n} { q_{n,\hat{\btheta},(2)}(\hat \bh)}d\hat \bh > 0 \}.$$ 
	Notice that, as a consequence of Lemma~\ref{lemma:post:contr}, \smash{$P_0^n \{ \int_{\tilde K_n}\pi_n(\hat \bh) d\hat \bh > 0 \} =1-o(1)$}. Similarly, by the Assumption~\ref{cond:m3} and assertions~\eqref{thm1:help0:map:sym}--\eqref{thm1:help1:map:sym} we have \smash{$P_0^n  \{ \int_{\tilde K_n} { q_{n,\hat{\btheta},(2)}(\hat \bh)}d\hat \bh > 0 \} = 1-o(1)$},~and hence,  \smash{$P_0^n \tilde A_{n,1} = 1 - o(1)$}, respectively. Furthermore, recall that \smash{$ \pi_n(\hat \bh) = 2 \bar{\pi}_{n,\hat \btheta}(\hat \bh) w_{\bf 0}(\hat \bh)$}, and define with \smash{$ \bar{\pi}^{\tilde K_n}_{n,\hat \btheta}(\hat \bh) $} and \smash{$\bar{q}^{\tilde K_n}_{n,\hat \btheta,(2)}(\hat \bh)$} the symmetrized posterior~density and the symmetric component of ${ q_{n,\hat{\btheta},(2)}(\hat \bh)}$, respectively, restricted within the set $\tilde K_n$. By the symmetry~of \smash{$ \bar{\pi}^{\tilde K_n}_{n, \hat \btheta}(\hat \bh) $} and \smash{$\bar{q}^{\tilde K_n}_{n,\hat \btheta,(2)}(\hat \bh)$}, and the previously-discussed invariance properties, we have
		\begin{equation*}
		\begin{aligned}
			& \int| \pi_n^{\tilde K_n}( \hat \bh) - { q^{\tilde K_n}_{n,\hat{\btheta},(2)}(\hat \bh)}| d\hat \bh = \int |  \bar{\pi}^{\tilde K_n}_{n,\hat \btheta}(\hat \bh) -\bar{q}^{\tilde K_n}_{n,\hat \btheta,(2)}(\hat \bh) | d \hat \bh.
		\end{aligned}
	\end{equation*}
	From here onwards we can adapt the proof of Theorem 2.1 in \citet{durante2023skewed} to our setting. In~particular, let us consider
		\begin{equation}\label{eq_mid_asy}
		\begin{split}
				&\int |  \bar{\pi}^{\tilde K_n}_{n,\hat \btheta}(\hat \bh) - \bar{q}^{\tilde K_n}_{n,\hat \btheta,(2)}(\hat \bh) | d \hat \bh \mathbbm{1}_{\tilde A_{n,1}} \\
				& \qquad  \qquad =  {\int_{\tilde K_n}}  \Big| 1 - \int_{\tilde K_n} \frac{\bar{q}^{\tilde K_n}_{n,\hat \btheta,(2)}(\hat \bh)}{\bar{q}^{ \tilde K_n}_{n,\hat \btheta,(2)}(\hat \bh')} \frac{\bar{\pi}^{\tilde K_n}_{n,\hat \btheta}(\hat \bh')}{ \bar{\pi}^{\tilde K_n}_{n,\hat \btheta}(\hat \bh)}\bar{q}^{\tilde K_n}_{n,\hat \btheta,(2)}(\hat \bh')d\hat \bh'\Big| \bar{\pi}^{\tilde K_n}_{n,\hat \btheta}(\hat \bh) d\hat \bh  \mathbbm{1}_{\tilde A_{n,1}},
				\end{split}
	\end{equation}
where the ratios  \smash{$ \bar{q}^{\tilde K_n}_{n,\hat \btheta,(2)}(\hat \bh) /\bar{q}^{\tilde K_n}_{n,\hat \btheta,(2)}(\hat \bh') $} and \smash{$ \bar{\pi}^{\tilde K_n}_{n,\hat \btheta}(\hat \bh')/ \bar{\pi}^{\tilde K_n}_{ n,\hat \btheta}(\hat \bh)$} in the above expression are equal to their unrestricted versions \smash{$\bar{q}_{n,\hat \btheta,(2)}(\hat \bh) /\bar{q}_{n,\hat \btheta,(2)}(\hat \bh')$} and \smash{$ \bar{\pi}_{n,\hat \btheta}(\hat \bh')/ \bar{\pi}_{n,\hat \btheta}(\hat \bh)$}, for $\hat \bh, \hat \bh' \in \tilde{K}_n$. This result, together with Jensen's inequality, imply that \eqref{eq_mid_asy} is upper bounded by 
		\begin{equation*}
\int_{ \tilde K_n \times \tilde K_n} \Big| 1 - \frac{\bar{q}_{n,\hat \btheta,(2)}(\hat \bh)}{\bar{q}_{n,\hat \btheta,(2)}(\hat \bh')} \frac{\bar{\pi}_{ n,\hat \btheta}(\hat \bh')}{\bar{\pi}_{ n,\hat \btheta}(\hat \bh)} \Big| \bar{\pi}^{\tilde K_n}_{ n,\hat \btheta}(\hat \bh) \bar{q}^{ \tilde K_n}_{n,\hat \btheta,(2)}(\hat \bh')d \hat \bh d\hat \bh'  \mathbbm{1}_{\tilde A_{n,1}}. \\
	\end{equation*} 
Recall $\bar{\pi}_{n,\hat \btheta}(\hat \bh) \propto [\pi(\hat \btheta + \hat \bh /\sqrt{n})L(\hat \btheta + \hat \bh /\sqrt{n}; \by_{1:n})+ \pi(\hat \btheta - \hat \bh /\sqrt{n})L(\hat \btheta - \hat \bh /\sqrt{n};\by_{1:n})]/[2\pi(\hat \btheta )L(\hat \btheta; \by_{1:n})$]. As a consequence, leveraging Lemma \ref{lemma:symmetrized:modal}, together with  the definitions of \smash{$\hat \bOmega$} and \smash{$f(\hat \bh)$}, and the expansion $e^x = 1 + x + e^{\beta x}x^2/2$, for some $\beta \in (0,1)$, we obtain 
	\begin{equation} \label{upper:cond:Kn:sym}
		\begin{aligned}
			\int |  \bar{\pi}^{\tilde K_n}_{ n,\hat \btheta}(\hat \bh) -\bar{q}^{\tilde K_n}_{n,\hat \btheta,(2)}(\hat \bh) | d \hat \bh \mathbbm{1}_{\tilde A_{n,1}} &\leq \int_{\tilde K_n \times \tilde K_n} \Big| 1 - e^{\tilde r_{n,2}({\hat \bh'}) - \tilde r_{n,2}({\hat \bh})} \Big| \bar{\pi}^{\tilde K_n}_{n, \hat \btheta}(\hat \bh) \bar{q}^{\tilde K_n}_{n,\hat \btheta,(2)}(\hat \bh') d \hat \bh d\hat \bh'  \mathbbm{1}_{\tilde A_{n,1}}\\
			& \leq 2 \tilde r_{n,2} + 2 \exp(2\beta \tilde r_{n,2})\tilde r_{n,2}^2 = O_{P_0^n}( M_n^{\tilde{c}_2}d^6/n^2),		
		\end{aligned}
	\end{equation}  
	where \smash{$\tilde r_{n,2}  = \sup_{\hat \bh \in \tilde K_n} |\tilde r_{n,2}(\hat \bh)|=O_{P_0^n}( M_n^{\tilde{c}_2}d^6/n^2) $} (see Lemma \ref{lemma:symmetrized:modal}). Setting {$c_4=\tilde{c}_2$}, and~recalling the~invariance of the TV distance with respect to invertible affine transformations,  concludes the proof~for~TV.

\vspace{150pt}	
\noindent	{\bf Proof for KL and reverse--KL.}
	Note that by symmetry and equation~\eqref{skew:sym:equiv:sym}, we have
	\begin{align*}
		&\mathcal{D}_{\textsc{kl}}\big[\pi_n\|q_{n,\hat{\btheta},(2)}\big]+\mathcal{D}_{\textsc{kl}}\big[q_{n,\hat{\btheta},(2)}\|\pi_n \big]
		=\mathcal{D}_{\textsc{kl}}\big[\bar\pi_{n,\hat{\btheta}}\|\bar{q}_{n,\hat{\btheta},(2)}\big]+\mathcal{D}_{\textsc{kl}}\big[\bar{q}_{n,\hat{\btheta},(2)}\|\bar\pi_{n,\hat{\btheta}} \big]\nonumber\\
		&\qquad \qquad  = {\int}{\int}\log \Big(\frac{\bar\pi_{n,\hat{\btheta}}(\hat{\bh})}{\bar{q}_{n,\hat{\btheta},(2)}(\hat{\bh})} \Big)\bar\pi_{n,\hat{\btheta}}(\hat{\bh}) \bar{q}_{n,\hat{\btheta},(2)}(\hat{\bh}') d\hat{\bh} d\hat{\bh}'\\
		&\qquad \qquad  \qquad \qquad \qquad \qquad \qquad +  {\int}{\int}\log \Big(\frac{\bar{q}_{n,\hat{\btheta},(2)}(\hat{\bh}')}{\bar\pi_n(\hat{\bh}')} \Big) \bar{q}_{n,\hat{\btheta},(2)}(\hat{\bh}')  \bar\pi_{n,\hat{\btheta}}(\hat{\bh})d\hat{\bh}' d\hat{\bh}\nonumber\\
		&\qquad \qquad  = {\int}{\int}\log \Big(\frac{\bar\pi_{n,\hat{\btheta}}(\hat{\bh})}{\bar{q}_{n,\hat{\btheta},(2)}(\hat{\bh})} \frac{\bar{q}_{n,\hat{\btheta},(2)}(\hat{\bh}')}{\bar\pi_{n,\hat{\btheta}}(\hat{\bh}')}\Big)\bar\pi_{n,\hat{\btheta}}(\hat{\bh}) \bar{q}_{n,\hat{\btheta},(2)}(\hat{\bh}')d \hat{\bh} d\hat{\bh}'.
	\end{align*}
	We now split the domains of the above integrals into $\tilde K_n$ and $\tilde K_n^c$ and~study the parts separately.
	
	In particular,
	\vspace{-3pt}
	\begin{align*}
		\int\int_{\tilde{K}_n^c}& \log \Big(\frac{\bar\pi_{n,\hat{\btheta}} (\hat{\bh})}{\bar{q}_{n,\hat{\btheta},(2)}(\hat{\bh})} \frac{\bar{q}_{n,\hat{\btheta},(2)}(\hat{\bh}')}{\bar\pi_{n,\hat{\btheta}} (\hat{\bh}')}\Big)\bar\pi_{n,\hat{\btheta}} (\hat{\bh})   \bar{q}_{n,\hat{\btheta},(2)}(\hat{\bh}') d\hat{\bh}d\hat{\bh}'\nonumber\\
		&\leq  \int_{\tilde{K}_n^c} \bar\pi_{n,\hat{\btheta}} (\hat{\bh}) \big(|\log \bar\pi_{n,\hat{\btheta}} (\hat{\bh})|+ |\log \bar{q}_{n,\hat{\btheta},(2)}(\hat{\bh})|\big)d\hat{\bh}
		+  \bar\Pi_n(\tilde{K}_n^c)  \mathcal{D}_{\textsc{kl}}\big[q_{n,\hat{\btheta},(2)}\|\pi_n \big].
	\end{align*}
	Similarly, we have that
	\begin{align*}
		\int\int_{\tilde{K}_n^c}& \log \Big(\frac{\bar\pi_{n,\hat{\btheta}}(\hat{\bh})}{\bar{q}_{n,\hat{\btheta},(2)}(\hat{\bh})} \frac{\bar{q}_{n,\hat{\btheta},(2)}(\hat{\bh}')}{\bar\pi_{n,\hat{\btheta}}(\hat{\bh}')}\Big)\bar{q}_{n,\hat{\btheta},(2)}(\hat{\bh}') \bar\pi_{n,\hat{\btheta}}(\hat{\bh})d\hat{\bh}'d\hat{\bh}   \\
		&\leq  \int_{\tilde{K}_n^c} \bar{q}_{n,\hat{\btheta},(2)}(\hat{\bh}') \big(|\log \bar\pi_{n,\hat{\btheta}}(\hat{\bh}')|+ |\log \bar{q}_{n,\hat{\btheta},(2)}(\hat{\bh}')|\big)d\hat{\bh}'
		+  P_{ \bar q_{n,\hat{\btheta},(2)}}(\tilde{K}_n^c)  \mathcal{D}_{\textsc{kl}}\big[\pi_n\| q_{n,\hat{\btheta},(2)} \big].
	\end{align*}
	Recall that in view of Lemma \ref{lemma:post:contr}, Proposition 6 in  \citet{wang2004skew}, \eqref{thm1:help1:map:sym}, and the results in the proof for the TV distance we have that $\bar\Pi_n(\tilde{K}_n^c)= \Pi_n(\tilde{K}_n^c)=o_{P_0^n}(\smash{n^{-2}})$ and $\smash{P_{ \bar q_{n,\hat{\btheta},(2)}}}(\tilde{K}_n^c)=o_{P_0^n}(\smash{n^{-2}})$.~Moreover, in view of Assumption \ref{assump:LBtail} and noting that $\|\log {\bar q}_{n,\smash{\hat \btheta},(2)}(\smash{\hat{\bh}})\|\lesssim \|\smash{\hat{\bh}}\|^2$ for~$\|\smash{\hat{\bh}}\|{\geq} 1$, we have, as a result of (B.8)\footnote{ Note that in \citet{durante2023skewed} an $O_{P_0^n}(n^{-1})$ upper bound was given, but by sufficiently large choice of $c_0$ in the definition of $K_n$ and $\tilde K_n$ this upper bound can be set to arbitrary polynomial of $n$. Here~it~suffices~to~take~$o_{P_0^n}(n^{-2})$.} in \citet{durante2023skewed}, that
	\begin{align*}
		\int_{\tilde{K}_n^c} \bar\pi_{n,\hat{\btheta}}(\hat{\bh}) \big(|\log \bar\pi_{n,\hat{\btheta}}(\hat{\bh})|+ |\log \bar q_{n,\hat{\btheta},(2)}(\hat{\bh})|\big)d\hat{\bh}\leq 2 \int_{\tilde{K}_n^c}  \|\hat{\bh}\|^{M\vee 2} \pi_n(\hat{\bh})  d\hat{\bh}=o_{P_0^n}(n^{-2}).
	\end{align*}
	Furthermore, following from the display below (B.8) in  \citet{durante2023skewed} and \eqref{thm1:help0:map:sym}, it holds
	\begin{align*}
		{\int_{\tilde{K}_n^c}}\bar q_{n,\hat{\btheta},(2)}(\hat{\bh}') \big(|\log \bar\pi_{n,\hat{\btheta}}(\hat{\bh}')|+ |\log \bar q_{n,\hat{\btheta},(2)}(\hat{\bh}')|\big)d\hat{\bh}'\leq  2{\int_{\tilde{K}_n^c}}  \|\hat{\bh}'\|^{M\vee 2}  q_{n,\hat{\btheta},(2)}(\hat{\bh}')d\hat{\bh}'=o_{P_0^n}(n^{-2}).
	\end{align*}
	Finally, we deal with the integral on $\tilde K_n \times \tilde K_n$. Note, that in view of Lemma \ref{lemma:symmetrized:modal}, we have
	\begin{align*}
		\int_{\tilde{K}_n}\int_{\tilde{K}_n} &\log \Big(\frac{\bar\pi_{n,\hat{\btheta}}(\hat{\bh})}{\bar q_{n,\hat{\btheta},(2)}(\tilde{\bh})} \frac{\bar q_{n,\hat{\btheta},(2)}(\hat{\bh}')}{\bar\pi_{n,\hat{\btheta}}(\hat{\bh}')}\Big)\bar\pi_{n,\hat{\btheta}}(\hat{\bh}) \bar q_{n,\hat{\btheta},(2)}(\hat{\bh}')d \hat{\bh} d\hat{\bh}'\\
		&\leq \int_{\tilde{K}_n}\int_{\tilde{K}_n}\big(|\tilde{r}_{n,2}(\hat{\bh})|+|\tilde{r}_{n,2}(\hat{\bh}')| \big)\bar\pi_{n,\hat{\btheta}}(\hat{\bh})\bar q_{n,\hat{\btheta},(2)}(\hat{\bh}')d \hat{\bh} d\hat{\bh}' \\
		&\qquad \qquad \quad  \leq  \int_{\tilde{K}_n}|\tilde{r}_{n,2}(\hat{\bh})| \bar\pi_{n,\hat{\btheta}}(\hat{\bh})d\hat{\bh}+ \int_{\tilde{K}_n}|\tilde{r}_{n,2}(\hat{\bh}')|   \bar q_{n,\hat{\btheta},(2)}(\hat{\bh}')d \hat{\bh}'= O_{P_0^n}(M_n^{\tilde{c}_2}d^6/n^2).
	\end{align*}	
	By combining the preceding displays, we get that
	\begin{align*}
		&\mathcal{D}_{\textsc{kl}}\big[\pi_n\|q_{n,\hat{\btheta},(2)}]+\mathcal{D}_{\textsc{kl}}\big[q_{n,\hat{\btheta},(2)}\|\pi_n \big]\\
		& \qquad  \qquad \leq (\mathcal{D}_{\textsc{kl}}\big[\pi_n\|q_{n,\hat{\btheta},(2)}\big]+\mathcal{D}_{\textsc{kl}}\big[q_{n,\hat{\btheta},(2)}\|\pi_n \big]) o_{P_{0}^n}(n^{-2})+ o_{P_0^n}(n^{-2})+ O_{P_0^n}(M_n^{\tilde{c}_2}d^6/n^2),
	\end{align*}
	where the leading term is given by $O_{P_0^n}(M_n^{\tilde{c}_2}d^6/n^2)$. 
	
	Since both terms \smash{$\mathcal{D}_{\textsc{kl}}\big[\pi_n\|q_{n,\hat{\btheta},(2)}\big]$} and \smash{$\mathcal{D}_{\textsc{kl}}\big[q_{n,\hat{\btheta},(2)}\|\pi_n \big]$} are upper bounded  by $\smash{\mathcal{D}_{\textsc{kl}}\big[\pi_n\|q_{n,\hat{\btheta},(2)}\big]}+\smash{\mathcal{D}_{\textsc{kl}}\big[q_{n,\hat{\btheta},(2)}\|\pi_n \big]}$, the above result concludes the proof for  the \textsc{kl} and reverse--\textsc{kl}, after noticing that also these two divergences are invariant with respect to invertible affine transformations.

	\vspace{20pt}	
\noindent	{\bf Proof for $\alpha$--divergences with $\alpha\in(0,1)$.}
First notice that by H\"older's inequality, assertion \eqref{thm1:help1:map:sym} and the argument above \eqref{bound:out:pi:map:sym} we get that
	\begin{align}
		\int_{\tilde{K}_n^c}\bar\pi_{n,\hat{\btheta}}^\alpha(\hat{\bh}) ( \bar q_{n,\hat{\btheta},(2)}(\hat{\bh}))^{1-\alpha}d\hat{\bh}\leq \Big(\int_{\tilde{K}_n^c} \bar\pi_{n,\hat{\btheta}}(\hat{\bh})d\hat{\bh}\Big)^\alpha \Big(\int_{\tilde{K}_n^c} \bar q_{n,\hat{\btheta},(2)}(\hat{\bh})d\hat{\bh}\Big)^{1-\alpha}=O_{P_0^n}(n^{-2}).\label{eq:UB:complement}
	\end{align}
Next, let us introduce the notations
	\begin{align*}
		\bar\pi_{n,\hat \theta}^{\mbox{\tiny un}}(\hat  \bh)&= \frac{\pi(\hat \btheta+\hat {\bh}/\sqrt{n})L(\hat\btheta+\hat{\bh}/\sqrt{n};\by_{1:n})+\pi(\hat \btheta-\hat {\bh}/\sqrt{n})L(\hat \btheta-\hat {\bh}/\sqrt{n};\by_{1:n})}{2\pi(\hat \btheta)L(\hat \btheta;\by_{1:n})},\\
		\bar q^{\mbox{\tiny un}}_{n,\hat {\btheta},(2)}(\hat  \bh)&= \exp\{-\hat{\bh}^\intercal \hat\bOmega^{-1}\hat {\bh}/2\}f(\hat {\bh}),
	\end{align*}
	and recall that 
	\begin{align*}
		\bar\pi_{n,\hat {\btheta}}(\hat  \bh)=\frac{\bar\pi_{n,\hat {\btheta}}^{\mbox{\tiny un}}(\hat  \bh)}{\int \bar\pi_{n,\hat {\btheta}}^{\mbox{\tiny un}}(\hat  \bh) d\hat  \bh},\qquad \qquad 
		\bar q_{n,\hat {\btheta},(2)}(\hat  \bh)=\frac{ \bar q^{\mbox{\tiny un}}_{n,\hat {\btheta},(2)}(\hat  \bh)}{\int  \bar q^{\mbox{\tiny un}}_{n,\hat {\btheta},(2)}(\hat  \bh) d\hat {\bh}}.
	\end{align*}
	Furthermore, following again from  \eqref{thm1:help1:map:sym} and the argument above \eqref{bound:out:pi:map:sym}, we also have that
	\begin{align}
		\int_{\tilde{K}_n} \bar\pi^{\mbox{\tiny un}}_{n,\hat {\btheta}}(\hat \bh) d\hat {\bh}&=\big(1+O_{P_0^n}(n^{-2})\big)\int  \bar\pi^{\mbox{\tiny un}}_{n,\hat {\btheta}}(\hat  \bh)d\hat {\bh},\nonumber\\
		\int_{\tilde{K}_n}  \bar q^{\mbox{\tiny un}}_{n,\hat {\btheta},(2)}(\hat  \bh) d\hat {\bh}&=\big(1+O_{P_0^n}(n^{-2})\big)\int \bar q^{\mbox{\tiny un}}_{n,\hat {\btheta},(2)}(\hat  \bh) d\hat {\bh}.\label{eq:supp_Kn}
	\end{align}
Next we note that
\begin{equation}
	\begin{split}
		&\int_{\tilde{K}_n}\bar\pi_{n,\hat {\btheta}}^\alpha(\hat {\bh}) (\bar q_{n,\hat {\btheta},(2)}(\hat {\bh}))^{1-\alpha}d\hat {\bh}\\
		&=\int_{\tilde{K}_n} e^{\displaystyle{\alpha\big(\log \bar\pi_{n,\hat {\btheta}}^{\mbox{\tiny un}}(\hat  \bh))- \log {\bar q}^{ \mbox{\tiny un}}_{n,\hat {\btheta},(2)}(\hat  \bh)\big)+\alpha\big( \log {\textstyle \int} {\bar q}^{ \mbox{\tiny un}}_{n,\hat {\btheta},(2)}(\hat  \bh) d\hat {\bh}-\log {\textstyle \int}\bar\pi_{n,\hat {\btheta}}^{\mbox{\tiny un}}(\hat  \bh) d\hat  \bh \big)}}{\bar q}_{n,\hat {\btheta},(2)}(\hat  \bh) d\hat {\bh}.
	\end{split}
	\label{eq:UB:Renyi}
	\end{equation}
	Then in view of Lemma \ref{lemma:symmetrized:modal}, it follows
	\begin{align*}
		\sup\nolimits_{\hat {\bh}\in\tilde{K}_n}\big(\log \bar\pi_{n,\hat {\btheta}}^{\mbox{\tiny un}}(\hat \bh)- \log  \bar q^{\mbox{\tiny un}}_{n,\hat {\btheta},(2)}(\hat  \bh)\big)\leq \sup\nolimits_{\hat {\bh}\in\tilde{K}_n}\tilde{r}_{n,2}(\hat {\bh})=O_{P_0^n}(M_n^{\tilde{c}_2}d^6/n^2).
	\end{align*}
	Furthermore, in view of 
	$$\frac{\int_A f(x)dx}{\int_A g(x)dx}=\frac{\int_A \frac{f(x)}{g(x)}g(x)dx}{\int_A g(x)dx}\leq \sup_{x\in A}\frac{f(x)}{g(x)},$$
	we also have, following from \eqref{eq:supp_Kn} and Lemma \ref{lemma:symmetrized:modal}, that
	\begin{align*}
		&\log \int  \bar q^{\mbox{\tiny un}}_{n,\hat {\btheta},(2)}(\hat  \bh) d\hat {\bh}-\log \int \bar\pi_{n,\hat {\btheta}}^{\mbox{\tiny un}}(\hat  \bh) d\hat  \bh
		=\log \frac{ \int_{\tilde{K}_n}  \bar q^{\mbox{\tiny un}}_{n,\hat {\btheta},(2)}(\hat \bh) d\hat {\bh}}{\int_{\tilde{K}_n}   \bar\pi_{n,\hat {\btheta}}^{\mbox{\tiny un}}(\hat  \bh) d\hat  \bh}
		+ \log\big(1+O_{P_0^n}(n^{-2})\big)\\
		&\qquad \qquad \leq \sup_{\hat  \bh\in \tilde{K}_n}\log \frac{ \bar q^{\mbox{\tiny un}}_{n,\hat {\btheta},(2)}(\hat  \bh) }{ \bar\pi_{n,\hat {\btheta}}^{\mbox{\tiny un}}(\hat  \bh)}+O_{P_0^n}(n^{-2})=\sup_{\hat {\bh}\in\tilde{K}_n}\tilde{r}_{n,2}(\hat {\bh})+O_{P_0^n}(n^{-2})=O_{P_0^n}(M_n^{\tilde{c}_2}d^6/n^2).
	\end{align*}
	Finally, by substituting into \eqref{eq:UB:Renyi} the above upper bounds and using \eqref{eq:UB:complement} together with the invariance of $\alpha$--divergences with respect to invertible affine transformations, we have, for some universal constant $C$, that
	\begin{align*}
		&\mathcal{D}_{\alpha}\big[\pi_n(\btheta)\|q_{n,\hat {\btheta},(2)}(\btheta)\big]=\mathcal{D}_{\alpha}\big[\bar\pi_{n,\hat {\btheta}}(\btheta)\|\bar q_{n,\hat {\btheta},(2)}(\btheta)\big]\\
		&= \frac{1}{\alpha(1-\alpha)}\\
		&\quad\times\Big(1-{\displaystyle{\int_{\tilde{K}_n}}} \Big[e^{\displaystyle{\alpha\big(\log \bar\pi_{n,\hat {\btheta}}^{\mbox{\tiny un}}(\hat \bh)){-} \log {\bar q}^{\mbox{\tiny un}}_{n,\hat {\btheta},(2)}(\hat  \bh)\big)}}\\
		&\qquad \qquad \qquad \qquad \times e^{\displaystyle{\alpha\big( \log {\textstyle \int} {\bar q}^{\mbox{\tiny un}}_{n,\hat {\btheta},(2)}(\hat \bh) d\hat {\bh}{-}\log {\textstyle \int}\bar\pi_{n,\hat {\btheta}}^{\mbox{\tiny un}}(\hat  \bh) d\hat  \bh \big)}}{\bar q}_{n,\hat {\btheta},(2)}(\hat  \bh) \Big]d\hat {\bh}+O_{P_0^n}(n^{-2})\Big)\\
		&\leq  \frac{1}{\alpha(1-\alpha)}\Big(1- \int_{\tilde{K}_n} e^{-\alpha C M_n^{\tilde{c}_2}d^6/n^2}{\bar q}_{n,\hat {\btheta},(2)}(\hat  \bh) d\hat {\bh}\Big)+O_{P_0^n}(n^{-2})\\
		&=\frac{1}{\alpha(1-\alpha)}\Big(1- e^{-\alpha C M_n^{\tilde{c}_2}d^6/n^2}\big(1+O_{P_0^n}(n^{-2})\big)\Big)+O_{P_0^n}(n^{-2})=O_{P_0^n}(M_n^{\tilde{c}_2}d^6/n^2),
	\end{align*}
	concluding the proof of Theorem \ref{teo_3}.
\end{proof}

\newtheorem{lemma}{Lemma}[section]

\vspace{10pt}

\section{\large B. Technical lemmas}\label{AB}
\vspace{10pt}
\begin{lemma}[Posterior contraction] \label{lemma:post:contr}
	Let $K_n = \{\btheta \in \Theta \, : \, \|\btheta - \btheta_0\| < M_n \sqrt{d/n}\}$. Then, under Assumptions \ref{cond:4}, \ref{cond:m1} and \ref{cond:m3}, we have, for $c_0>0$ sufficiently large (not depending on $n$ and $d$) and  $M_n = \sqrt{c_0 \log n}$,~that 
	\begin{equation*}
		\lim_{n \to \infty } P_0^n \{ \Pi_n( K_n^c ) < 2n^{-dc_0c_1/2} \} = 1,  
	\end{equation*} 
	where  $K_n^c$ denotes the complement of $K_n$.
\end{lemma}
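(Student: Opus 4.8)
The plan is to bound $\Pi_n(K_n^c)$ by writing it as the ratio $\Pi_n(K_n^c)= [\int_{K_n^c}\pi(\btheta)L(\btheta;\by_{1:n})\,d\btheta] \,/\, [\int_{\Theta}\pi(\btheta)L(\btheta;\by_{1:n})\,d\btheta]$ and controlling the numerator from above and the denominator from below on a single high--probability event $E_n$. I would take $E_n$ to be the intersection of: the event of Assumption~\ref{cond:4} on which $\ell(\btheta)-\ell(\btheta_*)<-c_1M_n^2d$ for all $\|\btheta-\btheta_*\|>M_n\sqrt{d/n}$; the event of Assumption~\ref{cond:m3} on which $\lambda_{\max}(\bJ_{\tilde\btheta}/n)<\bar\eta_2$ and $\|\ell^{(3)}(\btheta)/n\|<L$, $\|\log\pi^{(2)}(\btheta)\|<L$ uniformly on $B_\delta(\tilde\btheta)$; and the event $\{\|\tilde\btheta-\btheta_*\|\le M_n\sqrt{d/n}\}$, which has probability $1-o(1)$ by Assumption~\ref{cond:m1} and Markov's inequality. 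A union bound then gives $P_0^n(E_n)\to1$.

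The numerator is essentially immediate: on $E_n$, $L(\btheta;\by_{1:n})=e^{\ell(\btheta)}\le e^{\ell(\btheta_*)}e^{-c_1M_n^2d}$ for every $\btheta\in K_n^c$, and since $\int_{K_n^c}\pi(\btheta)\,d\btheta\le1$ this yields $\int_{K_n^c}\pi(\btheta)L(\btheta;\by_{1:n})\,d\btheta\le e^{\ell(\btheta_*)}e^{-c_1M_n^2d}$. The denominator is the delicate part: I would lower bound it by a Laplace--type argument localized at the MAP $\tilde\btheta$. Restrict the integral to the ball $B_{\rho_n}(\tilde\btheta)$ with $\rho_n=\sqrt{d/n}$, which on $E_n$ sits inside $\Theta$ and inside $B_\delta(\tilde\btheta)$ once $n$ is large; then Taylor--expand $f(\btheta):=\ell(\btheta)+\log\pi(\btheta)$ about $\tilde\btheta$ to first order with Lagrange remainder, using $\nabla f(\tilde\btheta)=0$ and bounding the quadratic remainder through the spectral bounds of Assumption~\ref{cond:m3} (the relevant quantity $\frac{1}{2}\rho_n^2(n\bar\eta_2+2L+nL\rho_n)$ is $O(d)$ when $d=O(n)$), to obtain $f(\btheta)\ge f(\tilde\btheta)-Cd$ on $B_{\rho_n}(\tilde\btheta)$. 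Since $\tilde\btheta$ maximizes $f$ we have $e^{f(\tilde\btheta)}\ge\pi(\btheta_*)e^{\ell(\btheta_*)}$, and $\mathrm{vol}(B_{\rho_n}(\tilde\btheta))=\pi^{d/2}(d/n)^{d/2}/\Gamma(d/2+1)$, so $\int_{\Theta}\pi(\btheta)L(\btheta;\by_{1:n})\,d\btheta\ge\pi(\btheta_*)e^{\ell(\btheta_*)}e^{-Cd}\pi^{d/2}(d/n)^{d/2}/\Gamma(d/2+1)$.

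Taking the ratio, the factor $e^{\ell(\btheta_*)}$ cancels, and with $M_n^2=c_0\log n$ one obtains $\Pi_n(K_n^c)\le C_d\,n^{-c_1c_0d+d/2}$ on $E_n$, where $C_d$ collects the remaining constants $\pi(\btheta_*)^{-1}$, $e^{Cd}$ and $\Gamma(d/2+1)/(\pi d)^{d/2}$ and is of order $e^{O(d)}$. Choosing $c_0>1/c_1$ makes $-c_1c_0d+d/2$ strictly below $-c_0c_1d/2$, so $C_d\,n^{-c_1c_0d+d/2}<2\,n^{-dc_0c_1/2}$ for all $n$ large enough; hence $E_n\subseteq\{\Pi_n(K_n^c)<2n^{-dc_0c_1/2}\}$ for $n$ large and $P_0^n(E_n)\to1$ closes the argument. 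I expect the denominator bound to be the main obstacle: one must localize at the \emph{random} MAP rather than at $\btheta_*$ so that the Hessian and derivative control of Assumption~\ref{cond:m3} is available, calibrate $\rho_n$ so that the quadratic term, the cubic remainder and the logarithm of the ball volume are all of comparable order, and verify that the accumulated constants $C_d$ do not swamp the polynomial decay $n^{-c_1c_0d}$ once $c_0$ is taken large.
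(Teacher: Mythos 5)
Your proposal is correct in substance, but it takes a genuinely different route from the paper: the paper does not reproduce an argument at all, and instead observes that the statement is a direct consequence of Lemma C.5 in \citet{durante2023skewed}, after checking that Assumptions \ref{cond:m1}--\ref{cond:m3} imply the assumption required there. What you have written is essentially the self--contained version of that omitted argument, via the standard numerator/denominator decomposition of $\Pi_n(K_n^c)$: the numerator is controlled directly by Assumption \ref{cond:4}, and the denominator by a small--ball (Laplace--type) lower bound on the evidence, localized at the random \textsc{map} so that the curvature and third--derivative controls of Assumption \ref{cond:m3} are available. Your accounting is sound: the quadratic remainder $\tfrac12\rho_n^2(n\bar\eta_2+2L+nL\rho_n)$ with $\rho_n=\sqrt{d/n}$ is $O(d)$, the inequality $e^{f(\tilde\btheta)}\ge \pi(\btheta_*)e^{\ell(\btheta_*)}$ lets the intractable factor $e^{\ell(\btheta_*)}$ cancel in the ratio, and the volume term contributes only $n^{d/2}$ times an $e^{O(d)}$ constant, so that $c_0>1/c_1$ (a threshold independent of $n$ and $d$, as the lemma requires) yields $\Pi_n(K_n^c)<2n^{-dc_0c_1/2}$ on an event of probability $1-o(1)$. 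Two minor points worth making explicit if this were written out in full: (i) you implicitly use that $\btheta_*$ and $\tilde\btheta$ are interior points so that $B_{\rho_n}(\tilde\btheta)\subseteq\Theta$ and $\nabla f(\tilde\btheta)=\mathbf{0}$, which the paper's assumptions only guarantee implicitly; and (ii) the final absorption of the $e^{O(d)}$ constant into the factor $2$ requires $n$ to exceed a threshold depending only on the fixed constants, which you correctly arrange. The trade--off between the two approaches is clear: the paper's citation is shorter but opaque, while your argument makes the role of each assumption (likelihood decay outside $K_n$, \textsc{map} consistency, and local curvature bounds) transparent.
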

\begin{proof}.
Lemma~\ref{lemma:post:contr} is a direct consequence of Lemma D.5 in \citet{durante2023skewed} when the focus is on asymptotic regimes. As such, its proof follows from related reasoning and derivations~as those~considered to prove Lemma D.5, after noticing that   Assumptions \ref{cond:m1}--\ref{cond:m3} imply Assumption 5 in \citet{durante2023skewed}.
\end{proof}

\vspace{5pt}

\begin{lemma} \label{lemma:symmetrized:modal}
	Let $\hat \bh = \sqrt{n}(\btheta -\hat  \btheta) $, where $\hat \btheta$ denotes the \textsc{map}, and $\tilde K_n = \{ \hat  \bh \, :\, \|\hat  \bh\| < 2 M_n\sqrt{d} \}$ with $M_n = \sqrt{c_0 \log n }$ for some $c_0 >0$. Moreover, let $\tilde{\ell}_n(\btheta) = \log \pi(\btheta) L(\btheta; \by_{1:n})$ be the un--normalized log-posterior~and define
\begin{eqnarray}
&& \tilde r_{n,1}( \hat  \bh):=\label{taylor:sym:1} \\
		&&\ \ \log ( \big [ \exp \lbrace\tilde{\ell}_n( \hat {\btheta} + \hat \bh /\sqrt{n} ) -\tilde{\ell}_n( \hat{\btheta}) \rbrace + \exp \lbrace \tilde{\ell}_n( \hat{\btheta} - \hat \bh /\sqrt{n} ) -\tilde{\ell}_n(\hat{\btheta})  \rbrace \big]/2 )+ \hat \bh^\top \hat\bOmega^{-1}\hat \bh/2, \qquad  \qquad \nonumber 
\end{eqnarray}
\vspace{-20pt}
\begin{eqnarray}
&& \tilde r_{n,2}( \hat \bh):= \label{taylor:sym:2}	 \\
		&& \log ( \big [ \exp \lbrace\tilde{\ell}_n(\hat{\btheta} + \hat \bh /\sqrt{n} )- \tilde{\ell}_n( \hat{\btheta}) \rbrace + \exp \lbrace \tilde{\ell}_n( \hat{\btheta} - \hat \bh /\sqrt{n} )-\tilde{\ell}_n( \hat{\btheta})  \rbrace \big]/2) +\hat \bh^\top \hat\bOmega^{-1} \hat\bh/2 - \log f(\hat \bh),  \nonumber 
\end{eqnarray}
	with $\hat \bOmega^{-1} = \bJ_{\hat \btheta}/n$, and $ f(\hat \bh)$ as  in \eqref{def:Ptilde}.
Then, under Assumption \ref{cond:m3}, there are constants  $\tilde{c}_1,\tilde{c}_2>0$ such~that
	\begin{equation}\label{order:taylor:sym:1}
		\tilde r_{n,1} := \sup\nolimits_{\hat \bh \in \tilde  K_n} |\tilde r_{n,1}(\hat \bh)| = O_{P_0^n}( M_n^{\tilde{c}_1}d^3/n),
	\end{equation}  
	and 
	\begin{equation} \label{order:taylor:sym:2}
		\tilde r_{n,2} := \sup\nolimits_{\hat \bh \in \tilde  K_n} |\tilde r_{n,2}(\hat \bh) |= O_{P_0^n}( M_n^{\tilde{c}_2 } d^6/n^2).
	\end{equation}
\end{lemma}

\begin{proof}.
Let us focus on proving \eqref{order:taylor:sym:2}. The proof of \eqref{order:taylor:sym:1}  follows in a similar manner, leveraging simpler derivations. {Consistent with this goal, recall that under Assumption \ref{cond:m1} the event $\lbrace  \| \smash{\hat{\btheta}} - \btheta_0 \| \leq \delta \rbrace$ has, for any $\delta >0$, a probability tending to 1}. As a consequence, from Assumption~\ref{cond:m3}~the log--likelihood~ratio~around $\smash{\hat \btheta}$ can be expanded as
	\begin{equation*}
		\log \frac{p_{\hat{\btheta} + \hat \bh/\sqrt{n} }}{p_{\hat{\btheta}}}(\bY_{1:n}) \, = \, \sum\nolimits_{k = 1}^{5} \frac{ \langle \ell^{(k)}_{ n,\hat \btheta},\hat \bh^{\otimes k} \rangle }{ k! n^{k/2} } + O_{P_0^n}\Big(\frac{\{ \|\hat \bh\|^6 \vee 1\}}{{n^2}}\Big).
	\end{equation*}
	Similarly, the Taylor expansion of the log--prior takes the form 
	\begin{equation*}
		\log \frac{\pi(\hat{\btheta} +\hat  \bh/\sqrt{n} )}{ \pi(\hat{\btheta}) } \, = \, \sum\nolimits_{k = 1}^{3} \frac{ \langle \log \pi_{\hat \btheta}^{(k)}, \hat \bh^{\otimes k} \rangle }{ k! n^{k/2} } + O_{P_0^n}\Big(\frac{\{ \|\hat \bh\|^4 \vee 1\}}{n^2}\Big).
	\end{equation*}
	Combining the above displays, the log--posterior ratio can expressed as 
	\begin{equation} \label{log:post:ratio}
		\tilde{\ell}_n(\hat \btheta+ \hat \bh/\sqrt{n}) -\tilde{\ell}_n( \hat \btheta)  \, = \, \sum\nolimits_{k = 1}^{5} \frac{ \langle a^{(k)}_{ \hat \btheta}, \hat\bh^{\otimes k} \rangle }{ k! n^{(k-2)/2} } + O_{P_0^n}\Big(\frac{\{ \|\hat \bh\|^6 \vee 1\}}{n^2}\Big),
	\end{equation}
	with \smash{$a_{\hat{\btheta}}^{(k)} = (\ell_{n,\hat{\btheta}}^{(k)} + \log \pi_{\hat \btheta}^{(k)} )/n, $} if $k \leq 3,$ and \smash{$a_{\hat{\btheta}}^{(k)} = \ell_{n,\hat{\btheta}}^{(k)}/n$} otherwise. Moreover,~since~$\hat \btheta$~is~the~\textsc{map}, we have that $a_{\hat{\btheta}}^{(1)} = 0.$\\
Therefore, in view of \eqref{log:post:ratio} and $e^x = 1 + O(x)$, for $x \to 0$, observe that
		\begin{equation} \label{sym:kernel2}
			\begin{aligned}
				&\exp \lbrace\tilde{\ell}_n( \hat{\btheta} + \hat \bh /\sqrt{n} )- \tilde{\ell}_n( \hat{\btheta}) \rbrace + \exp \lbrace \tilde{\ell}_n( \hat{\btheta} - \hat \bh /\sqrt{n} )-\tilde{\ell}_n( \hat{\btheta}) \rbrace\\
				& =\Big[\exp\Big( \sum_{k = 2}^{5} \frac{ \langle a^{(k)}_{ \hat \btheta}, \hat \bh^{\otimes k} \rangle }{ k! n^{(k-2)/2} } \Big)+ \exp\Big(\sum_{k = 2}^{5} \frac{ \langle a^{(k)}_{\hat \btheta}, - \hat\bh^{\otimes k} \rangle }{ k! n^{(k-2)/2} } \Big)\Big] \times{\Big[1 + O_{P_0^n}\Big(\frac{\{ \|\hat \bh\|^6 \vee 1\}}{n^2}\Big)\Big]}, \\
				& = \exp\Big( \sum_{k = 1}^{2} \frac{ \langle a^{(2k)}_{ \hat\btheta}, \hat\bh^{\otimes 2k} \rangle }{ (2k)! n^{k-1} } \Big) \Big[ \exp\Big( \sum_{k = 1}^{2} \frac{ \langle a^{(2k+1)}_{ \hat \btheta}, \hat \bh^{\otimes (2k+1)} \rangle }{ (2k+1)!n^{(2k-1)/2} } \Big) + \exp\Big( \sum_{k = 1}^{2} \frac{ \langle a^{(2k+1)}_{ \hat \btheta}, -\hat \bh^{\otimes (2k+1)} \rangle }{ (2k+1)!n^{(2k-1)/2} } \Big) \Big] \\
				&\qquad\qquad \qquad\qquad\qquad \qquad\times\Big[1 + O_{P_0^n}\Big(\{ \|\hat \bh\|^6 \vee 1\}/n^2\Big)\Big].
			\end{aligned}
	\end{equation} 
In view of Assumption \ref{cond:m3}, we have  \smash{$a^{(k)}_{ \hat \btheta} = O_{P_0^n}(1)$}, for $k \geq 2$. Therefore, by $e^x = 1 + x+ x^2/2 + x^3/6 + O(x^4)$, the two summands in the right--hand--side of \eqref{sym:kernel2} can be written as	
\begin{equation*}
		\begin{aligned}
			\exp\Big( \sum_{k = 1}^{2} \frac{ \langle a^{(2k+1)}_{ \hat \btheta}, \hat \bh^{\otimes (2k+1)} \rangle }{ (2k+1)!n^{(2k-1)/2} } \Big) &=  1 + \sum_{k = 1}^{2} \frac{ \langle a^{(2k+1)}_{\hat \btheta}, \hat \bh^{\otimes (2k+1)} \rangle }{ (2k+1)!n^{(2k-1)/2} } + \frac{1}{2}\left( \sum_{k = 1}^{2} \frac{ \langle a^{(2k+1)}_{ \hat \btheta}, \hat \bh^{\otimes (2k+1)} \rangle }{ (2k+1)!n^{(2k-1)/2} } \right)^2 \\
			&\qquad + \frac{1}{6}\left( \sum_{k = 1}^{2} \frac{ \langle  a^{(2k+1)}_{ \hat \btheta}, \hat\bh^{\otimes (2k+1)} \rangle }{ (2k+1)!n^{(2k-1)/2} } \right)^3  +  O_{P_0^n}\Big(\frac{\{ \|\hat \bh\|^{12} \vee 1\}}{n^2}\Big) ,
		\end{aligned}
	\end{equation*}
	and 
	\begin{equation*}
		\begin{aligned}
			\exp\Big( \sum_{k = 1}^{2} \frac{ \langle a^{(2k+1)}_{ \hat \btheta}, -\hat \bh^{\otimes (2k+1)} \rangle }{ (2k+1)!n^{(2k-1)/2} } \Big) &=  1 - \sum_{k = 1}^{2} \frac{ \langle a^{(2k+1)}_{ \hat \btheta}, \hat \bh^{\otimes (2k+1)} \rangle }{ (2k+1)!n^{(2k-1)/2} } + \frac{1}{2}\left( \sum_{k = 1}^{2} \frac{ \langle a^{(2k+1)}_{\hat\btheta}, \hat \bh^{\otimes (2k+1)} \rangle }{ (2k+1)!n^{(2k-1)/2} } \right)^2 \\
			&\qquad - \frac{1}{6}\left( \sum_{k = 1}^{2} \frac{ \langle a^{(2k+1)}_{ \hat \btheta}, \hat \bh^{\otimes (2k+1)} \rangle }{ (2k+1)!n^{(2k-1)/2} } \right)^3  +  O_{P_0^n}\Big(\frac{\{ \|\hat \bh\|^{12} \vee 1\}}{n^2}\Big).
		\end{aligned}
	\end{equation*}
	
	\noindent As a consequence, and recalling the asymptotic behaviour of $ a_{\hat \btheta}^{(2)}, \ldots, a_{\hat \btheta}^{(5)}$, for $\hat{\bh}\in\tilde{K}_n$, we have
	\begin{equation}
		\begin{aligned}
			&\frac{1}{2}\Big[ \exp\Big( \sum_{k = 1}^{2} \frac{ \langle a^{(2k+1)}_{ \hat \btheta}, \hat \bh^{\otimes (2k+1)} \rangle }{ (2k+1)!n^{(2k - 1)/2} } \Big) + \exp\Big( \sum_{k = 1}^{2} \frac{ \langle a^{(2k+1)}_{\hat \btheta}, -\hat \bh^{\otimes (2k+1)} \rangle }{ (2k+1)!n^{(2k-1)/2} } \Big) \Big] \\
			&\qquad = 1 + \frac{1}{2} \left( \sum_{k = 1}^{2} \frac{ \langle a^{(2k+1)}_{\hat \btheta}, \hat \bh^{\otimes (2k+1)} \rangle }{ (2k+1)!n^{(2k-1)/2} } \right)^2 +  O_{P_0^n}\Big(\frac{\{ \|\hat \bh\|^{12} \vee 1\}}{n^2}\Big) \\
			&\qquad=1 + \frac{1}{2} {\left( \frac{1}{6 \sqrt{n}}\langle a^{(3)}_{\hat \btheta}, \hat \bh^{\otimes 3} \rangle \right)^2} +  O_{P_0^n}\Big(\frac{\{ \|\hat \bh\|^{12} \vee 1\}}{n^2}\Big)\\
			&\qquad= 1 + \frac{1}{2} {\left( \frac{1}{6 \sqrt{n}}\big \langle \frac{\ell^{(3)}_{ n,\hat \btheta}}{n}, \hat \bh^{\otimes 3} \big \rangle \right)^2} +  O_{P_0^n}\Big(\frac{\{ \|\hat \bh\|^{12} \vee 1\}}{n^2}\Big),
		\end{aligned}
	\end{equation}	
	\noindent with the last line following by the fact that, from the Assumption \ref{cond:m3}, $ a^{(3)}_{ \hat \btheta}/(6\sqrt{n}) = \ell^{(3)}_{n, \hat \btheta}/(6 n^{3/2}) +\smash{O_{P_0^n}(n^{-3/2})} $.
	
	Combining the above displays with $e^x = 1 + x + x^2/2 + O(x^3)$ and $\sum_{i=0}^{2k}x^i/i!>0$ for any $k\in\mathbb{N}$, yields
	\vspace{5pt}
		\begin{equation*}
			\begin{aligned}
				&\frac{1}{2}\exp \lbrace\tilde{\ell}_n( \hat{\btheta} + \hat \bh /\sqrt{n} )- \tilde{\ell}_n( \hat{\btheta}) \rbrace + \frac{1}{2}\exp \lbrace \tilde{\ell}_n( \hat{\btheta} -\hat \bh /\sqrt{n} )-\tilde{\ell}_n( \hat{\btheta}) \rbrace \\
				&\qquad =\exp\Big( \frac{ \langle a^{(2)}_{\hat \btheta}, \hat \bh^{\otimes 2} \rangle }{ 2 } \Big)\Big( 1 + \frac{ \langle a^{(4)}_{\hat \btheta}, \hat \bh^{\otimes 4} \rangle }{ 24 n } + \frac{1}{2}\Big( \frac{ \langle a^{(4)}_{ \hat \btheta}, \hat \bh^{\otimes 4} \rangle }{ 24 n }   \Big)^2 + O_{P_0^n}\Big(\frac{\{ \|\hat \bh\|^{12} \vee 1\}}{n^3}\Big)  \Big)\\
				&\qquad \qquad \qquad\times \Big( 1 + \frac{1}{2}\Big( \frac{1}{6 \sqrt{n}}\big \langle \frac{\ell^{(3)}_{ n,\hat \btheta}}{n}, \hat \bh^{\otimes 3} \big \rangle \Big)^2 +O_{P_0^n}\Big(\frac{\{ \|\hat \bh\|^{12} \vee 1\}}{n^2}\Big) \Big) \Big[1 + O_{P_0^n}\Big(\frac{\{ \|\hat \bh\|^6 \vee 1\}}{n^2}\Big)\Big] \\
				&\qquad = \exp\Big( \frac{ \langle a^{(2)}_{ \hat \btheta}, \hat \bh^{\otimes 2} \rangle }{ 2 } \Big)\Big( 1 + \frac{1}{24 n} \Big \langle  \frac{\ell^{(4)}_{ n,\hat \btheta}}{n}, \hat \bh^{\otimes 4}  \Big \rangle + \frac{1}{2}\Big(  \frac{1}{24 n} \Big \langle  \frac{\ell^{(4)}_{ n,\hat \btheta}}{n}, \hat \bh^{\otimes 4}  \Big \rangle   \Big)^2 + \frac{1}{2}\Big( \frac{1}{6 \sqrt{n}}\Big \langle \frac{\ell^{(3)}_{ n,\hat \btheta}}{n},\hat \bh^{\otimes 3} \Big \rangle \Big)^2\Big) \\
				& \qquad  \qquad  \qquad \times \Big[1 + O_{P_0^n}\Big(\frac{\{ \|\hat \bh\|^{12} \vee 1\}}{n^2}\Big)\Big]\\
				&\qquad = \exp\Big(-\frac{1}{2} \hat \bh^\top\hat \bOmega^{-1} \hat \bh \Big) f(\hat \bh)  \Big[1 + O_{P_0^n}\Big(\frac{\{ \|\hat \bh\|^{12} \vee 1\}}{n^2}\Big)\Big],
			\end{aligned} 
	\end{equation*}
		\vspace{5pt}

\noindent	where the third line follows from \smash{$a^{(4)}_{ \hat \btheta} = \ell^{(4)}_{n,\hat \btheta}/n$} and the last from the fact that  \smash{$ \hat \bOmega^{-1} = - a^{(2)}_{\hat \btheta}$}, and the definition~of~ \smash{$f(\hat \bh)$}. 
	
	Therefore, in view of  $\log(1 + x) =  O(x)$ for $x \to 0 $, $\tilde r_{n,2}( \hat \bh)$ defined in \eqref{taylor:sym:2} satisfies $\tilde r_{n,2}( \hat \bh)  = O_{P_0^n}( \{ \|\hat \bh\|^{12} \vee 1\}/n^2),$
which in turn implies $\tilde r_{n,2} := \sup_{\hat \bh \in \tilde  K_n} |\tilde r_{n,2}(\hat \bh)| = O_{P_0^n}( M_n^{12}d^6/n^2)$. Setting $\tilde{c}_2= 12$, concludes the proof.
\end{proof}



\begin{thebibliography}{12}

\bibitem[{Anceschi et~al.(2023)Anceschi, Fasano, Durante and
  Zanella}]{anceschi2022bayesian}
Anceschi, N., Fasano, A., Durante, D. \& Zanella, G. (2023) Bayesian conjugacy
  in probit, tobit, multinomial probit and extensions: A review and new
  results.
\newblock \textit{Journal of the American Statistical Association},
  \textbf{118}, 1451--1469.

\bibitem[{Arellano-Valle \& Azzalini(2006)}]{arellano2006unification}
Arellano-Valle, R. \& Azzalini, A. (2006) On the unification of families of
  skew-normal distributions.
\newblock \textit{Scandinavian Journal of Statistics}, \textbf{33}, 561--574.

\bibitem[{Azzalini \& Capitanio(1999)}]{azzalini1999statistical}
Azzalini, A. \& Capitanio, A. (1999) Statistical applications of the
  multivariate skew normal distribution.
\newblock \textit{Journal of the Royal Statistical Society: Series B
  (Statistical Methodology)}, \textbf{61}, 579--602.

\bibitem[{Azzalini \& Capitanio(2003)}]{azzalini2003distributions}
Azzalini, A. \& Capitanio, A. (2003) Distributions generated by perturbation of symmetry with emphasis on
  a multivariate skew t-distribution.
\newblock \textit{Journal of the Royal Statistical Society: Series B
  (Statistical Methodology)}, \textbf{65}, 367--389.

\bibitem[{Azzalini \& Capitanio(2013)}]{azzalini2013skew}
Azzalini, A. \& Capitanio, A. (2013) \textit{{T}he {S}kew-{N}ormal and {R}elated {F}amilies}.
\newblock Cambridge University Press.

\bibitem[{Azzalini \& Dalla~Valle(1996)}]{azzalini1996multivariate}
Azzalini, A. \& Dalla~Valle, A. (1996) The multivariate skew-normal
  distribution.
\newblock \textit{Biometrika}, \textbf{83}, 715--726.

\bibitem[{Bishop(2006)}]{bishop2006pattern}
Bishop, C. (2006) \textit{Pattern Recognition and Machine Learning}.
\newblock Springer.

\bibitem[{Blei et~al.(2017)Blei, Kucukelbir and
  McAuliffe}]{blei2017variational}
Blei, D.~M., Kucukelbir, A. \& McAuliffe, J.~D. (2017) Variational inference:
  {A} review for statisticians.
\newblock \textit{{J}ournal of the {A}merican {S}tatistical Association},
  \textbf{112}, 859--877.

\bibitem[{Carbonetto \& Stephens(2012)}]{carbonetto2012scalable}
Carbonetto, P. \& Stephens, M. (2012) Scalable variational inference for
  {B}ayesian variable selection in regression, and its accuracy in genetic
  association studies.
\newblock \textit{Bayesian Analysis}, \textbf{7}, 73--108.

\bibitem[{Challis \& Barber(2012)}]{challis2012affine}
Challis, E. \& Barber, D. (2012) Affine independent variational inference.
\newblock \textit{Advances in Neural Information Processing Systems},
  \textbf{25}, 1--9.

\bibitem[{Challis \& Barber(2013)}]{challis2013gaussian}
Challis, E. \& Barber, D. (2013) Gaussian {K}ullback--{L}eibler approximate inference.
\newblock \textit{Journal of Machine Learning Research}, \textbf{14},
  2239--2286.

\bibitem[{Chopin \& Papaspiliopoulos(2020)}]{chopin2020introduction}
Chopin, N. \& Papaspiliopoulos, O. (2020) \textit{An Introduction to
  Sequential Monte Carlo}.
\newblock Springer.

\bibitem[{Chopin \& Ridgway(2017)}]{Chopin_2017}
Chopin, N. \& Ridgway, J. (2017) Leave {P}ima indians alone: {B}inary regression
  as a benchmark for {B}ayesian computation.
\newblock \textit{Statistical Science}, \textbf{32}, 64–87.

\bibitem[{Cichocki  \& Amari(2010)}]{cichocki2010families}
Cichocki, A.  \&  Amari, S.-I. (2010) Families of alpha-beta-and
  gamma-divergences: Flexible and robust measures of similarities.
\newblock \textit{Entropy}, \textbf{12}, 1532--1568.

\bibitem[{Consonni  \& Marin(2007)}]{consonni2007mean}
Consonni, G.  \&  Marin, J.-M. (2007) Mean-field variational approximate
  {B}ayesian inference for latent variable models.
\newblock \textit{Computational Statistics \& Data Analysis}, \textbf{52},
  790--798.

\bibitem[{Daudel et~al.(2023)Daudel, Douc and Roueff}]{daudel2023monotonic}
Daudel, K., Douc, R.  \&  Roueff, F. (2023) Monotonic alpha-divergence
  minimisation for variational inference.
\newblock \textit{Journal of Machine Learning Research}, \textbf{24}, 1--76.

\bibitem[{Dehaene  \&  Barthelm{\'e}(2018)}]{dehaene2018expectation}
Dehaene, G.  \&  Barthelm{\'e}, S. (2018) Expectation propagation in the large
  data limit.
\newblock \textit{Journal of the Royal Statistical Society: Series B
  (Statistical Methodology)}, \textbf{80}, 199--217.

\bibitem[{Ding et~al.(2011)Ding, Qi and Vishwanathan}]{ding2011t}
Ding, N., Qi, Y.  \&  Vishwanathan, S. (2011) t-divergence based approximate
  inference.
\newblock \textit{Advances in Neural Information Processing Systems},
  \textbf{24}, 1--9.

\bibitem[{Durante et~al.(2024)Durante, Pozza and Szabo}]{durante2023skewed}
Durante, D., Pozza, F.  \&  Szabo, B. (2024) Skewed {B}ernstein-von {M}ises
  theorem and skew-modal approximations.
\newblock \textit{The Annals of Statistics}, \textbf{56}, 2714--2737.

\bibitem[{Durante  \&  Rigon(2019)}]{durante2019conditionally}
Durante, D.  \&  Rigon, T. (2019) Conditionally conjugate mean-field variational
  {B}ayes for logistic models.
\newblock \textit{Statistical Science}, \textbf{34}, 472--485.

\bibitem[{Fasano  \&  Durante(2022)}]{fasano2022class}
Fasano, A.  \&  Durante, D. (2022) A class of conjugate priors for multinomial
  probit models which includes the multivariate normal one.
\newblock \textit{Journal of Machine Learning Research}, \textbf{23}, 1--26.

\bibitem[{Fasano et~al.(2022)Fasano, Durante and Zanella}]{fasanoscalable}
Fasano, A., Durante, D.  \&  Zanella, G. (2022) Scalable and accurate
  variational {B}ayes for high-dimensional binary regression models.
\newblock \textit{{B}iometrika}, \textbf{109}, 901--919.

\bibitem[{Futami et~al.(2017)Futami, Sato and Sugiyama}]{futami2017expectation}
Futami, F., Sato, I.  \&  Sugiyama, M. (2017) Expectation propagation for
  t-exponential family using q-algebra.
\newblock \textit{Advances in Neural Information Processing Systems},
  \textbf{30}, 1–10.

\bibitem[{Gallant  \&  Nychka(1987)}]{gallant1987semi}
Gallant, A.~R.  \&  Nychka, D.~W. (1987) Semi-nonparametric maximum likelihood
  estimation.
\newblock \textit{{E}conometrica}, \textbf{55}, 363--390.

\bibitem[{Gelman et~al.(2013)Gelman, Carlin, Stern and
  Rubin}]{gelman1995bayesian}
Gelman, A., Carlin, J.~B., Stern, H.~S.  Dunson, D.~B.,  Vehtari, A. \&  Rubin, D.~B. (2013)
  \textit{Bayesian Data Analysis (Third Edition)}.
\newblock Chapman and Hall/CRC.

\bibitem[{Genton  \&  Loperfido(2005)}]{genton2005generalized}
Genton, M.~G.  \&  Loperfido, N.~M. (2005) Generalized skew-elliptical
  distributions and their quadratic forms.
\newblock \textit{Annals of the Institute of Statistical Mathematics},
  \textbf{57}, 389--401.

\bibitem[{Giordano et~al.(2018)Giordano, Broderick and
  Jordan}]{giordano2018covariances}
Giordano, R., Broderick, T.  \&  Jordan, M.~I. (2018) Covariances, robustness
  and variational {B}ayes.
\newblock \textit{Journal of Machine Learning Research}, \textbf{19}, 1--49.

\bibitem[{Gonz{\'a}lez-Farias et~al.(2004)Gonz{\'a}lez-Farias, Dominguez-Molina
  and Gupta}]{gonzalez2004additive}
Gonz{\'a}lez-Farias, G., Dominguez-Molina, A. \&  Gupta, A.~K. (2004) Additive
  properties of skew normal random vectors.
\newblock \textit{Journal of Statistical Planning and Inference}, \textbf{126},
  521--534.

\bibitem[{Hernandez-Lobato et~al.(2016)Hernandez-Lobato, Li, Rowland, Bui,
  Hern{\'a}ndez-Lobato and Turner}]{hernandez2016black}
Hernandez-Lobato, J., Li, Y., Rowland, M., Bui, T., Hern{\'a}ndez-Lobato, D.
 \&  Turner, R. (2016) Black-box $\alpha$-divergence minimization.
\newblock In \textit{Proceedings of the 33rd International Conference
  on Machine Learning}, \textbf{48},  1511--1520. 

\bibitem[{Hinkley(1976)}]{hinkley1976estimating}
Hinkley, D. (1976) On estimating a symmetric distribution.
\newblock \textit{Biometrika}, \textbf{63}, 680--681.

\bibitem[{Kasprzak et~al.(2025)Kasprzak, Giordano and
  Broderick}]{kasprzak2022good}
Kasprzak, M.~J., Giordano, R.  \&  Broderick, T. (2025) How good is your
  Gaussian approximation of the posterior? finite-sample computable error
  bounds for a variety of useful divergences.
\newblock \textit{Journal of Machine Learning Research}, \textbf{26}, 1--81.

\bibitem[{Katsevich(2024)}]{katsevich2023tight}
Katsevich, A. (2024) {The Laplace approximation accuracy in high dimensions: a
  refined analysis and new skew adjustment}.
\newblock \textit{arXiv preprint arXiv:2306.07262}.

\bibitem[{Katsevich  \&  Rigollet(2024)}]{katsevich2023approximation}
Katsevich, A.  \&  Rigollet, P. (2024) On the approximation accuracy of
  {G}aussian variational inference.
\newblock \textit{The Annals of Statistics}, \textbf{52}, 1384--1409.

\bibitem[{Kock et~al.(2025)Kock, Tan, Bansal and Nott}]{kock2025variational}
Kock, L., Tan, L.~S., Bansal, P.  \&  Nott, D.~J. (2025) Variational inference
  for hierarchical models with conditional scale and skewness corrections.
\newblock \textit{arXiv preprint arXiv:2503.18075}.

\bibitem[{Kucukelbir et~al.(2017)Kucukelbir, Tran, Ranganath, Gelman and
  Blei}]{kucukelbir2017automatic}
Kucukelbir, A., Tran, D., Ranganath, R., Gelman, A.  \&  Blei, D.~M. (2017)
  Automatic differentiation variational inference.
\newblock \textit{Journal of Machine Learning Research}, \textbf{18}, 1--45.

\bibitem[{Kullback and Leibler(1951)}]{kullback1951information}
Kullback, S.  \&  Leibler, R.~A. (1951) On information and sufficiency.
\newblock \textit{The Annals of Mathematical Statistics}, \textbf{22}, 79--86.

\bibitem[{Lawal(2012)}]{lawal2012zero}
Lawal, B.~H. (2012) Zero-inflated count regression models with applications to
  some examples.
\newblock \textit{Quality \& Quantity}, \textbf{46}, 19--38.

\bibitem[{Liang et~al.(2022)Liang, Mahoney and Hodgkinson}]{liang2022fat}
Liang, F., Mahoney, M.  \&  Hodgkinson, L. (2022) Fat--tailed variational
  inference with anisotropic tail adaptive flows.
\newblock In \textit{Proceedings of the 39th  International Conference
  on Machine Learning}, \textbf{162}, 13257--13270.

\bibitem[{Lo(1985)}]{lo1985estimation}
Lo, S.-H. (1985) Estimation of a symmetric distribution.
\newblock \textit{The Annals of Statistics}, \textbf{13}, 1097--1113.

\bibitem[{Ma  \&  Genton(2004)}]{ma2004flexible}
Ma, Y.  \&  Genton, M.~G. (2004) Flexible class of skew-symmetric distributions.
\newblock \textit{Scandinavian Journal of Statistics}, \textbf{31}, 459--468.

\bibitem[{Margossian et~al.(2025)Margossian, Pillaud-Vivien and
  Saul}]{margossian2024ordering}
Margossian, C.~C., Pillaud-Vivien, L.  \&  Saul, L.~K. (2025) Variational inference for uncertainty quantification: an
analysis of trade-offs.
\newblock \textit{Journal of Machine Learning Research}, \textbf{26}, 1--41.

\bibitem[{Minka(2001)}]{minka2013expectation}
Minka, T.~P. (2001) Expectation propagation for approximate {B}ayesian
  inference.
\newblock In \textit{Proceedings of the 17th Conference on Uncertainty in Artificial Intelligence},
  \textbf{17}, 362–369.


\bibitem[{Opper  \& Archambeau(2009)}]{opper2009variational}
Opper, M.  \& Archambeau, C. (2009) The variational {G}aussian approximation
  revisited.
\newblock \textit{Neural Computation}, \textbf{21}, 786--792.

\bibitem[{Ormerod(2011)}]{ormerod2011skew}
Ormerod, J.~T. (2011) Skew-normal variational approximations for {B}ayesian
  inference.
\newblock \textit{Technical Report, School of Mathematics and Statistics,
  University of Sydney}.

\bibitem[{Panov  \& Spokoiny(2015)}]{panov2015finite}
Panov, M.  \& Spokoiny, V. (2015) {Finite sample Bernstein--von Mises theorem
  for semiparametric problems}.
\newblock \textit{Bayesian Analysis}, \textbf{10}, 665–710.

\bibitem[{Papamakarios et~al.(2021)Papamakarios, Nalisnick, Rezende, Mohamed
  and Lakshminarayanan}]{papamakarios2021normalizing}
Papamakarios, G., Nalisnick, E., Rezende, D.~J., Mohamed, S.  \&
  Lakshminarayanan, B. (2021) Normalizing flows for probabilistic modeling and
  inference.
\newblock \textit{Journal of Machine Learning Research}, \textbf{22}, 1--64.

\bibitem[{P{\'o}czos  \& Schneider(2011)}]{poczos2011estimation}
P{\'o}czos, B.  \& Schneider, J. (2011) On the estimation of
  $\alpha$-divergences.
\newblock In \textit{Proceedings of the 14th International Conference on
  Artificial Intelligence and Statistics}, 609--617.

\bibitem[{Ranganath et~al.(2014)Ranganath, Gerrish and
  Blei}]{ranganath2014black}
Ranganath, R., Gerrish, S.  \& Blei, D. (2014) Black box variational inference.
\newblock In \textit{Proceedings of the 17th International Conference on
  Artificial Intelligence and Statistics}, \textbf{33}, 814--822.

\bibitem[{Ray  \& Szab{\'o}(2022)}]{ray2022variational}
Ray, K.  \& Szab{\'o}, B. (2022) Variational {B}ayes for high-dimensional
  linear regression with sparse priors.
\newblock \textit{Journal of the American Statistical Association},
  \textbf{117}, 1270--1281.

\bibitem[{Rezende  \& Mohamed(2015)}]{rezende2015variational}
Rezende, D.  \& Mohamed, S. (2015) Variational inference with normalizing
  flows.
\newblock In \textit{Proceedings of the 32nd  International Conference
  on Machine Learning}, \textbf{37}, 1530--1538.


\bibitem[{Rigon et~al.(2019)Rigon, Durante and Torelli}]{rigon2019bayesian}
Rigon, T., Durante, D.  \& Torelli, N. (2019) Bayesian semiparametric modelling
  of contraceptive behaviour in india via sequential logistic regressions.
\newblock \textit{Journal of the Royal Statistical Society: Series A (Statistics
  in Society)}, \textbf{182}, 225--247.

\bibitem[{Rossell et~al.(2021)Rossell, Abril and
  Bhattacharya}]{rossell2021approximate}
Rossell, D., Abril, O.  \& Bhattacharya, A. (2021) Approximate {L}aplace
  approximations for scalable model selection.
\newblock \textit{Journal of the Royal Statistical Society: Series B
  (Statistical Methodology)}, \textbf{83}, 853--879.

\bibitem[{Rue et~al.(2009)Rue, Martino and Chopin}]{inla_paper}
Rue, H., Martino, S.  \& Chopin, N. (2009) Approximate {B}ayesian inference for
  latent {G}aussian models by using integrated nested {L}aplace approximations.
\newblock \textit{Journal of the Royal Statistical Society: Series B
  (Statistical Methodology)}, \textbf{71}, 319--392.

\bibitem[{Ruli et~al.(2016)Ruli, Sartori and Ventura}]{ruli}
Ruli, E., Sartori, N.  \& Ventura, L. (2016) Improved {L}aplace approximation
  for marginal likelihoods.
\newblock \textit{Electronic Journal of Statistics}, \textbf{10}, 3986--4009.

\bibitem[{Salomone et~al.(2024)Salomone, Yu, Nott and
  Kohn}]{salomone2023structured}
Salomone, R., Yu, X., Nott, D.~J.  \& Kohn, R. (2024) Structured variational
  approximations with skew normal decomposable graphical models.
\newblock \textit{Journal of Computational and Graphical Statistics},
  \textbf{33}, 1329--1338.

\bibitem[{Schuster(1975)}]{schuster1975estimating}
Schuster, E.~F. (1975) Estimating the distribution function of a symmetric
  distribution.
\newblock \textit{Biometrika}, \textbf{62}, 631--635.

\bibitem[{Schuster(1987)}]{schuster1987identifying}
Schuster, E.~F. (1987) Identifying the closest symmetric distribution or density function.
\newblock \textit{The Annals of Statistics}, \textbf{15}, 865--874.

\bibitem[{Smith et~al.(2020)Smith, Loaiza-Maya and Nott}]{smith2020high}
Smith, M.~S., Loaiza-Maya, R.  \& Nott, D.~J. (2020) High-dimensional copula
  variational approximation through transformation.
\newblock \textit{Journal of Computational and Graphical Statistics},
  \textbf{29}, 729--743.

\bibitem[{Spokoiny(2025)}]{spokoiny2023inexact}
Spokoiny, V. (2025) {Inexact Laplace approximation and the use of posterior
  mean in Bayesian inference}.
\newblock \textit{Bayesian Analysis}, \textbf{20}, 1303--1330.

\bibitem[{Spokoiny  \& Panov(2025)}]{spokoiny2025accuracy}
Spokoiny, V.  \& Panov, M. (2025) {Accuracy of Gaussian approximation for
  high-dimensional posterior distributions}.
\newblock \textit{Bernoulli}, \textbf{31}, 843--867.

\bibitem[{Tan  \& Chen(2024)}]{tan2023variational}
Tan, L.~S.  \& Chen, A. (2024) Variational inference based on a subclass of
  closed skew normals.
\newblock \textit{Journal of Computational and Graphical Statistics},
  \textbf{34}, 422--436.

\bibitem[{Tan  \& Nott(2018)}]{tan2018gaussian}
Tan, L.~S.  \& Nott, D.~J. (2018) Gaussian variational approximation with
  sparse precision matrices.
\newblock \textit{Statistics and Computing}, \textbf{28}, 259--275.

\bibitem[{Tierney  \& Kadane(1986)}]{tierney1986accurate}
Tierney, L.  \& Kadane, J.~B. (1986) Accurate approximations for posterior
  moments and marginal densities.
\newblock \textit{{J}ournal of the {A}merican {S}tatistical {A}ssociation},
  \textbf{81}, 82--86.

\bibitem[{Van~der Vaart(2000)}]{van2000asymptotic}
Van~der Vaart, A.~W. (2000) \textit{{A}symptotic {S}tatistics}.
\newblock Cambridge University Press.

\bibitem[{Vehtari et~al.(2020)Vehtari, Gelman, Sivula, Jyl{\"a}nki, Tran,
  Sahai, Blomstedt, Cunningham, Schiminovich and
  Robert}]{vehtari2020expectation}
Vehtari, A., Gelman, A., Sivula, T., Jyl{\"a}nki, P., Tran, D., Sahai, S.,
  Blomstedt, P., Cunningham, J.~P., Schiminovich, D.  \& Robert, C.~P. (2020)
  Expectation propagation as a way of life: {A} framework for {B}ayesian
  inference on partitioned data.
\newblock \textit{Journal of Machine Learning Research}, \textbf{21}, 1--53.

\bibitem[{Walker(1969)}]{walker1969asymptotic}
Walker, A.~M. (1969) On the asymptotic behaviour of posterior distributions.
\newblock \textit{Journal of the Royal Statistical Society: Series B
  (Statistical Methodology)}, \textbf{31}, 80--88.

\bibitem[{Wang  \& Blei(2013)}]{wang2013variational}
Wang, C.  \& Blei, D.~M. (2013) Variational inference in nonconjugate models.
\newblock \textit{Journal of Machine Learning Research}, \textbf{14},
  1005--1031.

\bibitem[{Wang et~al.(2004)Wang, Boyer and Genton}]{wang2004skew}
Wang, J., Boyer, J.  \& Genton, M.~G. (2004) A skew-symmetric representation of
  multivariate distributions.
\newblock \textit{Statistica Sinica}, \textbf{14}, 1259--1270.

\bibitem[{Wang \& Blei(2019)}]{wang2019frequentist}
Wang, Y.  \& Blei, D.~M. (2019) Frequentist consistency of variational {B}ayes.
\newblock \textit{Journal of the American Statistical Association},
  \textbf{114}, 1147--1161.

\bibitem[{Yang et~al.(2020)Yang, Pati and Bhattacharya}]{yang2020alpha}
Yang, Y., Pati, D.  \& Bhattacharya, A. (2020) $\alpha$-variational inference
  with statistical guarantees.
\newblock \textit{The Annals of Statistics}, \textbf{48}, 886--905.

\bibitem[{Zhou et~al.(2024)Zhou, Grazian and Ormerod}]{zhou2023skew}
Zhou, J., Grazian, C.  \& Ormerod, J.~T. (2024) Tractable skew-normal
  approximations via matching.
\newblock \textit{Journal of Statistical Computation and Simulation},
  \textbf{94}, 1016--1034.

\end{thebibliography}
\end{document}